\documentclass{article}
\usepackage[utf8]{inputenc}
\usepackage[T1]{fontenc}
\usepackage[english]{babel}
\usepackage[ruled,lined,noend]{algorithm2e}
\usepackage[style=alphabetic]{biblatex}
\usepackage{
    lmodern,
    amsmath,
    amsfonts,
    amssymb,
    amsthm,
    hyperref,
    cleveref,
    thmtools,
    xcolor,
    csquotes,
    mathtools,
    xspace,
    enumitem,
    booktabs,
    array,
    longtable,
    geometry,
    dsfont,
}

\declaretheorem[name=Theorem]{theorem}
\declaretheorem[name=Lemma, sibling=theorem]{lemma}
\declaretheorem[name=Proposition, sibling=theorem]{proposition}
\declaretheorem[name=Corollary, sibling=theorem]{corollary}
\declaretheorem[name=Definition, sibling=theorem]{definition}
\declaretheorem[name=Observation, sibling=theorem]{observation}
\declaretheorem[name=Example, sibling=theorem, style=remark]{example}
\declaretheorem[name=Remark, sibling=theorem, style=remark]{remark}

\crefname{appendix}{appendix}{appendices}
\Crefname{appendix}{Appendix}{Appendices}

\SetArgSty{textup}

\SetCommentSty{algcommentfont}
\SetKwComment{Comment}{\textcolor{black!50}{//}\ }{}
\SetKwProg{Upon}{upon}{:}{}
\SetKw{KwAnd}{and}
\SetKwProg{Fn}{procedure}{:}{}
\SetKwFunction{FCrit}{CriticalService}
\SetKwFunction{FPre}{PreemptiveService}
\SetFuncSty{textsc}
\SetInd{0.3em}{0.7em}
\SetKwIF{If}{ElseIf}{Else}{if}{}{else if}{else}{end if}
\SetKwIF{IfThen}{ElseIfThen}{ElseThen}{if}{then}{else if}{else}{end if}
\SetKwFor{For}{for}{:}{end for}
\SetKwFor{While}{while}{:}{end while}

\DontPrintSemicolon

\newcommand{\OPT}{\textnormal{\textsc{Opt}}\xspace}
\newcommand{\PMD}{\textnormal{\textsc{PMD}}\xspace}
\newcommand{\ALG}{\textnormal{\textsc{Alg}}\xspace}
\newcommand{\EV}{\mathrm{EV}}
\newcommand{\Exc}{\mathrm{Exc}}
\newcommand{\fetch}{\mathrm{fetch}}
\newcommand{\dd}{\mathrm d}

\newcommand{\dur}[1]{\lvert #1\rvert}
\newcommand{\durW}[2]{\lvert #2\rvert_{#1}}
\newcommand{\send}{\dur{\lambda}}
\newcommand{\lpath}[2]{\lambda_{#1\to #2}}
\newcommand{\lightstep}[1]{\nu_{j,W,#1}}
\newcommand{\Crit}{S_{\mathrm{crit}}}
\newcommand{\CritAt}[1]{S_{\mathrm{crit}}(#1)}
\newcommand{\Hit}{S_{\mathrm{hit}}}
\newcommand{\Miss}{S_{\mathrm{miss}}}
\newcommand{\Over}{S_{\mathrm{over}}}
\newcommand{\Timely}{S_{\mathrm{timely}}}
\newcommand{\Visit}{S_{\mathrm{visit}}}
\newcommand{\crit}{\mathrm{crit}}
\newcommand{\hit}{\mathrm{hit}}
\newcommand{\miss}{\mathrm{miss}}
\newcommand{\deadline}{\mathrm{deadline}}
\newcommand{\delayof}{\mathrm{delay}}

\renewcommand{\O}{\mathcal O}

\usepackage{tikz}
\usetikzlibrary{arrows.meta, decorations.pathreplacing}
\definecolor{figpink}{RGB}{251,64,133}
\definecolor{figgreen}{RGB}{0,153,136}
\definecolor{figblue}{RGB}{0,119,187}
\definecolor{figorange}{RGB}{238,119,51}

\title{Tight weighted paging with delay via mirror descent}
\author{Julian Born\thanks{Technical University Darmstadt. Email: born@mathematik.tu-darmstadt.de. Supported by DFG grant DI~2041/2-2.} \and Yann Disser\thanks{Technical University Darmstadt. Email: disser@mathematik.tu-darmstadt.de} \and Linda Thelen\thanks{Technical University Darmstadt. Email: thelen@mathematik.tu-darmstadt.de. Supported by a doctoral scholarship from the Studienstiftung des deutschen Volkes.}}

\date{}

\begin{document}
\maketitle

\begin{abstract}
    We give an~$\O(\log k)$-competitive randomized algorithm for weighted paging with delay.
    This improves upon the~$\O(\log k\log n)$-competitive algorithm of Gupta, Kumar, and Panigrahi (STOC 2020, SICOMP 2022) by removing the dependence on the number~$n$ of pages.
    Without delay, the optimal competitive ratio of randomized algorithms is~$\Theta(\log k)$, and since the lower bound carries over, our bound is tight.
    Our algorithm is fractional and combines mirror descent, used by Bubeck et al.\ (STOC 2018) for weighted paging without delay, with preemptive service, introduced by Azar et al.\ (STOC 2017) for online service with delay:
    when a request is served, an increase in the mirror descent potential is offset by preemptive service to other pending requests, with a budget derived from the mirror descent path.
    To obtain a randomized algorithm, we use the online rounding of Bansal, Buchbinder, and Naor (FOCS 2007) from classical paging.
\end{abstract}

\section{Introduction}

Paging is one of the most fundamental problems in online computation.
In weighted paging, a cache holds up to~$k$ of~$n$ pages, and a requested page that is not in the cache must be fetched, at a cost equal to its weight, while another page is evicted.
Classical paging requires every request to be served immediately.
In many applications of caching, however, requests can wait for some time, for example in prefetching, cloud computing, or video-on-demand streaming.
An algorithm can then save cost by batching requests to the same page instead of fetching it repeatedly.
In \emph{paging with deadlines}, waiting is free, but every request must be served no later than its deadline.
More generally, in \emph{paging with delay}, every request accrues delay cost until it is served, and the goal is to minimize the total fetch cost plus delay cost.
Paging with delay belongs to an active line of work on online problems with delay, including TCP acknowledgment \cite{Dooly01TCPAck}, matching \cite{HasteMakesWaste}, online service with delay \cite{Azar21OSD}, multi-level aggregation \cite{Bienkowski16OnlineMLA,Buchbinder17OdMLA}, and network design \cite{Azar19MetricOpt,Azar20NDD}.

When all pages have the same weight, delay does not make paging harder, since unweighted paging with delay reduces to classical paging \cite{Azar21OSD}.
For general weights, however, the best known randomized algorithm, due to \citeauthor{Gupta22CachingWithTimeWindows}, is~$\O(\log k\log n)$-competitive \cite{Gupta22CachingWithTimeWindows}, where the number~$n$ of pages can be much larger than~$k$.
Without delay, the optimal competitive ratio of randomized algorithms is~$\Theta(\log k)$ \cite{Fiat91CompetitivePaging,Bansal07PrimalDual}, and the lower bound carries over to paging with delay.
The factor~$\log n$ comes from one step of the LP-based approach of \citeauthor{Gupta22CachingWithTimeWindows}.
The natural relaxation over fractional cache states has integrality gap~$\Omega(k)$, and \citeauthor{Gupta22CachingWithTimeWindows} circumvent this with a hitting-set relaxation that records when pages are fetched and evicted, but not the cache contents.
This requires a separate step that decides online which pending requests to serve and when, which costs only a constant factor offline, but a factor~$\log n$ online.
They leave closing the gap to the lower bound as an open problem \cite{Gupta22CachingWithTimeWindows}.

We give an~$\O(\log k)$-competitive randomized algorithm for weighted paging with delay, in the same clairvoyant model as \cite{Gupta22CachingWithTimeWindows}.
This matches the lower bound, does not depend on~$n$, and also applies to deadlines.
Hence, weighted paging with delay is as hard as without delay, up to constant factors.
For deterministic algorithms, this was already known, with optimal competitive ratio~$\Theta(k)$ \cite{Sleator85AmortizedEfficiency,Chrobak91kServerTrees,Azar21OSD}.
We establish the same equivalence for randomized algorithms, where the optimal competitive ratio is~$\Theta(\log k)$.

Our algorithm combines \emph{mirror descent}, used by \citeauthor{Bubeck18kServerMultiscale} for classical weighted paging \cite{Bubeck18kServerMultiscale}, with \emph{preemptive service}, a standard tool for online service with delay \cite{Azar21OSD}.
With delay, the potential function in the analysis of mirror descent can increase when we serve a request, because \OPT may serve the same request at another time than we do.
We compensate for this increase by preemptive service: when serving a request, we also serve other pending requests, with a budget derived from the mirror descent path.
To our knowledge, this is the first use of mirror descent for a problem with delay.
Since the fractional cache states of our algorithm also determine when requests are served, no separate step has to decide this, and the classical online rounding for weighted paging \cite{Bansal07PrimalDual} applies with only a constant loss.

We state our result formally below, and outline our technique in \Cref{sec:technique}.

\subsection{Our results}

We give a randomized algorithm for weighted paging with delay with tight competitive ratio that, in particular, does not depend on the number~$n$ of pages.

\begin{theorem}\label{thm:main}
    There exists a clairvoyant randomized online algorithm for weighted paging with delay with competitive ratio~$\O(\log k)$.
\end{theorem}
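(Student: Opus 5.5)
We prove \Cref{thm:main} in two stages: an $\O(\log k)$-competitive \emph{fractional} algorithm, followed by online rounding. Fractionally, the state is a vector $x\in[0,1]^n$ with $x_p$ the missing fraction of page $p$ and $\sum_p (1-x_p)\le k$. A pending request to $p$ counts as served once $x_p=0$. The fetch cost is $\sum_p w_p$ times the total decrease of $x_p$, and delay accrues on pending requests as usual. For rounding, we use the online rounding of Bansal, Buchbinder, and Naor for weighted paging. It maintains a distribution over integral caches whose movement cost is $\O(1)$ times the fractional cost. Moreover, whenever $x_p=0$ the page $p$ is in the cache with probability~$1$. Hence every fractional service time is also an integral service time, so the delay costs coincide and the loss is a constant factor. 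The remaining work is the fractional algorithm.

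\textbf{Fractional algorithm.} We follow the Bubeck et al.\ mirror descent for weighted paging, with regularizer $\Phi(x)=\sum_p w_p (x_p+\delta)\log(x_p+\delta)$ and $\delta=1/k$. When a request to $p$ must be served, the algorithm follows the continuous mirror descent path that drives $x_p$ to $0$. Along this path, other pages $q$ are evicted at rates governed by $\dot x_q \propto (x_q+\delta)/w_q$ times a Lagrange multiplier. When to serve is decided by a ``critical'' rule in the spirit of online service with delay. A pending set of requests at $p$ becomes critical once its accumulated delay equals $w_p$, the cost of fetching $p$ from scratch. For deadlines, the rule is simply to serve at the deadline. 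This balances delay against fetch cost, so the delay cost is at most the fetch cost. In addition, when serving $p$ we run \emph{preemptive service}: we spend a budget proportional to the fetch cost of the mirror descent path on serving other pending requests. These are chosen greedily by earliest deadline, or by the largest future delay per unit fetch cost, and for each one we run its own mirror descent sub-path to $x_q=0$.

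\textbf{Analysis.} We use the potential $P=\O(\log k)\cdot D_\Phi(y\,\|\,x)$, where $y$ is \OPT's integral state and $D_\Phi$ is the Bregman divergence. The standard mirror descent calculation gives two bounds. First, when \OPT moves, $P$ increases by at most $\O(\log k)$ times \OPT's fetch cost. Second, when the algorithm moves toward serving a request that \OPT currently has served ($y_p=0$), the algorithm's movement cost is paid by a decrease of $P$. The new difficulty is the remaining case: the algorithm serves a request that \OPT serves at a different time, so $y_p=1$ now, and $P$ may grow. We charge this growth as follows. If \OPT served the request earlier, its delay up to criticality is charged to \OPT's delay or fetch cost, at a constant factor. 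If \OPT serves it later, \OPT pays the delay accrued until then. We then use preemptive service to argue that the algorithm's own future critical requests are charged to this \OPT service. Concretely, between two consecutive critical services at $p$, either \OPT fetches $p$ or it pays delay at least $w_p$.

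\textbf{Main obstacle.} The hardest step is bounding the potential increase during a service of $p$ while $y_p=1$. The budget for preemptive service must be tied to the mirror descent path length, namely the weighted eviction mass $\sum_q w_q\,\Delta x_q$. The key inequality we aim for is that this potential increase is at most $\O(\log k)$ times the path cost. We would then show that the preemptively served requests amortize it, in the style of the Azar et al.\ ``ball-counting'' argument. Each preemptive service either prevents a future critical event that \OPT must also pay for, or it lands on a request already paid by \OPT's delay. The first attempt is to set the budget exactly equal to the path cost, and to prove by an exchange argument, ordering requests by deadline, that the total cost of critical events not covered this way is $\O(\OPT)$.
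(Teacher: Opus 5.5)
Your outline has the paper's architecture (mirror descent, Bregman potential, charging miss services, preemptive service, online rounding), but the rule for when to serve is wrong, and that rule is where the hard part of the proof lives. Turning $p$ critical once its pending delay reaches $w_p$ provably fails; this is \Cref{ex:room}. Take $k=1$, a page $h$ of weight $W$ requested at even times with very steep delay, and a page $\ell$ of weight $1$ requested at odd times whose delay rises to $2$ within one time unit and then stays flat until a late time $T$. \OPT keeps $h$ and pays about $2$ per round. Your rule serves $\ell$ in every round, which evicts $h$ (also fractionally, since $x_h+x_\ell=1$) and forces a refetch of cost $W$ or delay above $W$. Nothing else is pending at either critical time, so preemptive service cannot help, and the ratio is $\Omega(W)$.

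In potential terms, a miss service to $p$ raises the Bregman potential by $O(Lw_p)$ plus the \emph{excess}. The excess is the length of the path inside $[Lw_p,L\,w_{\min}(\OPT))$, and it can be arbitrarily larger than $w_p$. Your charging (\OPT either evicted $p$ after serving it or pays delay $w_p$) covers only the first term. Bounding the increase by $O(\log k)$ times the path cost merely restates the problem, since the path cost is the algorithm's own cost. The missing idea is that the delay must also pay for the room made for $p$. The paper turns $p$ critical only when the unclaimed overshoot $\sum_{w_q\le w_p}(\delayof_q-c_q-w_q)_+$ of all lighter pages reaches $\dur{\lpath{x}{p}}/L$. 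Claims $c_q$ prevent double counting, and aggregating over lighter pages is also needed (\Cref{ex:shared}). This claimed overshoot is at least $\Exc(\nu)/L$, and it is what \OPT pays in delay for \emph{timely} miss services (\Cref{lem:timely}).

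The preemptive service also needs a structure your sketch lacks. The budget must scale with the path length: a budget proportional to the fetch cost $w_px_p\le w_p$ cannot cover an excess of order $L\,w_{\min}(\OPT)$. It must be deposited level by level from the parts of the path in $I_W=[WL,2WL)$ and tracked over the whole execution. A level-$W$ budget may be spent only on pending pages of weight at most $W$, in order of effective deadline. Moreover, preemptive service must consist of round trips through a hole opened among light pages, returning to the previous state, so that it leaves the Bregman potential unchanged. Your separate mirror descent sub-paths would raise the potential just like miss services and create their own excess. With this structure, the paper shows, for each \OPT period $j$ and level $W<w_{\min}(\OPT_j)$, that the level-$W$ budget is drained on chargeable round trips until the last overdue light service. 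The remaining $O(LW)$ is charged to \OPT's move at the start of period $j$ via $w_{\min}(\OPT_j)$.

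Finally, your delay accounting is off. Charging the full delay $d_r(\tau_r-a_r)$ up to the time $x_p$ reaches $0$ is not bounded by the fetch cost: if $x_p$ is tiny, a request waits for delay $w_p$ while the fetch costs only $w_px_p$, and \OPT may pay nothing. This full delay also does not coincide with the integral delay. You need to weight each request's delay by the missing mass of its page at arrival, which bounds the probability that the rounded cache misses the page.
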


Together with the lower bound of~$\Omega(\log k)$ from classical paging \cite{Fiat91CompetitivePaging}, this gives a tight bound on the competitive ratio of randomized weighted paging with delay and deadlines.
We obtain this result by combining the \emph{mirror descent} technique from \cite{Bubeck18kServerMultiscale} for classical paging with a \emph{preemptive service} approach typical for online service with delay \cite{Azar21OSD}.
The resulting algorithm \PMD is explicit (Algorithm~\ref{alg:PMD}) and handles delay directly, with three rules: when to serve (once the delay of a page is large compared to the cost of serving it), how to serve (along a mirror descent path in closed form), and what else to serve (pending requests in order of urgency, paid from budgets).
\Cref{thm:main} is verified in Lean 4, using only the online rounding procedure of \citeauthor{Bansal07PrimalDual} as an assumption.\footnote{%
    The Lean formalization is available at \url{https://github.com/licsth/paging-with-delay-pmd-formalization}.
    It follows our proof tree exactly, with three minor exceptions.
    First, the Lean formalization resolves ties in request timing according to a fixed order on the requests to give a concrete algorithm definition.
    For our analysis, any tiebreaking rule suffices, and we do not specify one.
    Second, for presentation, some loops in our algorithm iterate over all possible weights, even though only a finite number of iterations trigger some action (see the proof of \Cref{lem:inv}).
    The formalization gives the finite ground set directly.
    Third, the Lean formalization uses the exact constants instead of \(\O\)-notation.
}

\paragraph{Significance}
To our knowledge, our algorithm and its analysis is the first use of mirror descent for any online problem with delay.
Our analysis suggests an approach for applying mirror descent to other problems with delay, such as online service with delay on HSTs, where mirror descent gives the best known bounds in the setting without delay \cite{Bubeck18kServerMultiscale}.

\paragraph{Concurrent work}
In independent work uploaded to arXiv less than a week after ours, \citeauthor{Naor26TimeWindows} \cite{Naor26TimeWindows} also present an~$\O(\log k)$-competitive randomized algorithm for weighted paging with delay, using a fundamentally different approach from ours.
Their algorithm reuses the online cache configurations of \cite{Gupta22CachingWithTimeWindows} and reduces the decision which requests to serve to joint replenishment with time-dependent opening costs, for which they give a constant-competitive algorithm.
In contrast, we integrate mirror descent with preemptive service in a single randomized algorithm whose fractional cache states determines when requests are served.

\subsection{Technique}\label{sec:technique}
\paragraph{Fractional model.}
Instead of defining a randomized algorithm for paging with delay that makes randomized decisions about which pages to fetch and evict, we design a \emph{fractional} algorithm.
This approach was first introduced by \citeauthor{Bansal07PrimalDual} for weighted paging \cite{Bansal07PrimalDual} and later reused by \citeauthor{Gupta22CachingWithTimeWindows} for paging with delay \cite{Gupta22CachingWithTimeWindows}.
A fractional algorithm maintains, at every time, a fractional state~$x \in [0,1]^n$, where the \emph{antimass}~$x_i$ represents the fraction of page~$i$ that is \emph{not} cached.
Equivalently,~$1-x_i$ is the fractional \emph{cached mass} the algorithm has on page~$i$.
Throughout the execution, we maintain that the fractional state satisfies~$\sum_i x_i \geq n-k$, meaning that the total cached mass is at most~$k$.
\citeauthor{Bansal07PrimalDual} showed for classical paging that the state~$x$ of a fractional algorithm can be transformed online into a randomized integral paging algorithm realizing the state as its marginal probabilities, i.e., the probability of page~\(i\) being cached is~\(1-x_i\) \cite[Theorem 4.1]{Bansal07PrimalDual}.
In the design of our algorithm, we ensure that a requested page~$i$ has~$x_i=0$ at some time, so that the resulting randomized algorithm caches~$i$ almost surely at this time.

\paragraph{Mirror descent in classical paging.}
\citeauthor{Bubeck18kServerMultiscale} \cite{Bubeck18kServerMultiscale} demonstrated how a fractional solution for classical paging (without delay) can be defined via a \emph{mirror descent path}, recovering the tight~$\O(\log k)$ bound already proven in \cite{Bansal07PrimalDual}.
Upon a request to an uncached page~$p$, their algorithm redistributes cached mass from other pages to~$p$ by following a path~$x$ through fractional states defined by the differential inclusion
\begin{equation}\label{eq:md}
    x'(t) \in \nabla^2 \Phi(x(t))^{-1}\cdot \left(-\mathbf{1}_p-N_S(x(t))\right).
\end{equation}
Here,~$\mathbf{1}_p$ is the vector with entry~$1$ at~$p$ and~$0$ at all other pages,~$S$ denotes the set of possible states~$x\in [0,1]^n$ with~$\sum_i x_i=n-k$,~$N_S(x)$ is the normal cone of~$S$ at~$x$, and~$\Phi$ is a strongly convex function called the \emph{mirror map}.
Intuitively,~$x'$ is defined to reduce the antimass at~$p$ (via the term~$-\mathbf{1}_p$), while being forced into the possible state space~$S$ (via the term~$-N_S(x(t))$), and following some distortion defined by~$\Phi$.
It can be shown that this mirror descent path indeed reduces the antimass at~$p$, while increasing the antimass at all other pages. The algorithm by \citeauthor{Bubeck18kServerMultiscale}~\cite{Bubeck18kServerMultiscale} follows this path until~$p$ is cached.
Upon each request, a new mirror descent path is followed and the resulting paths are concatenated to give a full execution of the algorithm.

To analyze this algorithm, \citeauthor{Bubeck18kServerMultiscale}~\cite{Bubeck18kServerMultiscale} use the \emph{Bregman divergence}
\[
    D(x,y) \coloneq \Phi(y)-\Phi(x)-\langle \nabla \Phi(x), y-x\rangle
\]
as a potential function.
\Cref{fig:md-potential} shows the potential along a mirror descent path satisfying \eqref{eq:md}.
Using the weighted entropy function~$\Phi=\sum_i w_i x_i \log(x_i)$ as the mirror map, the analysis of mirror descent paths with the Bregman potential from~\cite{Bubeck18kServerMultiscale} shows that
\[
    \int \mathbf{1}_p^\top x(t)\dd t \leq L \int \|\OPT'(t)\|_{\ell_1(w)}\dd t+\int \mathbf 1_p^\top \OPT(t)\dd t.
\]
Here,~$\ell_1(w)$ is the weighted~$\ell_1$-norm,~$\|x\|_{\ell_1(w)}=\sum_i w_i|x_i|$, so that, for a given differentiable path~$y$ of fractional states,~$\int \|y'(t)\|_{\ell_1(w)}\dd t$ is the cost of transitioning through the fractional states along~$y$ when moving mass into or out of page~$i$ costs~$w_i$, and~$L$ is the Lipschitz constant of~$\Phi$ over the state space~$S$.
The terms~$\int \mathbf{1}_p^\top x(t)\dd t$ and~$\int \mathbf 1_p^\top \OPT(t)\dd t$ are called the \emph{service cost} of the algorithm and the optimum, respectively, and~$\int \|\OPT'(t)\|_{\ell_1(w)}\dd t$ is the \emph{movement cost} of the optimum.
Writing~$x'_-\coloneqq \max\{0, -x'\}$ for the negative part of the derivative of~$x$, \citeauthor{Bubeck18kServerMultiscale}~\cite{Bubeck18kServerMultiscale} moreover show that the negative movement cost~$\int \|x'_-(t)\|_{\ell_1(w)}\dd t$ of the algorithm is bounded by its service cost, yielding
\begin{equation}\label{eq:bregman_guarantee}
    \int \|x'_-(t)\|_{\ell_1(w)}\dd t \leq L \int \|\OPT'(t)\|_{\ell_1(w)}\dd t+\int \mathbf 1_p^\top \OPT(t)\dd t
\end{equation}
for each request to some page~$p$.

In classical paging, every algorithm has to serve each requested page immediately.
Therefore, when page~$p$ is requested, it holds that~$\OPT_p(t)=0$, and hence the optimum's service cost is zero.
(Note that this is not the case for paging with delay, as discussed below.)
The competitive guarantee of~$L$ then follows immediately from \eqref{eq:bregman_guarantee}.
On the full state space~$S$, the Lipschitz constant~$L$ of the weighted entropy turns out to be unbounded.
In particular, on a domain~$[\delta,1]^n\subseteq [0,1]^n$, the Lipschitz constant is~$\Theta(-\log \delta)$.
To get a better competitive guarantee, \citeauthor{Bubeck18kServerMultiscale}~\cite{Bubeck18kServerMultiscale} therefore follow the mirror descent path only until~$x_p=\delta$ for some small threshold~$\delta$, at which point the page is considered served.
Finally, they apply a fixed transformation that maps~$\delta$ to zero, so that the requested page is truly served when~$x_p=\delta$.
The competitive ratio then is~$\O(-\log \delta)$.

\begin{figure}[tbp]
    \centering
    \begin{tikzpicture}[x=1.3cm, y=1.3cm, color=black, font=\small]
        \begin{scope}
            \clip (0,0)--(4,0)--(2,3.464)--cycle;
            \foreach \c/\a/\b/\w in {4/0.199/0.524/0.5, 5/0.099/0.414/1.2, 6/0.043/0.312/1.8}
            \draw[figblue, line width=\w pt] plot[domain=\a:\b, samples=50, variable=\t]
            ({4*(1-\t)-2*((\c-(\t-1-ln(\t))-4*(1-\t))/4)}, {3.464*((\c-(\t-1-ln(\t))-4*(1-\t))/4)});
            \foreach \c/\a/\b/\w in {3/0.408/0.550/0.6, 6/0.242/0.342/1.3, 9/0.154/0.222/1.9}
            \draw[figorange, dash pattern=on 3pt off 1.5pt, line width=\w pt] plot[domain=\a:\b, samples=50, variable=\t]
            ({4*(1-\t-(8*(\t-1-ln(\t))+4*(1-\t)-\c)/3)+2*\t}, {3.464*\t});
        \end{scope}
        \foreach \c/\x/\y in {4/0.957/1.657, 5/1.176/2.036, 6/1.381/2.389}
        \node[figblue, left, font=\footnotesize] at (\x,\y) {$\c$};
        \foreach \c/\x/\y in {3/2.902/1.899, 6/3.321/1.176, 9/3.559/0.761}
        \node[figorange, right, font=\footnotesize] at (\x,\y) {$\c$};
        \draw[line width=0.8pt] (0,0)--(4,0)--(2,3.464)--cycle;
        \node[below left=3pt] at (0,0) {$\{p,r\}$};
        \node[below right=3pt] at (4,0) {$\{p,q\}$};
        \node[above=6pt] at (2,3.464) {$\{q,r\}$};
        \draw[black!50, dotted, line width=0.8pt] (0.6,0.346)--(3.4,0.346)--(2,2.771)--cycle;
        \fill[figblue] (0,0) circle (3pt);
        \node[figblue, above left] at (0,0) {\OPT};
        \fill[figorange] (2,3.464) circle (3pt);
        \node[figorange, right=2pt] at (2,3.464) {\OPT};
        \draw[-{Stealth[length=2.2mm]}, figpink, line width=1.4pt] plot[domain=0:1.5272, samples=40, variable=\s]
        ({1.2*exp(\s/4)+2*(1-0.1*exp(\s)-0.3*exp(\s/4))}, {3.464*(1-0.1*exp(\s)-0.3*exp(\s/4))});
        \fill[figpink] (2.4,2.078) circle (2pt);
    \end{tikzpicture}
    \caption{%
        Fetching page~$p$ via a mirror descent path.
        Here, there are~$n=3$ pages $\{p,q,r\}$ with weights~$(w_p,w_q,w_r)=(8,1,4)$, and the cache has size~$k=2$.
        The fractional states form a simplex whose vertices are the integral states, labelled by their caches.
        Our algorithm keeps its state in~$[\delta,1]^3$ (dotted) and follows the mirror descent path (pink) until~$x_p=\delta$.
        The lines are level sets of the Bregman potential for two states of \OPT.
        If \OPT caches~$p$ (blue, solid), the potential decreases along the path by at least the fetch cost.
        In paging with delay, \OPT need not cache~$p$ (orange, dashed), and the same path may increase the potential.
    }
    \label{fig:md-potential}
\end{figure}

\paragraph{Adapting mirror descent for paging with delay.}
A natural approach to reusing the analysis of \citeauthor{Bubeck18kServerMultiscale}~\cite{Bubeck18kServerMultiscale} for paging with delay would be to act as if a page has to be served once it has accumulated large delay compared to the cost of serving it.
In contrast to classical paging, however, when fetching such a page, we cannot assume that~$\OPT$ has the requested page~$p$ cached at the same time that we fetch it.
\OPT may serve the same request earlier or later than we do.
In this case, \eqref{eq:bregman_guarantee} gives the competitive ratio~$L$ only with an additive term~$\int \mathbf 1_p^\top \OPT(t)\dd t$ per fetch (the optimum's service cost).
In the analysis leading to \eqref{eq:bregman_guarantee}, this additive term stems from an increase of the Bregman potential while fetching the page, which is illustrated in \Cref{fig:md-potential}.

This is a fundamental obstruction to reusing the analysis as is: the increase is real, as we move away from \OPT's state, but \OPT may pay nothing at this time, since it pays for~$p$ earlier or later with additional delay.
The key ingredient of our analysis therefore is to charge this potential increase to costs that \OPT incurs at other times.
In particular, we relate the optimum's service cost back to the algorithm's positive movement cost, proving that
\begin{equation}\label{eq:serviceToPartialMovement}
    \int 1_p^\top \OPT(t)\dd t \leq 3 \int \|x'_+(t)\|_{\ell_1(w)}\dd t,
\end{equation}
where~$x'_+\coloneqq \max\{0, x'\}$.
By itself, this does not yet yield a competitive guarantee but it shows that investing an additional cost of~$\int 1_p^\top \OPT(t)\dd t$ changes the algorithm's cost by a constant factor, thus not deteriorating the competitive guarantee.
In particular, if this additional investment is for actions whose cost can be charged to \OPT, we can cover the increase of the Bregman potential and remain~$\O(L)$-competitive.

We achieve this by adding \emph{preemptive service}:
besides the page whose delay triggers a fetch, we also serve other pending requests, as long as an internal budget covers the cost of doing so.
Preemptive service is a key component of many existing algorithms for online service with delay \cite{Bienkowski18OSDLine, Azar21OSD, Touitou23ImprovedDeterministic}.
Usually, preemptive service is performed for roughly the cost of the triggering service and in order of decreasing urgency.
A request triggering critical service has either arrived after the last critical service, in which case \OPT could not have batched the request with others and therefore paid for it in full, or it was already present during the last preemptive service, in which case it might not cause cost for \OPT.
In this case, since preemptive service is in order of decreasing urgency, all preemptively served requests were more urgent than the current triggering request, and the current triggering request did not fit within the budget to be preemptively served.
This certifies that either, the algorithm performed its entire budget worth of preemptive service, or the previous preemptive service did not fully use its budget because the current service is more costly than the previous one.
In the first case, the cost can be charged directly to \OPT, and in the second case, the cost of one service is charged to that of the next service.
This creates a chain of increasingly heavy pages and the competitive ratio then measures the depth of such charging chains.

In our case, most cost is already covered through the Bregman potential, and the quantity to be bounded is instead the excess term~$\int 1_p^\top \OPT(t)\dd t$.
When a page is fetched due to high delay cost, we measure the length of the mirror descent path it follows and declare this amount as a budget for preemptive service to pending requests.
This length is at least the excess term~$\int \mathbf 1_p^\top \OPT(t)\dd t \leq \int 1 \dd t$ and can be bounded as in \eqref{eq:serviceToPartialMovement}, altering the cost of the algorithm by at most a constant factor.
Unlike in previous work on online service with delay, the excess to be bounded here can be much larger than the cost of the service, and this cannot easily be predicted.
Therefore, we allow every service to trigger preemptive services of any cost, irrespective of its own cost.
Moreover, every service can deposit budget for high-cost preemptive services, and we keep track of these budgets over the span of the entire execution, instead of per service or per page.

As usual in online service with delay, we perform preemptive service to pending requests in order of decreasing urgency.
Here, urgency is defined by the time at which the pending delay on a page exceeds the cost of serving it.
If this time has already passed, either \OPT has served the pending request already, incurring the same fetch cost as we do, or \OPT serves it later and pays the same cost in delay.
Either way, we can show that \OPT's cost covers the cost of our preemptive service.
In cases where budget was deposited by a service that increased the Bregman potential, we can show that almost the entire preemptive budget is spent on visits to such urgent requests, so that the increase in the potential is covered by the cost of \OPT.

\paragraph{Online rounding.}
To transform the resulting sequence of fractional states into a feasible one with~\(x_i = 0\) for every requested page~\(i\) at some point, we apply a fixed map to our fractional states, analogously to the approach in \cite{Bubeck18kServerMultiscale}.
To maintain the competitive guarantee, this map may not distort the cost of the resulting state transitions by more than a constant factor.
The distortion can be measured by the norm of the transformation map's derivative.
To optimize the competitive guarantee, we choose~$\delta$ maximal such that we can define a transformation mapping mass~$\delta$ to~$0$ with bounded derivative.
The optimal choice turns out to be~$\delta\coloneq 1/(2k+1)$ (see also \Cref{app:rounding}), giving the mirror map Lipschitz factor~$L=\log \frac 1 \delta=\log(2k+1)$ and competitive guarantee~$\O(\log(k))$.

\subsection{Related work}
Paging with delay can be seen as a special case of the \emph{online service with delay} (OSD) problem, where~$k$ agents serve requests in a metric space, which arrive over time and accrue delay cost while waiting to be served.
Weighted paging with delay can be understood as OSD on a star metric, and weighted paging with deadlines is the special case where delay costs remain zero up to the deadline, at which point they start growing rapidly.
A ray of the star then corresponds to a page with weight proportional to the ray's length, and the~$k$ agents represent the~$k$ cache slots.

Online service with delay was introduced by \citeauthor{Azar21OSD} in 2017 \cite{Azar21OSD} together with a randomized~$\O(k\log^5 n)$-competitive clairvoyant algorithm for general metric spaces of size~$n$, which is based on a deterministic~$\O(kd^4)$-competitive clairvoyant algorithm for hierarchically separated trees (\emph{HSTs}) of depth~$d$, combined with a randomized embedding of general metrics into HSTs.
In particular, since any star metric is an HST of depth~$1$, this implies a clairvoyant deterministic~$\O(k)$-competitive algorithm for paging with delay or deadlines.
For unweighted stars, \citeauthor{Azar21OSD} give a simple reduction to classical paging, providing an~$\O(\log k)$ bound for unweighted paging with delay.
The authors complement this with a lower bound of~$\Omega(\sqrt n)$ for the competitive ratio of deterministic nonclairvoyant weighted paging with deadlines with in the case of~$k=1$.
In \cite{Krnetic20kServerUniform} and \cite{Melnyk26TwoHoles}, the competitive ratio of both clairvoyant and nonclairvoyant unweighted paging with delay has since been shown to lie in~$[2k+1,2k+2]$.

For online service with delay on general metrics, the upper bound for the competitive ratio of deterministic clairvoyant algorithms has been reduced to~$\O(\log n)$ in the case of~$k=1$ \cite{Touitou23ImprovedDeterministic}, while no better bounds have been proven for general~$k$.
In the nonclairvoyant case, a deterministic upper bound of~$\O(\sqrt n \log n)$ can be achieved~\cite{Touitou25NonclairvoyantOSD}.
For randomized algorithms, \citeauthor{Gupta22HittingSet} \cite{Gupta22HittingSet} gave an improved dependence on~$k$ through an~$\O(\mathrm{poly}\log(n,\Delta))$-competitive algorithm, where~$n\geq k$ is the size of the metric space and~$\Delta$ is the metric's aspect ratio~$\max_{x,y}d(x,y)/\min_{x\neq y}d(x,y)$.
In \cite{Gupta22CachingWithTimeWindows}, \citeauthor{Gupta22CachingWithTimeWindows} present an~$\O(\log k \log n)$-competitive randomized clairvoyant algorithm for paging with delay or deadlines through a primal-dual approach.
We improve the competitive ratio to~$\O(\log k)$ for this setting.

We can recover the classical paging problem as a special case of paging with delay by giving all requests immediate deadlines.
In particular, this implies that the lower bounds of~$k$ for deterministic algorithms \cite{Sleator85AmortizedEfficiency} and~$H_k\approx \log k$ for randomized algorithms~\cite{Fiat91CompetitivePaging} transfer directly to both clairvoyant and nonclairvoyant paging with delay, leading to a tight analysis of deterministic weighted paging with delay and a~$\Theta(\log n)$ gap between the known randomized upper and lower bounds, which we close in this paper.

Our algorithm uses the mirror descent technique, which was initially introduced in the context of convex optimization \cite{Nemirovskij83ProblemComplexity} and has recently seen many applications in randomized online decision making \cite{Buchbinder14CompetitiveRegularization, Bubeck21MTSMirrorDescent, Bai25PassingACrown}.
Notably, \citeauthor{Bubeck18kServerMultiscale}~\cite{Bubeck18kServerMultiscale} used the continuous-time mirror descent framework to derive an~$\O(\log^2 k)$-competitive randomized algorithm for the~$k$-server problem on HSTs.
In their paper, they demonstrate how mirror descent can be used to recover the~$\O(\log k)$ guarantee for randomized online paging already achieved through the primal-dual analysis in \cite{Bansal07PrimalDual}.

\section{Problem definition}\label{sec:problem}
An instance of \emph{weighted paging with delay} is given by a finite sequence of requests~$r=(p_r, a_r, d_r)$ where~$p_r \in \{1,\dots,n\}$ is the requested page,~$a_r\ge0$ is the request's arrival time, and~$d_r\colon[0,\infty)\to[0,\infty)$ is its delay cost function.
Delay cost functions are nondecreasing, continuous and unbounded, with~$d_r(0)=0$.
We consider the \emph{clairvoyant} version of the problem where delay functions are revealed upon the arrival of a request.

An algorithm holds a set of up to~$k$ pages at any given time in its \emph{cache}, and every algorithm, as well as the optimum, starts with a fixed initial cache of~$k$ pages.
The algorithm may alter its cache instantaneously at any time, and may go through multiple cache configurations at one instant of time.
Each page~$i$ is associated with a weight~$w_i$ and fetching a page incurs a cost equal to its weight.
The eviction of pages does not incur any cost.
A request~$r$ is \emph{served at time~$\tau$} if~$\tau\ge a_r$ is the first time such that~$p_r$ is cached at time~$\tau$ (i.e., it is part of some cache configuration that the algorithm goes through at time~$\tau$).
The request then incurs \emph{delay cost}~$d_r(\tau-a_r)$.
The cost of an algorithm is its total fetch cost plus its total delay cost.

An algorithm is \emph{online} if its actions at time~$t$ depend only on the requests with arrival time at most~$t$.
Furthermore, if its actions at time~$t$ do not depend on the trajectory of the delay cost curves beyond~$t$, the algorithm is \emph{nonclairvoyant}, otherwise it is \emph{clairvoyant}.

An algorithm is~\emph{$\rho$-competitive} if, for every input~$\sigma$, its cost satisfies~\(
\ALG(\sigma)\leq \rho\cdot \OPT(\sigma) + c,
\)
where~$\OPT(\sigma)$ is the cost of the optimal offline algorithm on~$\sigma$, and~$c$ is a constant that is independent of the input sequence.
A randomized algorithm is~\(\rho\)-competitive if its expected cost satisfies~\(\mathbb E[\ALG(\sigma)] \le \rho\cdot\OPT(\sigma)+c\), where the expectation is taken over the algorithm's random choices.

For the analysis we use the following convention of order.
At a time~$t$, first the requests with~$a_r=t$ arrive, then the optimal offline schedule performs its cache updates and finally the online algorithm updates its cache.
``Before'' and ``after'' refer to this order.

Furthermore, we assume w.l.o.g.\ that the optimum schedule is lazy in the sense that every eviction is delayed until the next fetch, so that the cache is full at all times, and every eviction is immediately followed by a fetch.
This does not increase its delay or fetch cost:
The lazy schedule caches a superset of the pages of the original schedule at all times and fetches every page at the same time as the original schedule, or not at all.

\section{Algorithm definition}\label{sec:alg}
In this section, we define our algorithm \PMD (\emph{Preemptive Mirror Descent}) for weighted paging with delay.
The pseudocode of \PMD is given in Algorithm \ref{alg:PMD}.
For the definition of \PMD, we assume that all weights are powers of two (see the proof of \Cref{thm:main}).
Moreover, we assume w.l.o.g.\ that the minimum page weight is~$1$.
Throughout, we fix the threshold~$\delta\coloneqq\frac1{2k+1}$ and let~$L\coloneqq\log\frac1\delta=\log(2k+1)$, where~$\log$ is the natural logarithm.

\PMD keeps a fractional state~$x\in[\delta,1]^n$, starting with the initial cache.\footnote{
    Technically, the initial cache does not have~\(x_i \ge \delta\) for the cached pages, so \PMD immediately moves to a valid initial state~\(x\in[\delta,1]^n\).
    Since this is a one time movement of cost at most~\(\sum_i w_i\), this is irrelevant to the competitive ratio.
}
A request~$r$ for page~$p$ is \emph{served} at the first time~$\tau_r\ge a_r$ at which~$x_p=\delta$, and it is \emph{pending} from its arrival until it is served.
Because we define the randomized algorithm after applying a transformation that maps~$\delta$ to~$0$ (\Cref{prop:rounding}), every realization of the randomized algorithm has~\(p\) cached with probability~$1$ at time~$\tau_r$.
In case a page's antimass is strictly positive, it may be cached even before this time in some realizations of the algorithm.

\PMD remains idle until the delay of pending requests pays for serving some page.
We then call this page \emph{critical}, and two types of service are performed.
First, the critical page is served during a \emph{critical service}, which is followed by \emph{preemptive services} to other pending requests.

\begin{algorithm}[tbp]
    \caption{\textsc{Preemptive Mirror Descent} (\PMD)}
    \label{alg:PMD}
    \lFor{each $W\in\{1,2,4,\dots\}$ and page $p$}{$B_W\gets 0$, \quad $c_p\gets0$\Comment*[f]{initialize budgets and claims}}
    \Upon{page \(p\) becoming critical (i.e., last time before \eqref{eq:trigger} holds)}{
        \FCrit{$p$}\;
        \For{each level $W\in\{1,2,4,\dots\}$}{
            \If{$B_W \ge 4W$ \KwAnd $\sum_{i:w_i\le W}(1-x_i) \ge 1-\delta$}{
                \FPre{$W$}\;
            }
        }
    }
    \BlankLine
    \Fn{\FCrit{$p$}}{
        $\theta\gets(\dur{\lpath{x}{p}}/L)/(\sum_{w_q\le w_p}o_q)$\;
        \lFor{each page $q$ with $w_q\le w_p$}{$c_q\gets c_q+\theta o_q$\Comment*[f]{claim delay, total increase $\dur{\lpath{x}{p}}/L$}}
        \lFor{each level $W \in \{w_p,2w_p,\dots\}$}{$B_W \gets B_W + \durW{W}{\lpath{x}{p}}$\Comment*[f]{deposit budget}}
        $x\gets\lpath{x}{p}(\dur{\lpath{x}{p}})$, \quad $c_p\gets0$\Comment*[r]{serve $p$, reset claim}
    }
    \BlankLine
    \Fn{\FPre{$W$}}{
        $x^{\mathrm{start}}\gets x$\;
        $\theta\gets(1-\delta)/(\sum_{w_i\le W}(1-x_i))$\;
        \lFor{each page $i$ with $w_i\le W$}{$x_i\gets x_i+\theta(1-x_i)$ \Comment*[f]{open a hole, total increase $1-\delta$}}
        $B_W\gets B_W-W(1-\delta)$\Comment*[r]{pay for closing the hole}
        $E \gets \{\text{page }q\text{ with pending request}:w_q\leq W\}$ ordered by incr.~$\deadline(q)$\Comment*[r]{eligible pages $E$}
        \While{$E\neq \emptyset$ and $B_W\ge W$}{
            $q\gets E$.pop(0)\Comment*[r]{get and remove first page from $E$}
            $B_W\gets B_W-w_q(x_q-\delta)$\Comment*[r]{pay for the round trip}
            $x_q\gets\delta\gets x_q$, \quad $c_q\gets0$\Comment*[r]{set $x_q$ to $\delta$ and back (serve $q$), reset claim}
        }
        $x\gets x^{\mathrm{start}}$\Comment*[r]{close the hole}
    }
\end{algorithm}

\paragraph{Critical service}
When a page~$p$ becomes critical (see below), \PMD serves it by following a mirror descent path.
We call this part of the execution a \emph{critical service} to~$p$ and say that it has \emph{level}~$w_p$, where~$w_p$ is the weight of page~$p$.

To serve~$p$, \PMD decreases the antimass at~$p$ and increases it at all other pages.
It does so by virtually following the mirror descent path from \cite{Bubeck18kServerMultiscale} defined in \eqref{eq:md}.
We denote this path by~$\lambda$ and write~$\lpath{x}{p}$ when we need to make the critical page~$p$ and the start state~$x\in [\delta, 1]^n$ explicit.
To simplify the analysis, we specify the path~$\lambda=\lpath{x}{p}$ explicitly as
\[
    \lambda(s)_i\coloneqq\min\{1,\ x_ie^{s/w_i}\}\quad(i\neq p),
    \qquad \lambda(s)_p\coloneqq n-k-\sum_{i\ne p}\lambda(s)_i\qquad(s\ge0).
\]
\Cref{fig:md-coordinates} illustrates the coordinates of such a path.
\Cref{app:mdpath} shows that this path is indeed the mirror descent path.
\PMD transitions to the state~$\lambda(\send)$ with~$\send\coloneq \min\{t\geq 0:\lambda(t)_p=\delta\}$ chosen minimally such that~$\lambda(\send)_p = \delta$.
We refer to~$\send$ as the \emph{length} of the critical service, which is the time it takes to serve page~$p$ along the mirror descent path.
Note that such a time~$\send$ exists because the mirror descent path is continuous,~$\lambda(s)_p$ is nonincreasing,~$\lambda(0)_p=x_p>\delta$, and~$\lambda(s)_p\to 1-k\le0$ as~$s\to\infty$.
Therefore,~\(\dur{\lambda}\) is well defined and~\(\lpath{x}{p}\) serves all pending requests for~\(p\).

Intuitively, the mirror descent path evicts mass from all pages~$i\ne p$ at rate~$\lambda(s)_i/w_i$, i.e., proportional to their current antimasses divided by their weights.
This way, \PMD favors evicting light pages that hold a lot of antimass, intuitively following a cheap path towards a state where the antimass is divided equally among all pages.

During the critical service, only~$x_p$ decreases, and the total incurred fetch cost therefore is
\[
    \fetch(\lpath{x}{p})\coloneqq w_p(\lambda(0)_p-\lambda(\send)_p)=w_p(x_p-\delta).
\]

\begin{figure}[tbp]
    \centering
    \begin{tikzpicture}[x=1.75cm, y=4.2cm, color=black, font=\small]
        \fill[black!7] (1,-1) rectangle (2,0);
        \fill[figorange!30] (2,-1) rectangle (5.448,0);
        \fill[figorange!10] (5.448,-1) rectangle (6.6,0);
        \foreach \a/\b/\W in {1/2/1, 2/4/2, 4/6.6/4} {
        \draw[black!30] (\a,-1)--(\a,0.2);
        \node[above, align=center] at ({(\a+\b)/2},0) {$I_{\W}$\\[-1pt]{\scriptsize\color{black!60}only evicts}\\[-2pt]{\scriptsize\color{black!60}$i$ with $w_i>\W$}};
        }
        \foreach \w in {1,2,4} \draw[black!45, dotted, line width=0.8pt] (0,-1)--(\w,0);
        \node[black!60, font=\scriptsize, rotate=30, above] at (2.8,-0.3) {evicted by $s=Lw_i$};
        \foreach \w/\y in {1/-0.2216, 4/-0.1386, 16/-0.7909, 32/-0.8867}
        \draw[black!60, line width=0.8pt] plot[domain=0:5.448, samples=300] (\x, {min(0, \y+\x/\w)});
        \foreach \w/\y/\n in {1/-0.2216/a, 4/-0.1386/b, 16/-0.7909/c, 32/-0.8867/d}
        \node[black!60, font=\scriptsize, left] at (-0.04,\y) {$w_{\n}=\w$};
        \draw[figpink, line width=1.4pt] plot[smooth] coordinates {(0.000,-0.013) (0.020,-0.032) (0.040,-0.053) (0.060,-0.074) (0.080,-0.098) (0.100,-0.123) (0.120,-0.149) (0.140,-0.178) (0.160,-0.209) (0.180,-0.242) (0.200,-0.278) (0.220,-0.318) (0.222,-0.321) (0.240,-0.329) (0.260,-0.337) (0.280,-0.346) (0.300,-0.355) (0.320,-0.364) (0.340,-0.373) (0.360,-0.382) (0.380,-0.392) (0.400,-0.402) (0.420,-0.412) (0.440,-0.422) (0.460,-0.433) (0.480,-0.443) (0.500,-0.455) (0.520,-0.466) (0.540,-0.478) (0.555,-0.486) (0.560,-0.487) (0.580,-0.488) (0.600,-0.489) (0.700,-0.495) (0.800,-0.501) (0.900,-0.507) (1.000,-0.513) (1.100,-0.519) (1.200,-0.526) (1.300,-0.532) (1.400,-0.539) (1.500,-0.546) (1.600,-0.552) (1.700,-0.560) (1.800,-0.567) (1.900,-0.574) (2.000,-0.582) (2.100,-0.589) (2.200,-0.597) (2.300,-0.605) (2.400,-0.613) (2.500,-0.621) (2.600,-0.630) (2.700,-0.639) (2.800,-0.648) (2.900,-0.657) (3.000,-0.666) (3.100,-0.676) (3.200,-0.686) (3.300,-0.696) (3.400,-0.706) (3.500,-0.716) (3.600,-0.727) (3.700,-0.738) (3.800,-0.750) (3.900,-0.762) (4.000,-0.774) (4.100,-0.786) (4.200,-0.799) (4.300,-0.812) (4.400,-0.826) (4.500,-0.840) (4.600,-0.854) (4.700,-0.869) (4.800,-0.885) (4.900,-0.901) (5.000,-0.918) (5.100,-0.935) (5.200,-0.953) (5.300,-0.971) (5.400,-0.990) (5.448,-1.000)};
        \draw[figpink, dashed] (5.448,-1)--(5.448,0);
        \fill[figpink] (5.448,-1) circle (2pt);
        \draw[->] (0,-1)--(6.75,-1) node[right] {$s$};
        \draw[->] (0,-1)--(0,0.1) node[above] {antimass};
        \foreach \u/\l in {1/L, 2/2L, 4/4L} \draw (\u,-1)--(\u,-1.03) node[below] {$\l$};
        \node[below] at (5.448,-1.03) {$\dur{\lambda}$};
        \draw (0,0)--(-0.04,0) node[left] {$1$};
        \draw (0,-1)--(-0.04,-1) node[left] {$\delta$};
        \node[left, align=right] at (-0.05,-0.5) {$\lambda(s)_i$\\(log scale)};
        \draw[decorate, decoration={brace, amplitude=4pt, mirror}] (2,-1.17)--(3.98,-1.17)
        node[midway, below=4pt] {$\durW{2}{\lambda}=2L$ into $B_2$};
        \draw[decorate, decoration={brace, amplitude=4pt, mirror}] (4.02,-1.17)--(5.448,-1.17)
        node[midway, below=4pt] {$\durW{4}{\lambda}\approx 1.4L$ into $B_4$};
    \end{tikzpicture}
    \caption{%
        A critical service to a page~$p$ along the mirror descent path~$\lambda=\lpath{x}{p}$.
        Here, there are five pages~$\{p,a,b,c,d\}$ with weights~$(w_p,w_a,w_b,w_c,w_d)=(2,1,4,16,32)$.
        The plot shows the coordinates~$\lambda(s)_i$ over the parameter~$s$ on a logarithmic scale, coordinate $p$ is in pink.
        Every page~$i\ne p$ follows a line of slope~$1/w_i$ until it is fully evicted, which happens by~$s=Lw_i$ at the latest (see \eqref{eq:light_evicted_in_band}).
        The dotted lines show the slopes~$1,\frac12,\frac14$.
        Hence, during~$I_W$, only pages heavier than~$W$ are evicted.
        The path does not depend on $w_p$ and has kinks at each point that some page becomes fully evicted.
        \PMD deposits the parts $\durW{W}{\lambda}$ of $\lambda$ in~$I_W$ into the budgets~$B_W$ for all $W\geq w_p$.
    }
    \label{fig:md-coordinates}
\end{figure}

\paragraph{When to turn a page critical}

A natural rule would turn page~$p$ critical once the delay of its pending requests equals the cost~$w_p$ of serving it, therefore balancing fetch and delay cost.
This delay pays for fetching~$p$, but not for making room for~$p$.
While evictions are free by themselves, they can cost the algorithm later when an evicted page needs to be re-fetched (\Cref{app:obstacles}, \Cref{ex:room}).
We therefore wait until the delay covers both the cost~\(w_p\) of serving the request and of re-fetching the heaviest page evicted to make space.
To determine the heaviest such page, we observe that the mirror descent path has evicted a page~$i\ne p$ once it reaches length~$Lw_i$:
Using that~$L=\log(1/\delta)$, it holds that
\begin{equation}
    \label{eq:light_evicted_in_band}
    \lpath{x}{p}(s)_i
    =\min\{1,x_ie^{s/w_i}\}
    \geq \min\{1,\delta e^{L}\}
    =1\quad\text{for all } x\in[\delta,1]^n,\ p\ne i,\ s\ge Lw_i.
\end{equation}
Therefore, the critical service to~$p$ displaces all pages of weight up to~$\dur{\lpath{x}{p}}/L$.
To cover the possible re-fetching of these pages in addition to fetching~\(p\), \PMD waits until the delay is at least~$w_p + \dur{\lpath{x}{p}}/L$.

However, if many light pages are simultaneously pending and the cache is filled with heavy pages, following this rule for each page individually means that we let each pending request incur a very large delay cost.
Displacing a heavy page to serve all light pages may be much cheaper than waiting for each light page to accumulate sufficient delay to warrant evicting a heavy page on its own (\Cref{app:obstacles}, \Cref{ex:shared}).
Thus, instead of waiting for the delay of a single request for~\(p\) to cover~$\dur{\lpath{x}{p}}/L$, we consider all pages of weight at most~\(w_p\).
To denote the aggregated delay cost of all pending requests for a page~\(q\) at time~\(\tau\), we define
\[
    \delayof_q(\tau)\coloneqq\sum_{r\text{ pending on }q}d_r(\tau-a_r).
\]
Finally, we need to prevent delay from being charged multiple times, causing more critical service than it can pay for.
Thus, for each page~$q$, we keep track of the amount~$c_q$ of its delay that has been \emph{claimed} to make another page critical and only unclaimed delay contributes to making pages critical.
Whenever~\(q\) is cached,~\(c_q\) resets to~\(0\) because there is no more delay to track.

Putting the above components together, instead of waiting for the delay on~$p$ beyond~\(w_p\) to reach~\(\dur{\lpath{x}{p}}/L\), we wait for the \emph{overshoot}
\[
    o_q\coloneqq\big(\delayof_q(t)-c_q-w_q\big)_+
\]
of all lighter pages to reach this threshold, i.e., a page~$p$ turns \emph{critical} at the last moment before
\begin{equation}\label{eq:trigger}
    o_p > 0\qquad\text{and}\qquad\sum_{q\colon w_q\le w_p}o_q\ge\dur{\lpath{x}{p}}/L.
\end{equation}
Note that there is a well defined last time before \eqref{eq:trigger} is satisfied, because~\(x\) and~\(c_q\) remain constant between two steps of \PMD and the overshoots are continuous, as delays are continuous and new requests initially have~\(0\) delay.

Right as \eqref{eq:trigger} is about to be satisfied, \PMD increases the corresponding values~\(c_q\) by~\(\dur{\lpath{x}{p}}/L\) in total and by at most~$o_q$ each, and performs a critical service to~\(p\).

\paragraph{Preemptive service}
After a critical service, \PMD performs \emph{preemptive services} to pending requests.
A preemptive service has an assigned weight level~$W$ (described below) and serves pending requests of weight at most~$W$ in order of decreasing urgency.
This urgency is measured in the time until the delay of pending requests to a page reaches the weight of serving them.
Formally, for each page~\(q\) we define its \emph{effective deadline} at a time~\(t\) as
\begin{equation}\label{eq:deadline}
    \deadline(q)\coloneqq\inf\big\{\tau\ge t:\delayof_{q}(\tau)>w_{q}\big\},
\end{equation}
so that~\(\delayof_{q}(\tau)>w_{q}\) for all~\(\tau > \deadline(q)\).
\PMD performs preemptive service to pages in order of increasing effective deadlines.

To do so, it first evicts~$1-\delta$ mass of pages of weight at most~$W$ by increasing~\(x_i\) for these pages (\emph{opening a hole}), uses this mass to serve pending requests to pages~$q$ by decreasing~$x_q$ to~$\delta$ and back (\emph{round trips}), and then returns to its previous configuration (\emph{closing the hole}).

Preemptive services take no time and happen at the same time as the critical service that triggered them.
To determine to what extent preemptive service should be performed for level~$W$, \PMD tracks a budget~$B_W$ for each level~$W$ throughout the execution.
\PMD starts a preemptive service for every \emph{weight level}, i.e., every power of two~$W=2^j$, for which the pages of weight at most~$W$ hold a total of at least~$1-\delta$ cached mass, i.e.~$\sum_{i:w_i\leq W}(1-x_i)\geq 1-\delta$ (so that there is sufficient mass to open a hole), and the budget~$B_W$ is at least~$4W$ (i.e., the budget can cover the cost of closing a hole and serving some requests).
It then performs round trips as long as~$B_W\ge W$, so that the budget can cover each round trip.

Each budget~$B_W$ starts with~$B_W=0$ and can increase during critical service.
To see by how much, we split the parameter range of the mirror descent path into disjoint intervals~$I_W \coloneqq [W\cdot L, 2W\cdot L)$.
By \eqref{eq:light_evicted_in_band}, only pages heavier than~$W$ are evicted during~$I_W$ (\Cref{fig:md-coordinates}).
Moreover, we later prove that the rate of weighted eviction is at least~$\frac 13$ throughout (see the proof of \Cref{lem:clock}).
So if~$w_p\le W$, then during~$I_W$ the relatively light page~$p$ causes the eviction of pages heavier than~$W$.
If light pages cause this repeatedly, \PMD may keep evicting a heavy page for light requests, whereas it might have been cheaper to serve the light requests ahead of time and keep the heavy page cached afterwards.
Therefore, we dedicate the length of the mirror descent path in~$I_W$ to preemptive service at level~$W$.

To this end, we write~$\durW{W}{\lambda}\coloneqq|I_W\cap[0,\dur{\lambda}]|$ for the length of the part of~$\lambda$ in~$I_W$, and increase the budget~\(B_W\) of every level~$W\geq w_p$ by~$\durW{W}{\lambda}$ (see \Cref{fig:md-coordinates}).
Since the intervals~$I_W$ are disjoint, a critical service with path~$\lambda$ deposits at most~$\dur{\lambda}$ into all budgets in total.
Note that, since the rate of weighted eviction is at least~$\frac13$ throughout, the deposit~$\send$ is at most three times the weighted eviction (\Cref{lem:clock}).
As the total weighted eviction and fetch cost are equal up to an additive term independent of the input, preemptive service only alters the cost of the algorithm by a constant factor.

\subsection{Feasibility of \PMD}
We establish that \PMD is feasible in the sense that throughout the execution, we never cache more mass than~$k$, budgets stay nonnegative, and every request is eventually served.
\begin{lemma}\label{lem:inv}
    \PMD is feasible in the following sense:
    \begin{enumerate}[label=(\alph*)]
        \item Throughout the execution, it holds that~$x\in[\delta,1]^n$,~$\sum_ix_i\ge n-k$ and~$B_W\ge0$.
              Between steps and after every critical service, it holds that~$\sum_ix_i=n-k$.
        \item Every request is served.
    \end{enumerate}
\end{lemma}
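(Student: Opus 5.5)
Part (a) is proved by induction over the steps of \PMD, checking that each step preserves the invariant; between steps nothing changes. For critical service along $\lambda=\lpath{x}{p}$ started in a state with $x\in[\delta,1]^n$ and $\sum_i x_i=n-k$, we have $\lambda(s)_i=\min\{1,x_ie^{s/w_i}\}\in[\delta,1]$ for $i\neq p$. Also $\lambda(s)_p$ is nonincreasing from $x_p$ and is stopped at the first $s$ with $\lambda(s)_p=\delta$, so $\lambda(\send)_p=\delta$. By the definition of $\lambda(s)_p$, the sum stays exactly $n-k$.

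The budget deposits $\durW{W}{\lambda}\ge0$ only increase budgets. For preemptive service at level $W$, the guard $\sum_{w_i\le W}(1-x_i)\ge1-\delta$ gives $\theta\le1$. Hence $x_i+\theta(1-x_i)\in[x_i,1]$, and the total antimass rises by exactly $1-\delta$. The guard $B_W\ge4W$ ensures that paying $W(1-\delta)$ leaves $B_W\ge 3W\ge W\geq0$. Each round trip on a page $q$ with $w_q\le W$ lowers $\sum_i x_i$ only transiently, by $x_q-\delta\le1-\delta$, so while $x_q=\delta$ we still have $\sum_ix_i\ge n-k$. It is paid only when $B_W\ge W\ge w_q(x_q-\delta)$, so $B_W$ stays nonnegative. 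Closing the hole restores $x^{\mathrm{start}}$, which is the post-critical state with sum $n-k$. All intermediate states lie in $[\delta,1]^n$.

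Two side issues need care. First, the preemptive services for different $W$ are run sequentially, each starting from a state with sum $n-k$, so the argument above applies to each. Second, the loop over all $W$ triggers only finitely many services. Indeed, $B_W$ is nonzero only for $W$ with $I_W\cap[0,\dur{\lambda}]\neq\emptyset$ for some critical path so far, and there are finitely many critical services at any moment. I would also argue there are finitely many critical services in total. Each one claims $\dur{\lambda}/L$ of overshoot, and $\dur{\lambda}$ is bounded below: it is at least of order $w_p(x_p-\delta)$, since by the proof idea of \Cref{lem:clock} the weighted eviction rate is bounded. Meanwhile the delay of finitely many requests up to any finite time is finite. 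This finiteness is what makes ``last time before \eqref{eq:trigger}'' meaningful and the execution well defined.

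For (b), fix a pending request $r$ on page $q$ and suppose, for contradiction, that it is never served. Then $\delayof_q(\tau)\to\infty$, since $d_r$ is unbounded. The claim $c_q$ only grows through critical services of heavier-or-equal pages. Each such claim is at most the current overshoot, so in \eqref{eq:trigger} the overshoot $o_q$ is at least the unclaimed delay minus $w_q$. The key step, which I expect to be the main obstacle, is to bound how fast claims can drain $o_q$. Each critical service to some $p$ with $w_p\ge w_q$ claims only a $\theta$-fraction with $\theta\le1$, and only at the moment the trigger is reached. Since the number of critical services up to any finite time is finite, at some time $o_q$ becomes positive with no intervening claim. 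Then $o_q\ge\dur{\lpath{x}{q}}/L$ must eventually hold, because $\dur{\lpath{x}{q}}$ is finite, bounded by about $L\max_iw_i$ by \eqref{eq:light_evicted_in_band} once all other pages are evicted. At that point \eqref{eq:trigger} holds for $q$ itself, $q$ becomes critical, and $r$ is served, a contradiction. I would handle the draining by observing that each claim on $q$ made in a critical service of $p$ serves $p$ and resets the claim $c_p$. If claims on $q$ kept absorbing its delay forever, infinitely many critical services would occur in bounded-delay intervals, contradicting the finiteness established above.
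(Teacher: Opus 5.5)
Your part (a) is correct and matches the paper's argument. The path coordinates stay in $[\delta,1]$ with sum $n-k$, the guard gives $\theta\le1$, each round trip transiently removes at most the $1-\delta$ added by opening the hole, and every budget payment is covered by the preceding check.

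\textbf{The gap: finiteness of critical services.} Your part (b) rests on this finiteness, and your argument for it does not work.
\begin{itemize}
\item The claimed bound ``$\dur{\lambda}$ is at least of order $w_p(x_p-\delta)$'' does not follow from \Cref{lem:clock}. That lemma bounds the weighted eviction rate from below, which bounds the length from above. In fact the bound is false: $-\lambda'(s)_p=\sum_{i\notin E(s)\cup\{p\}}\lambda(s)_i/w_i$ can be of order $n-k$ when a heavy $p$ is served by evicting light pages that hold much antimass. Then $\dur{\lambda}$ is only about $(x_p-\delta)/(n-k)$.
\item More importantly, no bound of this form is uniform, because $x_p-\delta$ can be arbitrarily small. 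After a short critical service of length $s$ to another page, $x_p$ has only grown from $\delta$ to $\delta e^{s/w_p}$. So the claim per critical service is not bounded away from zero, and finite total delay does not limit the number of critical services, not even up to a finite time.
\item Even ``finitely many up to any finite time'' would not suffice. Your step ``eventually $o_q\ge\dur{\lpath{x}{q}}/L$'' needs $x$ and $c_q$ to be frozen from some time on, i.e., a \emph{last} critical service. The paragraph about claims ``draining'' $o_q$ does not supply this.
\end{itemize}

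\textbf{The missing observation.} When $p$ becomes critical at time $t$, continuity of the overshoot gives $\delayof_p(t)\ge w_p+c_p\ge w_p>0$. So $p$ has a pending request, and the critical service serves it.
\begin{itemize}
\item Distinct critical services serve disjoint sets of requests, and there are finitely many requests. Hence there are finitely many critical services in total, which also settles your loop-finiteness concern.
\item After the last critical service, $x$ and $c_q$ are constant while $\delayof_q(t)\ge d_r(t-a_r)\to\infty$, so $o_q\to\infty$.
\item Since $\dur{\lpath{x}{q}}$ is finite, \eqref{eq:trigger} eventually holds for $q$. This yields another critical service, a contradiction.
\end{itemize}
This is exactly the paper's proof of (b).
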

\begin{proof}
    \begin{enumerate}[label=(\alph*)]
        \item By definition of the mirror descent paths~$\lpath{x}{p}$ we have~$x\in[\delta, 1]^n$ and~$\sum_i x_i = n-k$ during critical service.
              When opening a hole during preemptive service, the condition~$\sum_{i:w_i\le W}(1-x_i) \ge 1-\delta$ guarantees that this is possible without violating the bounds~$x\in [\delta, 1]^n$.
              In particular, it gives~$\theta\le1$, so that every~$x_i$ stays at most~$1$.

              During a round trip to page~$i$,~$x_i$ decreases no further than~$\delta$.
              This decreases~$\sum_i x_i$ by at most~$1-\delta$, which is offset by the total increase of~$1-\delta$ from opening the hole.

              When closing the hole \PMD returns to the state from before the preemptive service, which we just showed satisfies the assumptions.

              Decreases to~$B_W$ only occur after conditions that ensure~$B_W$ covers the decrease: closing the hole costs~$W(1-\delta)\le4W$, and a round trip on~$q$ costs~$w_q(x_q-\delta)\le w_q\le W$.

        \item Note first that there is a finite number of times at which a page becomes critical because once a page turns critical, the call to~\FCrit{$p$} performs a critical service to the page, serving a request of which there are only finitely many.
              Furthermore, all loops in the algorithm are over finite sets (assuming we only loop over the relevant levels \(W\in\{1,2,4,\dots,\max_i w_i\}\)).
              Suppose some request~$r$ for a page~$p$ is never served.
              After the last critical service,~$x$ and~$c_p$ remain constant, while~$\delayof_p(t)\ge d_r(t-a_r)\to\infty$ (by unboundedness of the delay cost functions), so~$o_p\to\infty$.
              Therefore, \eqref{eq:trigger} eventually holds and \PMD serves the request.
    \end{enumerate}
\end{proof}

\section{Reducing to fetch cost of critical service}\label{sec:reducing}
The aim of this section is to show that it suffices to analyze the fetch cost~\(\smash{C_{\fetch}^{\crit}=\sum_{\nu\in\Crit}\fetch(\lambda^{(\nu)})}\) of critical services of \PMD, where~$\Crit$ is the set of critical steps and~\(\lambda^{(\nu)}\) is the path of the critical service performed in step~\(\nu\) (\Cref{sec:length}).
More precisely, we define the fractional fetch and delay costs of \PMD, show that their sum corresponds to the expected cost of the randomized algorithm that \PMD is transformed into, and prove that all costs incurred by \PMD are bounded by a constant multiple of
\[
    C_{\fetch}^{\crit} + \sum_i w_i,
\]
i.e., the fetch cost of critical services and an additive term independent of the input (\Cref{prop:cost}).

To do so, we show the following result in \Cref{sec:length}.
\begin{restatable}{lemma}{lengthfetch}\label{lem:lengthfetch}
    The total length of the critical services of \PMD satisfies
    \[
        \sum_{\nu\in\Crit}\dur{\lambda^{(\nu)}}
        \le 3 \cdot C_{\fetch}^{\crit} + 3\sum_i w_i.
    \]
\end{restatable}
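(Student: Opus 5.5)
The plan has two parts: a pointwise rate bound along each mirror descent path, and then a global accounting of eviction against fetching via the potential $\sum_i w_i x_i$.

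\textbf{Rate bound (``clock'' argument).} Fix a critical service to $p$ with path $\lambda=\lpath{x}{p}$. For $s$ outside the finitely many kinks, each unsaturated coordinate $i\ne p$ has $\lambda(s)_i<1$ and
\[
\tfrac{\dd}{\dd s}\lambda(s)_i=\lambda(s)_i/w_i .
\]
Hence the weighted eviction rate is
\[
R(s)=\sum_{i\ne p,\ \lambda(s)_i<1}\lambda(s)_i .
\]
We aim to show $R(s)\ge \tfrac13$ for all $s<\send$. Since $\sum_i\lambda(s)_i=n-k$ and $\lambda(s)_p>\delta$ before the end, let $m$ be the number of saturated coordinates (those equal to $1$, excluding $p$). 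Then
\[
R(s)= n-k-m-\lambda(s)_p .
\]
If $m\le n-k-1$, then $R(s)\ge 1-\lambda(s)_p$. This is only useful when $x_p$ is bounded away from $1$, so the cases need care.

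The cleaner route is a counting argument. The unsaturated coordinates other than $p$ carry total antimass $R$, and the cached mass is $k=\sum_i(1-\lambda_i)$. There are $n-m-1$ unsaturated coordinates other than $p$, so
\[
k=(1-\lambda_p)+(n-m-1)-R .
\]
Suppose $R<\tfrac13$. Since each unsaturated coordinate is at least $\delta$, the number of unsaturated coordinates other than $p$ is at most $R/\delta<(2k+1)/3$. On the other hand, that number equals $k+R+\lambda_p$, which is at least $k$. This forces $k<(2k+1)/3$, i.e. $k<1$, a contradiction. So $R\ge\tfrac13$, and this is exactly where the choice $\delta=1/(2k+1)$ enters.

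Integrating the rate bound gives
\[
\dur{\lambda}\le 3\sum_{i\ne p}w_i\bigl(\lambda(\send)_i-x_i\bigr),
\]
that is, each length is at most three times the weighted eviction of that service.

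\textbf{Global accounting.} Sum the rate bound over $\nu\in\Crit$ and bound the total weighted eviction of critical services by total weighted fetching plus $\sum_i w_i$. For this, track the potential $\Psi=\sum_i w_i x_i\in[0,\sum_i w_i]$. Critical services increase $\Psi$ by (eviction $-$ fetch). Preemptive services leave $\Psi$ unchanged: they open a hole, perform round trips, and restore $x^{\mathrm{start}}$. The initial adjustment also changes $\Psi$ by at most $\sum_i w_i$. Telescoping therefore gives
\[
\sum_{\nu}\mathrm{eviction}(\lambda^{(\nu)})\le C_{\fetch}^{\crit}+\sum_i w_i ,
\]
and multiplying by $3$ yields the lemma.

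\textbf{Main obstacle.} The delicate step is the rate lower bound $R(s)\ge\tfrac13$: it requires handling the saturated coordinates correctly and using $\lambda_p>\delta$ and $\lambda_i\ge\delta$ for all $i$. I expect the counting argument above to settle it. I would also verify that $\Psi$ is unchanged by preemptive services and that the boundedness of $\Psi$ absorbs boundary effects.
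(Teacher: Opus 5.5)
Your route is the paper's. You prove a pointwise lower bound of $\frac13$ on the weighted eviction rate, integrate it to $\dur{\lambda}\le3\,\EV(\lambda)$ (\Cref{lem:clock}), and then telescope $\sum_iw_ix_i$ over critical services, using that preemptive services return to their start state (\Cref{lem:evfetch}). The global accounting is correct as written.

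The rate bound, however, has an off-by-one error at its key step. Let $N=n-m-1$ be the number of unsaturated pages other than $p$. Your identity $k=(1-\lambda_p)+N-R$ is correct, but it gives $N=k-1+\lambda_p+R$, not $k+\lambda_p+R$. Read as a statement about real numbers, $k-1+\lambda_p+R\ge k$ needs $\lambda_p+R\ge1$. Under your hypothesis $R<\frac13$ this fails whenever $\lambda_p$ is close to $\delta$. So ``which is at least $k$'' is unjustified, and the contradiction does not follow. For instance, with $k=1$ the real-number constraints $N=\lambda_p+R$ and $R\ge N\delta$ only give $R\ge\lambda_p/2\ge\frac16$, which is not enough for the constant $3$.

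The missing observation is integrality. Since $\lambda_p\ge\delta>0$ and $R\ge0$, you have $N>k-1$, and $N$ is an integer, so $N\ge k$. This is exactly the paper's count: $n-k-m=R+\lambda_p>0$ forces $m\le n-k-1$. Hence at least $k$ pages other than $p$ are unsaturated, each with antimass at least $\delta$. This gives $R\ge k\delta=\frac{k}{2k+1}\ge\frac13$ directly, without a proof by contradiction. With this repair your proof is complete and coincides with the paper's.
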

Algorithms incur two types of costs: delay costs and fetch costs.
\PMD generates fetch cost during critical and preemptive services.
Since we want to reduce to the fetch costs of critical services, we only need to consider delay costs and the fetch costs of preemptive services.
\PMD's preemptive services are paid for by the budgets~\(B_W\) and therefore incur at most the total increase of~\(B_W\) as fetch costs.
Since every critical service increases the~\(B_W\) by at most its length, the total deposit is at most the length of critical service and we can apply \Cref{lem:lengthfetch}.

Using~\(d_r(\tau_r-a_r)\) as the delay cost of \PMD for a request~\(r\) is not sufficient for our analysis, as we discuss in \Cref{subsec:bounding_fractional_delay_cost}, where we also define the correct notion.

Writing~\(C_\fetch\) and~\(C_\delayof\) for the fractional fetch and delay costs of \PMD, defined below, we formally show
\begin{equation}
    \label{eq:proof_goal_sec_reduction}
    C_\fetch
    \le C_{\fetch}^{\crit} + \sum_{\nu\in\Crit}\dur{\lambda^{(\nu)}}
    \qquad\text{and}\qquad
    C_\delayof
    \le C_\fetch + 5\sum_{\nu\in\Crit} \dur{\lambda^{(\nu)}},
\end{equation}
and \Cref{lem:lengthfetch} converts the lengths into fetch cost of critical services.
We begin by proving \Cref{lem:lengthfetch}.

\subsection{Proof of \texorpdfstring{\Cref{lem:lengthfetch}}{Lemma~\ref*{lem:lengthfetch}}}\label{sec:length}
A fractional algorithm traverses a finite sequence of states~$x^{(0)},x^{(1)},\dots,x^{(N)}\in[0,1]^n$.
For \PMD,~$x^{(0)}$ is the initial state, for a critical service with path~\(\lambda\) there is a state~$x^{(i)} = \lambda(\send)$ and for every preemptive service there is a state after opening the hole, a state when the page of a round trip is cached, a state after returning, and the state after closing the hole.
All these states are in the order in which they occur in \PMD.
We call the move from~$x^{(\nu)}$ to~$x^{(\nu+1)}$ \emph{step~$\nu$}.
If it is a critical service to a page~$p^{(\nu)}$, we call step~$\nu$ \emph{critical} and write~$\lambda^{(\nu)}\coloneqq\lpath{x^{(\nu)}}{p^{(\nu)}}$ for its path.
Let~$\Crit$ be the set of critical steps.
\begin{definition}[fractional fetch cost]\label{def:fracfetch}
    The \emph{fractional fetch cost} of a fractional algorithm traversing the states~$x^{(0)},\dots,x^{(N)}$ is
    \[
        C_{\fetch}\coloneqq\sum_{\nu<N}\sum_iw_i\big(x^{(\nu)}_i-x^{(\nu+1)}_i\big)_+.
    \]
    The fractional fetch cost of the critical steps is~$C_{\fetch}^{\crit}\coloneqq\sum_{\nu\in\Crit}\sum_iw_i\big(x^{(\nu)}_i-x^{(\nu+1)}_i\big)_+$.
\end{definition}
In particular, every critical step~$\nu$ contributes exactly~$\fetch(\lambda^{(\nu)})$, so the fractional fetch cost of critical steps is~$C_{\fetch}^{\crit}=\sum_{\nu\in\Crit}\fetch(\lambda^{(\nu)})$.

We prove \Cref{lem:lengthfetch} in two steps.
First, we compare the length of each critical service with the weighted eviction during it and, second, we compare the total weighted eviction during critical services with their total fetch cost.
We define the weighted eviction during an interval~$J\subseteq[0,\dur{\lambda}]$ of a critical service~\(\lambda\) as
\begin{equation}\label{eq:ev}
    \EV(J)\coloneqq
    \int_J \|\lambda'_+(t)\|_{\ell_1(w)}\dd t,
\end{equation}
and write~$\EV(\lambda)\coloneqq\EV([0,\send])$ for the total weighted eviction along~$\lambda$.
\begin{lemma}\label{lem:clock}
    For every state~$x\in[\delta,1]^n$ with~$\sum_ix_i=n-k$ and every page~$p$,~$\dur{\lpath{x}{p}}\le3\,\EV(\lpath{x}{p})$.
\end{lemma}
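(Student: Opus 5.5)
The plan is to bound the rate of weighted eviction along the path from below, pointwise in~$s$, and then integrate. Fix~$x$ and~$p$, write~$\lambda=\lpath{x}{p}$, and restrict attention to~$s\in[0,\send)$. The coordinates~$\lambda(s)_i=\min\{1,x_ie^{s/w_i}\}$ for~$i\ne p$ are nondecreasing and piecewise smooth. Each has at most one kink, at the moment it saturates at~$1$. The coordinate~$\lambda(s)_p$ is the only one that decreases. Hence, outside finitely many~$s$,
\[
    \|\lambda'_+(s)\|_{\ell_1(w)}=\sum_{i\in U(s)} w_i\cdot\frac{\lambda(s)_i}{w_i}=\sum_{i\in U(s)}\lambda(s)_i,
\]
where~$U(s)\coloneqq\{i\ne p:\lambda(s)_i<1\}$ is the set of pages other than~$p$ that are not yet fully evicted. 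It therefore suffices to show~$\sum_{i\in U(s)}\lambda(s)_i\ge\frac13$ for all~$s<\send$. Integrating over~$[0,\send]$ then gives~$\send\le3\,\EV(\lambda)$.

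To bound this sum, I use the constraint~$\sum_i\lambda(s)_i=n-k$, which holds by definition of~$\lambda(s)_p$. Every~$i\notin U(s)\cup\{p\}$ has~$\lambda(s)_i=1$, and there are~$n-1-|U(s)|$ of them. This yields the identity
\[
    \sum_{i\in U(s)}\lambda(s)_i \;=\; |U(s)|-(k-1)-\lambda(s)_p .
\]
For~$s<\send$ we have~$\lambda(s)_p>\delta>0$ by minimality of~$\send$ and monotonicity of~$\lambda(\cdot)_p$. Also, every coordinate stays at least~$\delta$, because it starts at least~$\delta$ and the coordinates~$i\ne p$ only increase. I then split into cases on~$|U(s)|$.
\begin{itemize}
    \item If~$|U(s)|\le k-1$, the identity makes the left-hand side at most~$-\lambda(s)_p<0$. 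This is impossible for a sum of positive terms, so~$|U(s)|\ge k$ always.
    \item If~$|U(s)|\ge k+1$, the identity gives a rate of at least~$2-\lambda(s)_p\ge1$.
    \item If~$|U(s)|=k$, I use the lower bound~$\lambda(s)_i\ge\delta$ instead, which gives a rate of at least~$k\delta=\frac{k}{2k+1}\ge\frac13$ for~$k\ge1$.
\end{itemize}
In every case the rate is at least~$\frac13$, which proves the claim.

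The only delicate point is the case~$|U(s)|=k$, where the identity alone gives just~$1-\lambda(s)_p$. That bound can be arbitrarily small near the start of the path, when~$x_p$ is close to~$1$. This case is what forces the coordinate-wise lower bound and explains why the choice~$\delta=\frac1{2k+1}$ yields the constant~$3$. The remaining technical step is to justify integrating the pointwise rate. This works because~$\lambda$ is Lipschitz and piecewise smooth, with finitely many kinks where pages saturate.
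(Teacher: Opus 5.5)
Your proof is correct and follows essentially the same route as the paper. Both arguments write the weighted eviction rate as the sum of the antimasses of the unsaturated pages $i\ne p$, use $\sum_i\lambda(s)_i=n-k$ to force at least $k$ such pages, and bound each by $\delta$ to get a rate of at least $k\delta\ge\frac13$. Your separate case $|U(s)|\ge k+1$ is unnecessary, since the bound $|U(s)|\,\delta\ge k\delta$ already covers it; the paper argues it that way.
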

\begin{proof}
    Let~$\lambda\coloneqq\lpath{x}{p}$ and let~$E(s)$ be the set of pages fully evicted at~$s$,
    \begin{equation}\label{def:evicted}
        E(s)\coloneqq\{i\neq p:\lambda(s)_i=1\}.
    \end{equation}
    For~$i\notin E(s)\cup\{p\}$,~$\lambda'(s)_i=\lambda(s)_i/w_i$, and all other coordinates are constant or nonincreasing, so~$\|\lambda_+'(s)\|_{\ell_1(w)}=\sum_{i\notin E(s)\cup\{p\}}\lambda(s)_i$.
    By definition of the mirror descent path~$\lambda(s)_i\ge\delta$ for all~\(i\) and~$s\le\dur{\lambda}$, so if there were at least~\(n-k\) pages in~\(E(s)\) then~\(\sum_i\lambda(s)_i \ge n-k + k\delta > n-k\), but the definition of~$\lambda(s)_p$ ensures~$\sum_i\lambda(s)_i=n-k$.
    Thus, there can be at most~$n-k-1$ pages in~$E(s)$ and consequently, there are at least~$k$ pages~$i\notin E(s)\cup\{p\}$.
    Hence, it holds for all~$s\le\dur{\lambda}$ that
    \[
        \|\lambda_+'(s)\|_{\ell_1(w)}
        =\sum_{i\notin E(s)\cup\{p\}}\lambda(s)_i
        \ge\sum_{i\notin E(s)\cup\{p\}}\delta
        \ge k\delta
        =\frac{k}{2k+1}\ge\frac13.
    \]
    Integrating over~$[0,\dur{\lambda}]$ gives~$\EV(\lambda)\ge\dur{\lambda}/3$.
\end{proof}
We argue that the total weighted eviction during critical services differs from their total fetch cost by at most an additive term independent of the input.
\begin{lemma}\label{lem:evfetch}
    Over the entire execution, the total weighted eviction during critical services satisfies
    \[
        \sum_{\nu\in\Crit}\EV(\lambda^{(\nu)})
        \le C_{\fetch}^{\crit}+\sum_iw_i.
    \]
\end{lemma}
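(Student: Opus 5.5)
The plan is a potential argument with the weighted antimass $\Psi(x)\coloneqq\sum_i w_i x_i$ over the whole execution, comparing increases and decreases step by step. During a critical step $\nu$, only $x_p$ decreases, and all other coordinates are nondecreasing. Hence the change of $\Psi$ along $\lambda^{(\nu)}$ equals $\EV(\lambda^{(\nu)})-\fetch(\lambda^{(\nu)})$. This holds because $\EV$ integrates exactly the weighted positive parts of $\lambda'$, and the negative part is concentrated on coordinate $p$ with total $w_p(x_p-\delta)$.

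Next, consider the preemptive services. Each one returns to its starting state: closing the hole restores $x^{\mathrm{start}}$, and every round trip restores $x_q$. So the net change of $\Psi$ over a preemptive service is zero. Summing over all steps, the total change $\Psi(x^{(N)})-\Psi(x^{(0)})$ equals $\sum_{\nu\in\Crit}\big(\EV(\lambda^{(\nu)})-\fetch(\lambda^{(\nu)})\big)$. The one-time initial movement to a state in $[\delta,1]^n$ mentioned in the footnote can be absorbed by taking $x^{(0)}$ to be the state after it.

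Finally, I bound the telescoped difference. Since all states lie in $[\delta,1]^n$ by \Cref{lem:inv}, we have $\Psi(x^{(N)})\le\sum_i w_i$ and $\Psi(x^{(0)})\ge 0$. Rearranging gives
\[
\sum_{\nu\in\Crit}\EV(\lambda^{(\nu)})=C_{\fetch}^{\crit}+\Psi(x^{(N)})-\Psi(x^{(0)})\le C_{\fetch}^{\crit}+\sum_i w_i,
\]
using that $C_{\fetch}^{\crit}=\sum_{\nu\in\Crit}\fetch(\lambda^{(\nu)})$.

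The only delicate point is the identity $\Psi(\lambda(\send))-\Psi(\lambda(0))=\EV(\lambda)-\fetch(\lambda)$. It needs $\lambda$ to be absolutely continuous, and it needs coordinates $i\neq p$ to be monotone nondecreasing. Both follow from the closed form $\min\{1,x_ie^{s/w_i}\}$, which is piecewise smooth with finitely many kinks. Given monotonicity, the positive and negative parts of $\lambda'$ separate cleanly into eviction and fetch of $p$.
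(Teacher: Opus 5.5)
Your proof is correct and is essentially the paper's argument. The paper likewise writes $\EV(\lambda)-\fetch(\lambda)=\sum_i w_i\big(\lambda(\send)_i-\lambda(0)_i\big)$ for each critical service and telescopes over the whole execution, using that preemptive services end in their starting state; you just make the weighted-antimass potential $\Psi(x)=\sum_i w_i x_i$ explicit.
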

\begin{proof}
    Let~$\lambda$ be the path of some critical service to~$p$.
    By definition,~$\lambda(s)_i$ is nondecreasing for~$i \ne p$ and only~$\lambda(s)_p$ decreases, so the integral defining~$\EV$ evaluates to
    \[
        \EV(\lambda)
        =\sum_{i\ne p}w_i\big(\lambda(\send)_i-\lambda(0)_i\big)
        \qquad\text{and}\qquad
        \EV(\lambda)-\fetch(\lambda)
        =\sum_iw_i\big(\lambda(\send)_i-\lambda(0)_i\big).
    \]
    Summed over all critical services, the right-hand side telescopes to at most~$\sum_i w_i$, since every preemptive service ends in the state from before it (proof of \Cref{lem:inv}).
    Thus, over the entire execution, the total weighted eviction is~$\sum_{\nu\in\Crit}\EV(\lambda^{(\nu)})\le C_{\fetch}^{\crit}+\sum_iw_i$.
\end{proof}
Together, the two lemmas yield \Cref{lem:lengthfetch}.
\lengthfetch*
\begin{proof}
    By \Cref{lem:clock,lem:evfetch},
    \[
        \sum_{\nu\in\Crit}\dur{\lambda^{(\nu)}}
        \le 3\sum_{\nu\in\Crit}\EV(\lambda^{(\nu)})
        \le 3\cdot C_{\fetch}^{\crit} + 3\sum_i w_i.
        \qedhere
    \]
\end{proof}

\subsection{Bounding Fetch Cost}\label{subsec:bounding_fetch_cost}
In this section we show the first inequality of \eqref{eq:proof_goal_sec_reduction}, i.e., we bound the fractional fetch cost of \PMD in terms of the length of critical services.
The critical steps contribute~$C_{\fetch}^{\crit}$ to~$C_{\fetch}$ and the remaining fetch cost is incurred by preemptive services.
\begin{proposition}\label{prop:preemptfetch}
    The fractional fetch cost of \PMD satisfies
    \[
        C_{\fetch}
        \le C_{\fetch}^{\crit} + \sum_{\nu\in\Crit}\dur{\lambda^{(\nu)}}.
    \]
\end{proposition}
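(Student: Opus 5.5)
We split $C_{\fetch}$ into the contributions of critical steps and of the steps belonging to preemptive services. The critical steps contribute exactly $C_{\fetch}^{\crit}$ by \Cref{def:fracfetch}. So it remains to bound the fetch cost of all preemptive steps by $\sum_{\nu\in\Crit}\dur{\lambda^{(\nu)}}$. We do this through the budgets. We show that every preemptive service at level $W$ incurs fetch cost at most the amount it subtracts from $B_W$. Since $B_W\ge0$ throughout (\Cref{lem:inv}(a)), the total subtracted from $B_W$ is at most the total deposited into it. Summing over levels, the total deposit of a critical service with path $\lambda$ is $\sum_{W\ge w_p}\durW{W}{\lambda}\le\dur{\lambda}$, because the intervals $I_W$ are disjoint subsets of $[0,\infty)$.

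\textbf{Fetch cost of one preemptive service.} A preemptive service at level $W$ consists of four kinds of steps.
\begin{itemize}
\item Opening the hole only increases coordinates, so its fetch cost is $0$.
\item The round trip to $q$ consists of a step from $x_q$ down to $\delta$ and a step back up. Only the first decreases a coordinate, and it costs exactly $w_q(x_q-\delta)$, which is exactly the amount subtracted from $B_W$.
\item Closing the hole moves from the state $y$ after the round trips, which equals the post-opening state, back to $x^{\mathrm{start}}$. Only pages $i$ with $w_i\le W$ decrease, each by $\theta(1-x^{\mathrm{start}}_i)$. The total decrease in mass is therefore $\theta\sum_{w_i\le W}(1-x^{\mathrm{start}}_i)=1-\delta$, and its weighted cost is at most $W(1-\delta)$, which is what is subtracted from $B_W$.
\end{itemize}
One point needs care: the round trips return each $x_q$ to its value after opening the hole, so the state just before closing is indeed the post-opening state. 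This is where the ordering of states in \Cref{sec:length} matters.

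\textbf{Putting it together.} The fetch cost of all preemptive steps at level $W$ is at most the total decrease of $B_W$. The final value of $B_W$ equals its total deposits minus its total decreases and is nonnegative, so the total decrease is at most the total deposits. Summing over all levels $W$ and exchanging the order of summation gives
\[
\sum_W \sum_{\nu\in\Crit,\ W\ge w_{p^{(\nu)}}}\durW{W}{\lambda^{(\nu)}}\le\sum_{\nu\in\Crit}\dur{\lambda^{(\nu)}}.
\]
We expect no real obstacle here. The only delicate point is the accounting for closing the hole, namely confirming that the weighted decrease is bounded by $W(1-\delta)$ using $w_i\le W$. Everything else is bookkeeping.
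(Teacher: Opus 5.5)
Your proof is correct and follows the paper's argument. You bound the fetch cost of each preemptive service (round trips plus closing the hole) by what it subtracts from $B_W$, use $B_W\ge0$ to pass to total deposits, and use disjointness of the intervals $I_W$ to bound each critical service's deposit by $\dur{\lambda}$. Your explicit check that the state before closing equals the post-opening state, so the closing decrease is $\theta\sum_{w_i\le W}(1-x^{\mathrm{start}}_i)=1-\delta$ at weight at most $W$, just spells out what the paper states in one line.
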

\begin{proof}
    As discussed above, \PMD incurs fetch cost during critical and preemptive services.
    Thus, it suffices to bound the latter by the length~\(\sum_{\nu\in\Crit}\dur{\lambda^{(\nu)}}\) of critical services.

    Opening the hole and returning from a round trip only increase coordinates of~$x$, so a preemptive service at level~$W$ incurs fetch cost only during round trips and to restore the original state, both of which are paid for by a decrease in~$B_W$:
    Round trips on pages~\(q\) with~\(w_q \le W\) are only performed as long as~\(B_W \ge W\) and paid for immediately, while closing the hole costs at most~\(W(1-\delta)\), which was paid at the start of the preemptive service.
    Thus, it suffices to bound the total decrease of all budgets, or, because~\(B_W\ge0\), the total increase.

    A critical service for a page~$p$, with path~$\lambda$, increases the budget~$B_W$ by~$\durW{W}{\lambda}$ for all~$W\ge w_p$.
    Since the intervals~$I_W$ are disjoint, the total increase is at most~$\dur{\lambda}$.
    Over the entire execution, the increase is hence bounded by~$\sum_{\nu\in\Crit}\dur{\lambda^{(\nu)}}$, which bounds the contribution of preemptive services to~$C_{\fetch}$, and the claim follows.
\end{proof}

\subsection{Bounding Fractional Delay Cost}\label{subsec:bounding_fractional_delay_cost}
Having bounded~\(C_{\fetch}\), it remains to show the second inequality in \eqref{eq:proof_goal_sec_reduction}, i.e.,
\[
    C_\delayof
    \le C_\fetch+5\sum_{\nu\in\Crit}\dur{\lambda^{(\nu)}}.
\]
To define the fractional delay cost~\(C_{\delayof}\), consider some request~\(r\) and let~\(\tau_r\) be the time at which~\(r\) is served, i.e., the first time since the release for which~\(x_{p_r} = \delta\).
We could naively define the delay cost to be~\(d_r(\tau_r-a_r)\), but that would ignore the fact~\(p_r\) may already be cached with some probability when~\(r\) is released.
In this case, some realizations of the randomized algorithm would not incur any delay cost at all thus lowering the expected cost of the algorithm.
Hence, the naive definition does not represent the expected delay cost of the randomized algorithm.
Instead, we weigh the delay by the factor~$\alpha_r$ representing the antimass of~$p_r$ when~$r$ arrives, i.e., the probability of the request not being served upon arrival.
\begin{definition}[fractional delay cost]\label{def:fracdelay}
    For a request~$r$, let~$x$ be the state of \PMD right before~\(r\) arrives at time~$a_r$.
    Let~$\alpha_r\coloneqq x_{p_r}-\delta$ be the missing mass on~\(p_r\) at that time and let~$\tau_r$ be the time at which~$r$ is served.
    The \emph{fractional delay cost} of~$r$ is~\(\alpha_r\cdot d_r(\tau_r-a_r)\) and the \emph{total fractional delay cost} of \PMD is defined as
    \[
        C_{\delayof}\coloneqq\sum_r\alpha_rd_r(\tau_r-a_r).
    \]
\end{definition}
We will show later that this definition captures the delay cost of the randomized algorithm corresponding to \PMD up to a factor of~$3$ (\Cref{prop:rounding}).
We want to consider the delay not as a sum over requests but as a sum over steps.
To this end, we define a \emph{visit to~\(p\)} as a step in which a request to \(p\) is served, write~$\Visit$ for the set of visits, and make the following observation, which is immediate by the algorithm's definition.
\begin{observation}\label{obs:monotone}
    Between two visits of a page~$p$ its antimass~$x_p$ never decreases (outside of the temporary movements during preemptive service).
    In particular, every request~$r$ on~$p$ satisfies~$x_p-\delta\ge\alpha_r$ until it is served.
\end{observation}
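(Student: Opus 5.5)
The plan is to go through the steps of \PMD that can change~$x_p$ and check that, outside the temporary movements of preemptive service, none of them decreases~$x_p$ unless it is a visit to~$p$. There are three kinds of permanent change to the state.
\begin{itemize}[label={}]
    \item \emph{Critical services to a page~$p'\neq p$.} By definition of the mirror descent path $\lambda=\lpath{x}{p'}$, the coordinate $\lambda(s)_p=\min\{1,x_pe^{s/w_p}\}$ is nondecreasing in~$s$. Hence $x_p$ can only grow.
    \item \emph{Critical services to~$p$ itself.} These end with $x_p=\delta$, so they serve every pending request on~$p$ and are visits to~$p$.
    \item \emph{Preemptive services.} Opening the hole only increases coordinates. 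A round trip to~$q$ sets $x_q$ to~$\delta$ and back. Closing the hole restores $x^{\mathrm{start}}$ exactly. Either the net effect on~$x_p$ is zero, or $p$ is the page of a round trip, in which case $x_p$ reaches~$\delta$ and the pending requests on~$p$ are served, so that step is a visit to~$p$.
\end{itemize}
This covers all steps of Algorithm~\ref{alg:PMD}, since between steps $x$ is constant. That proves the first sentence of \Cref{obs:monotone}.

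For the second sentence, fix a request~$r$ on~$p$ with arrival time~$a_r$ and let~$x$ be the state right before arrival, so $\alpha_r=x_p-\delta$. There are two cases.
\begin{itemize}[label={}]
    \item If $x_p=\delta$ at some instant at or after~$a_r$ before any further movement, then $r$ is served immediately. The claim holds vacuously or trivially, since $\alpha_r\le x_p-\delta$ holds at that moment.
    \item Otherwise, the first time after~$a_r$ that $x_p$ reaches~$\delta$ is~$\tau_r$. By definition, this is the next visit to~$p$. By the first part, on $[a_r,\tau_r)$ the antimass~$x_p$ is nondecreasing, ignoring the transient hole and round-trip states of other pages. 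So it stays at least its value at arrival, which gives $x_p-\delta\ge\alpha_r$.
\end{itemize}

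The only subtle point is what happens during a preemptive service when $p$ itself lies in the eligible set. Opening the hole raises~$x_p$, and the round trip then drops it to~$\delta$. This is fine because that round trip is exactly the visit that serves~$r$; it is also why the statement excludes temporary movements. I expect no real obstacle beyond carefully enumerating these cases against the pseudocode.
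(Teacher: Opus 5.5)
Your proposal is correct and takes the paper's approach: the paper states this observation as immediate from the algorithm's definition, and your case check is exactly that verification. Critical services to other pages only raise $x_p$ along $\min\{1,x_pe^{s/w_p}\}$, critical services and round trips to $p$ are visits, and every preemptive service restores $x^{\mathrm{start}}$. The one point you leave implicit is that a critical service to $p$ always has a pending request on $p$ and is therefore a visit; this holds because $p$ only turns critical once $\delayof_p(t)\ge w_p>0$.
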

Right before \PMD visits a page~\(p\) in a step~\(\nu\), the overshoot is~\(o_p = \big(\delayof_p(t)-c_p-w_p\big)_+\), so the delay of all pending requests to~\(p\) is bounded by
\[
    \sum_{r\in R_\nu} d_r(t - a_r)
    = \delayof_p(t)
    \le w_p + c_p + o_p,
\]
where~\(R_\nu\) is the set of all requests served in step~\(\nu\).
In particular, we can bound the fractional delay cost of all~\(r\in R_\nu\) as
\[
    \sum_{r\in R_\nu} \alpha_r \cdot d_r(\tau_r - a_r)
    \stackrel{\text{Obs.}~\ref{obs:monotone}}{\le}
    \sum_{r\in R_\nu} (x_p - \delta) \cdot d_r(\tau_r - a_r)
    \le (x_p-\delta)\cdot(w_p + c_p + o_p)
    \stackrel{x_p\le1}{\le} (x_p-\delta)\cdot w_p + c_p + o_p,
\]
i.e., the delay cost of all requests served with a visit is at most the cost of fetching the page, its claimed delay and its overshoot.
Since all requests are served in some step, we can sum this over all visits~\(\nu\) to a page~\(p^{(\nu)}\) to get
\begin{align*}
    C_{\delayof}
    = \sum_{\nu\in\Visit}\sum_{r\in R_\nu} \alpha_r \cdot d_r(\tau_r - a_r)
     & \le\sum_{\nu\in\Visit} \left((x_{p^{(\nu)}}-\delta)\cdot w_{p^{(\nu)}} + c_{p^{(\nu)}} + o_{p^{(\nu)}}\right) \\
     & \le C_{\fetch} + \sum_{\nu\in\Visit} (c_{p^{(\nu)}} + o_{p^{(\nu)}}).
\end{align*}
With this, it suffices to bound the last sum in terms of~\(\sum_{\nu\in\Crit} \dur{\lambda^{(\nu)}}\).
This is easy for the summand~\(c_p\).
Because~\(c_p\) is set to~\(0\) after every visit, any increase in~\(c_p\) is counted at most once and the claims only increase by~\(\dur{\lambda}/L\) in total for a critical service with path~\(\lambda\).
Thus, the entire sum is bounded by~\(\sum_{\nu\in\Visit} c_{p^{(\nu)}} \le \sum_{\nu\in\Crit}\dur{\lambda^{(\nu)}}/L\).

Bounding~\(\sum_{\nu} o_p\) is more involved.
We charge the overshoot paid by all visits at a time~$t$ to the lengths of the critical services that \PMD performs at the same time~$t$.

Let~$x$ be the state right before the steps of \PMD at time~$t$.
At time~\(t\) for every page~$p$ with~\(o_p > 0\), we have~$\sum_{q: w_q\le w_p} o_q \le \frac1L\dur{\lpath{x}{p}}$ because otherwise a critical service would have been triggered earlier.
If \PMD performed only a single critical service to a page~\(p\) at time~$t$ then~\(o_p\) would be bounded by~\(\frac1L\dur{\lpath{x}{p}}\) and we would be done.

The difficulty is that \PMD may perform many services at time~$t$.
After the first critical service, on a page~$p$, the state jumps from~$x$ to some state~$\tilde x$.
The bound on the overshoot of page~$q$ still refers to the critical service~$\lpath{x}{q}$, whereas the service that \PMD actually performs on~$q$ starts in~$\tilde x$, and~$\dur{\lpath{\tilde x}{q}}$ might be smaller than~$\dur{\lpath{x}{q}}$.
However, the jump from~$x$ to~$\tilde x$ is caused by a critical service to~$p$.
It therefore suffices to relate the length change between~$\dur{\lpath{x}{q}}$ and~$\dur{\lpath{\tilde x}{q}}$ to the length of the critical service to~$p$.
To do so, we start by showing an analog to the triangle inequality for lengths of critical services.
\begin{lemma}\label{lem:drop}
    Let~\(p\) be a page and let~\(x\) and~\(\tilde x\) be the states before and after a critical service to~\(p\) is performed.
    Let~\(q\ne p\) be another page with~\(x_q>\delta\).
    Then performing a critical service to~\(q\) from~\(x\) is no more expensive than performing the service to~\(p\) and then the service to~\(q\) from~\(\tilde x\), i.e.,~\(\dur{\lpath{x}{q}}\le\dur{\lpath{x}{p}} + \dur{\lpath{\tilde x}{q}}\).
\end{lemma}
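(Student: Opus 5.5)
The plan is to compare the two services coordinatewise through the closed form of the mirror descent path, reducing the claim to monotonicity of a scalar function. Fix $q$. For a state $y\in[\delta,1]^n$ with $\sum_i y_i=n-k$, let
\[
F_y(s)\coloneqq\sum_{i\ne q}\min\{1,\ y_ie^{s/w_i}\}.
\]
By definition of $\lpath{y}{q}$ we have $\lpath{y}{q}(s)_q=n-k-F_y(s)$. Hence $\dur{\lpath{y}{q}}=\min\{s\ge0: F_y(s)\ge n-k-\delta\}$. Each summand is continuous and nondecreasing in $s$, so $F_y$ is too. Consequently, to show $\dur{\lpath{x}{q}}\le a+b$ it suffices to exhibit one parameter $s=a+b$ with $F_x(a+b)\ge n-k-\delta$.

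Set $a\coloneqq\dur{\lpath{x}{p}}$ and $b\coloneqq\dur{\lpath{\tilde x}{q}}$, where $\tilde x=\lpath{x}{p}(a)$. By \Cref{lem:inv}, $\tilde x\in[\delta,1]^n$ and $\sum_i\tilde x_i=n-k$, so $b$ is well defined. If $\tilde x_q=\delta$, then $b=0$, and the definition still gives $F_{\tilde x}(b)=n-k-\delta$. In every case $F_{\tilde x}(b)= n-k-\delta$, by continuity and minimality of $b$. The key step is then the termwise inequality
\[
\min\{1,\ x_ie^{(a+b)/w_i}\}\ \ge\ \min\{1,\ \tilde x_ie^{b/w_i}\}\qquad\text{for all } i\ne q.
\]

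I would verify this inequality in two cases.
\begin{itemize}
\item For $i\ne p,q$, the critical service to $p$ gives $\tilde x_i=\min\{1,x_ie^{a/w_i}\}$. Then $\min\{1,\tilde x_ie^{b/w_i}\}=\min\{1,x_ie^{(a+b)/w_i}\}$, since the cap at $1$ commutes with multiplying by $e^{b/w_i}\ge1$. So equality holds.
\item For $i=p$, we have $\tilde x_p=\delta\le x_p$, because $x\in[\delta,1]^n$. Hence $\tilde x_pe^{b/w_p}\le x_pe^{(a+b)/w_p}$, and the inequality follows.
\end{itemize}
Summing the termwise inequality gives $F_x(a+b)\ge F_{\tilde x}(b)=n-k-\delta$. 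This means $\lpath{x}{q}(a+b)_q\le\delta$, and therefore $\dur{\lpath{x}{q}}\le a+b$.

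The only delicate point I anticipate is the $p$-coordinate. Along $\lpath{x}{q}$ page $p$ is being evicted, while in the two-step route it was first fetched down to $\delta$. The inequality goes the right way precisely because $\delta$ is the floor of the state space. The hypotheses $x_q>\delta$ and $q\ne p$ serve only to make $\lpath{x}{q}$ a genuine service from $x$, with $q$ the decreasing coordinate in both paths.
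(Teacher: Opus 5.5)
Your proof is correct and follows the paper's argument essentially step for step. The paper likewise observes that for $i\notin\{p,q\}$ the service to $q$ from $\tilde x$ continues the path of the service to $p$, i.e.\ $\lpath{\tilde x}{q}(s)_i=\lpath{x}{q}(s+\dur{\lpath{x}{p}})_i$, and that $\tilde x_p\le x_p$ settles the $p$-coordinate; it then sums over $i\ne q$ to get $\lpath{x}{q}(\dur{\lpath{x}{p}}+\dur{\lpath{\tilde x}{q}})_q\le\delta$ and concludes by continuity and monotonicity. Your side case $\tilde x_q=\delta$ cannot occur, since $\tilde x_q\ge x_q>\delta$, but treating it does no harm.
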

\begin{proof}
    Let~$\lambda\coloneqq\lpath{x}{q}$ and~$\tilde\lambda\coloneqq\lpath{\tilde x}{q}$, so~$\lambda(s)_i=\min\{1,x_ie^{s/w_i}\}$ and~$\tilde\lambda(s)_i=\min\{1,\tilde x_ie^{s/w_i}\}$ for~$i\ne q$.

    For a page~$i\notin\{p,q\}$, the critical service~\(\lpath{\tilde x}{q}\) continues the mirror descent path that~\(\lpath{x}{p}\) took: if~$i$'s antimass reached~$1$, it stays constant.
    Otherwise,~$\tilde x_i=x_i e^{\dur{\lpath{x}{p}}/w_i}$, so~$\tilde\lambda(s)_i=x_i e^{(s+\dur{\lpath{x}{p}})/w_i}$.
    In particular,~$\tilde\lambda(s)_i= \lambda(s+\dur{\lpath{x}{p}})_i$ for all~$s\ge0$.

    For~\(p\), the two mirror descent paths have different starting points~$\tilde x_p\le x_p$, and since~$p$'s weight does not change, the paths always satisfy~$\tilde\lambda(s)_p\le\lambda(s)_p\le\lambda(s+\dur{\lpath{x}{p}})_p$.
    At~$\dur{\lpath{\tilde x}{q}}$, we have
    \[
        n-k-\delta
        =\sum_{i\ne q}\tilde\lambda(\dur{\lpath{\tilde x}{q}})_i
        \le\sum_{i\ne q}\lambda(\dur{\lpath{\tilde x}{q}}+\dur{\lpath{x}{p}})_i=n-k-\lambda(\dur{\lpath{\tilde x}{q}}+\dur{\lpath{x}{p}})_q,
    \]
    so~$\lambda(\dur{\lpath{\tilde x}{q}}+\dur{\lpath{x}{p}})_q\le\delta$.
    As~$s\mapsto\lambda(s)_q$ is continuous and nonincreasing, and~$\dur{\lpath{x}{q}}$ is defined as the first~$s$ with~$\lambda(s)_q=\delta$, it follows that~$\dur{\lpath{x}{q}}\le\dur{\lpath{\tilde x}{q}}+\dur{\lpath{x}{p}}$.
\end{proof}
Using the above lemma, we can now bound the fractional delay cost of \PMD in terms of the fetch cost and the length of critical services.
\begin{proposition}\label{prop:algdelay}
    The fractional delay cost of \PMD satisfies~$C_{\delayof} \le C_\fetch + 5\sum_{\nu\in\Crit}\dur{\lambda^{(\nu)}}$.
\end{proposition}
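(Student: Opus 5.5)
We build on the reduction already carried out above the statement:
\[
    C_{\delayof}\le C_{\fetch}+\sum_{\nu\in\Visit}\big(c_{p^{(\nu)}}+o_{p^{(\nu)}}\big),
\]
together with $\sum_{\nu\in\Visit}c_{p^{(\nu)}}\le\sum_{\nu\in\Crit}\dur{\lambda^{(\nu)}}/L$. Since $L=\log(2k+1)\ge\log 3>1$, the claim terms contribute at most $\sum_{\nu\in\Crit}\dur{\lambda^{(\nu)}}$. It therefore remains to show
\[
    \sum_{\nu\in\Visit}o_{p^{(\nu)}}\le 4\sum_{\nu\in\Crit}\dur{\lambda^{(\nu)}},
\]
and we do this time by time. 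Fix a time $t$ at which \PMD acts, and let $x$ be the state just before it acts. Only visits with $o_{p^{(\nu)}}>0$ matter. By the last-moment rule \eqref{eq:trigger}, for every page $q$ with $o_q>0$ at time $t$ we have
\[
    \sum_{q':w_{q'}\le w_q}o_{q'}\le\dur{\lpath{x}{q}}/L .
\]
Overshoots also change during the steps at time $t$: claims increase and are reset on visits. Each such change only decreases a given $o_q$ until that page is visited, and after a visit $c_q$ and $o_q$ drop because the delay is gone. So it suffices to bound the overshoots evaluated at the beginning of time $t$, each counted once per page visited.

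At time $t$, let the critical services be to $p_1,p_2,\dots,p_m$ in order, with start states $x=x^{1},x^{2},\dots$. The preemptive services in between return to their start state, so $x^{j+1}$ is exactly the state after the critical service to $p_j$ from $x^j$. Let $q^\ast$ be the heaviest page visited at time $t$ with positive overshoot. Then
\[
    \sum_{\text{visited }q}o_q\le\sum_{q:w_q\le w_{q^\ast}}o_q\le\dur{\lpath{x}{q^\ast}}/L .
\]
If $q^\ast$ is critically served, say $q^\ast=p_j$, we apply \Cref{lem:drop} iteratively along $p_1,\dots,p_{j-1}$. This requires $x^{i}_{q^\ast}>\delta$ for $i<j$, which holds because $q^\ast$ has not yet been visited. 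It gives
\[
    \dur{\lpath{x}{q^\ast}}\le\sum_{i\le j}\dur{\lpath{x^{i}}{p_i}},
\]
which is at most the total length of critical services at time $t$. Summing over $t$ then bounds the overshoot by the total critical length with a factor $1/L\le1$.

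The main obstacle is a $q^\ast$ that is visited only by a round trip in a preemptive service, rather than critically served. Here \Cref{lem:drop} does not apply directly. The plan is to use the budget side: any preemptive service at time $t$ requires $B_W\ge4W\ge4w_{q^\ast}$. We would bound $\dur{\lpath{x}{q^\ast}}$ by a chain of \Cref{lem:drop} steps up to the last critical service at time $t$, plus the length of a hypothetical critical service to $q^\ast$ from the current state. That last length is at most $\dur{\lpath{x}{q^\ast}}$ itself, so the real task is to exploit $o_{q^\ast}>0$ differently.

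Instead, I would bound the overshoot of a preemptively visited page $q$ by $\sum_{w_{q'}\le w_q}o_{q'}\le\dur{\lpath{x}{q}}/L$. By \eqref{eq:light_evicted_in_band}, the path to $q$ ends by parameter $Lw_q$ at the latest, since at that point all lighter pages are evicted... roughly $\dur{\lpath{x}{q}}\le L\,w_q\cdot O(1)$. That gives an overshoot of $O(w_q)$. We would then charge this $O(w_q)$ to the round trip's budget payment, or to the $4W$ reserve, noting that budget is ultimately deposited from critical lengths. Combining this with the previous case, and absorbing the constants so that $1+4$ appears, gives the factor $5$. Making this charging rigorous, and checking that each page is counted at most once per time, is the step I expect to be delicate.
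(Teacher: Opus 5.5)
Your reduction to $\sum_{\nu\in\Visit}o_{p^{(\nu)}}\le4\sum_{\nu\in\Crit}\dur{\lambda^{(\nu)}}$ matches the paper. So do the grouping by time $t$, the choice of the heaviest visited page $q^\ast$ with positive overshoot, and the critical case via iterated \Cref{lem:drop}; your factor $1/L$ there is even slightly tighter than the paper's $2/L$.

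The gap is the preemptive case. You leave it open, and the route you sketch fails. The claim $\dur{\lpath{x}{q^\ast}}=\O(Lw_{q^\ast})$, with $x$ the state at the start of time $t$, does not follow from \eqref{eq:light_evicted_in_band}. That equation only says that pages of weight at most $w_{q^\ast}$ are fully evicted by parameter $Lw_{q^\ast}$. The path ends there only if these pages hold at least $x_{q^\ast}-\delta$ cached mass, and nothing guarantees this at $x$. The typical situation is the opposite: before time $t$ the cache is full of heavy pages, so $\dur{\lpath{x}{q^\ast}}$ is governed by heavy weights and can exceed $Lw_{q^\ast}$ by an arbitrary factor. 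Only the critical service to a light page at time $t$ creates the light cached mass that makes level-$W$ preemptive service possible at all.

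The decomposition you wrote down and then abandoned is exactly the paper's route; what is missing is a bound on its last term. Let $\tilde x$ be the state in which the level-$W$ preemptive service visiting $q^\ast$ starts. Since preemptive services restore their start state, \Cref{lem:drop} chains from $x$ to $\tilde x$ through the critical services at time $t$.

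The key ingredient is the entry condition of this preemptive service, $\sum_{i:w_i\le W}(1-\tilde x_i)\ge1-\delta$. It implies that the pages $i\ne q^\ast$ with $w_i\le W$ hold at least $\tilde x_{q^\ast}-\delta$ cached mass. By \eqref{eq:light_evicted_in_band} all of this mass is evicted by parameter $LW$, hence $\dur{\lpath{\tilde x}{q^\ast}}\le LW$. The bound is $LW$, not $Lw_{q^\ast}$, because the guaranteed mass may sit on pages heavier than $q^\ast$. This bounds the overshoot at time $t$ by $\sum_{\nu\in\CritAt{t}}\dur{\lambda^{(\nu)}}+W\le\sum_{\nu\in\CritAt{t}}\dur{\lambda^{(\nu)}}+2W(1-\delta)$.

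The extra term must be charged to the hole-closing payment $W(1-\delta)$ of that preemptive service. This payment is a genuine decrease of $B_W$, and it is charged at most once, since each time has a single heaviest page. The total $H$ of these payments is at most the total deposit, which is at most $\sum_{\nu\in\Crit}\dur{\lambda^{(\nu)}}$. This gives $\sum_{\nu\in\Visit}o_{p^{(\nu)}}\le2\sum_{\nu\in\Crit}\dur{\lambda^{(\nu)}}+2H$, and with it the constant $5$.

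Neither of your proposed charging targets works:
\begin{itemize}
\item The round-trip payment $w_{q^\ast}(x_{q^\ast}-\delta)\le w_{q^\ast}$ can be far smaller than $W$, since the first level whose preemptive service reaches $q^\ast$ may be much larger than $w_{q^\ast}$.
\item The threshold $B_W\ge4W$ is a precondition, not an expenditure, so nothing can be charged against it.
\end{itemize}
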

\begin{proof}
    We follow the plan outlined above.
    Every request with~$\alpha_r>0$ is served by a visit of its page.
    Since no requests arrive during the steps of \PMD at a time~$t$, every page is visited at most once at time~\(t\).
    Consider a visit of~$p$ at time~$t$, and let~$x$ and~$c_p$ be the state and the claimed delay of~$p$ right before it.
    By \Cref{obs:monotone}, the served requests have~$\alpha_r\le x_p-\delta\le1$, and their delays sum to~$\delayof_p(t)\le w_p+c_p+o_p$, where~$o_p$ is the overshoot right before the steps at time~$t$.
    So the delay of all requests served by the visit is at most~$(x_p-\delta)\cdot\delayof_p(t) \le w_p(x_p-\delta)+c_p+o_p$ and therefore summing over all visits we get
    \[
        C_{\delayof}
        \le C_{\fetch} + \sum_{\nu\in\Visit} (c_{p^{(\nu)}} + o_{p^{(\nu)}})
        \le C_{\fetch} + \sum_{\nu\in\Crit}\dur{\lambda^{(\nu)}} + \sum_{\nu\in\Visit} o_{p^{(\nu)}}.
    \]
    The claim now follows from
    \[
        \sum_{\nu\in\Visit} o_{p^{(\nu)}}
        \le 4\sum_{\nu\in\Crit}\dur{\lambda^{(\nu)}},
    \]
    which we show by grouping the steps by the time at which they happen.
    As above, let~$x$ be the state right before the steps at some time~$t$ and let~$q$ be the heaviest page with~$o_q>0$ that is visited at time~$t$ (if no such page exists the sum of overshoots is zero and there is nothing to show).
    As \eqref{eq:trigger} was not satisfied for~$q$ before time~$t$ we have
    \[
        \sum_{q'\colon w_{q'}\le w_q}o_{q'}
        \le \frac1L\dur{\lpath{x}{q}}.
    \]
    Note that this sum contains the overshoots of all visits at time~\(t\) because~\(q\) is the heaviest page, so we aim to bound it by~\(\sum_{\nu\in\CritAt{t}}\dur{\lambda^{(\nu)}}\), where~$\CritAt{t}\subseteq\Crit$ is the set of critical steps at time~$t$.
    Let~$\tilde x$ be the state in which \PMD starts the service that visits~$q$, and let~$\nu_1,\dots,\nu_i$ be the critical steps at time~$t$ before this service.
    Since preemptive services end in the state in which they start,~$\tilde x$ is obtained from~$x$ by following~$\nu_1,\dots,\nu_i$.
    Applying \Cref{lem:drop}~$i$ times we have
    \[
        \frac1L\dur{\lpath{x}{q}}
        \le \frac1L\left(\sum_{j=1}^{i} \dur{\lambda^{(\nu_j)}}
        + \dur{\lpath{\tilde x}{q}}\right)
        \le \frac1L\left(\sum_{\nu\in\CritAt{t}}\dur{\lambda^{(\nu)}}
        + \dur{\lpath{\tilde x}{q}}\right).
    \]
    If the visit to~$q$ is a critical service, then we have
    \begin{equation}
        \label{eq:prop11_critical_service}
        \sum_{q'\colon w_{q'}\le w_q}o_{q'}
        \le \frac1L\left(\sum_{\nu\in\CritAt{t}}\dur{\lambda^{(\nu)}}
        + \dur{\lpath{\tilde x}{q}}\right)
        \le \frac2L\sum_{\nu\in\CritAt{t}}\dur{\lambda^{(\nu)}}
        \le 2\sum_{\nu\in\CritAt{t}}\dur{\lambda^{(\nu)}}.
    \end{equation}
    If~\(q\) is visited by a round trip of a level-\(W\) preemptive service, we show~\(\dur{\lpath{\tilde x}{q}}\le LW\).
    In other words, if we served~\(q\) with a critical service instead of a preemptive service, the path would have length at most~\(LW\).
    This holds precisely because we ensure light pages can be evicted to make space.
    Formally, preemptive service only happens if~\(\sum_{i:w_i \le W}(1-\tilde x_i) \ge 1-\delta\).
    Subtracting~\(1 - \tilde x_q\) we get that the remaining pages hold at least weight~\(\sum_{i\ne q:w_i \le W}(1-\tilde x_i) \ge \tilde x_q-\delta\).
    These pages are evicted along the mirror descent path and by \eqref{eq:light_evicted_in_band}, all of them are fully evicted at~\(\lpath{\tilde x}{q}(LW)\).
    Since~\(q\) is fetched by the amount that the other pages are evicted, it is fully loaded and the path ends, i.e.,~\(\dur{\lpath{\tilde x}{q}}\le LW\).
    Plugging this in gives
    \begin{equation}
        \label{eq:prop11_preemptive_service}
        \sum_{q'\colon w_{q'}\le w_q}o_{q'}
        \le \frac1L\left(\sum_{\nu\in\CritAt{t}}\dur{\lambda^{(\nu)}}
        + \dur{\lpath{\tilde x}{q}}\right)
        \le \sum_{\nu\in\CritAt{t}}\dur{\lambda^{(\nu)}} + W
        \le \sum_{\nu\in\CritAt{t}}\dur{\lambda^{(\nu)}} + 2W(1-\delta),
    \end{equation}
    where we used~$L\ge1$ and~$\delta\le\frac12$.
    Note that~\(W(1-\delta)\) is the payment of this preemptive service for closing its hole.
    If we now sum \eqref{eq:prop11_critical_service} or \eqref{eq:prop11_preemptive_service} for every time~\(t\), depending on how the heaviest page is served, we get
    \[
        \sum_{\nu\in\Visit}o_{p^{(\nu)}}
        \le 2\sum_{\nu\in\Crit}\dur{\lambda^{(\nu)}} + 2H,
    \]
    where~\(H\) is the total payment of all preemptive services for closing the hole.
    Since~\(H\) is paid from the budgets~\(B_W\), it is bounded by their total increase, which (as we already argued in the proof of \Cref{prop:preemptfetch}) is bounded by~\(\sum_{\nu\in\Crit}\dur{\lambda^{(\nu)}}\).

    Putting everything together we find
    \[
        C_{\delayof}
        \le C_{\fetch} + \sum_{\nu\in\Crit}\dur{\lambda^{(\nu)}} + \sum_{\nu\in\Visit} o_{p^{(\nu)}}
        \le C_{\fetch} + 3\sum_{\nu\in\Crit}\dur{\lambda^{(\nu)}} + 2H
        \le C_{\fetch} + 5\sum_{\nu\in\Crit}\dur{\lambda^{(\nu)}}.\qedhere
    \]
\end{proof}

\subsection{Combined bound}
Finally, we combine the bounds on the fetch and delay cost in \eqref{eq:proof_goal_sec_reduction} with \Cref{lem:lengthfetch} to reduce the analysis of \PMD to bounding the fetch cost of critical services.
\begin{proposition}\label{prop:cost}
    If for every problem instance~$C_{\fetch}^{\crit}\le\rho\cdot\OPT+c$ for some~\(\rho, c\ge 0\), then the fractional fetch cost~$C_{\fetch}$ and the total fractional delay cost~$C_{\delayof}$ of \PMD satisfy
    \[
        C_{\fetch}+C_{\delayof}\le23\rho\cdot\OPT+c',\qquad c'\coloneqq23c+21\sum_iw_i.
    \]
\end{proposition}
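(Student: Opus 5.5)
The proof is a direct combination of \Cref{prop:preemptfetch}, \Cref{prop:algdelay} and \Cref{lem:lengthfetch}, followed by substituting the hypothesis $C_{\fetch}^{\crit}\le\rho\cdot\OPT+c$. Throughout, write $\Lambda\coloneqq\sum_{\nu\in\Crit}\dur{\lambda^{(\nu)}}$ for the total length of critical services and $w\coloneqq\sum_i w_i$.

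\textbf{Step 1 (collect the bounds in terms of $C_{\fetch}^{\crit}$ and $\Lambda$).} \Cref{prop:algdelay} gives $C_{\delayof}\le C_{\fetch}+5\Lambda$, so
\[
C_{\fetch}+C_{\delayof}\le 2C_{\fetch}+5\Lambda .
\]
\Cref{prop:preemptfetch} gives $C_{\fetch}\le C_{\fetch}^{\crit}+\Lambda$. Substituting,
\[
C_{\fetch}+C_{\delayof}\le 2C_{\fetch}^{\crit}+7\Lambda .
\]

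\textbf{Step 2 (remove $\Lambda$).} \Cref{lem:lengthfetch} gives $\Lambda\le 3C_{\fetch}^{\crit}+3w$. Hence
\[
C_{\fetch}+C_{\delayof}\le 2C_{\fetch}^{\crit}+21C_{\fetch}^{\crit}+21w = 23\,C_{\fetch}^{\crit}+21w .
\]

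\textbf{Step 3 (apply the hypothesis).} Inserting $C_{\fetch}^{\crit}\le\rho\cdot\OPT+c$ yields
\[
C_{\fetch}+C_{\delayof}\le 23\rho\cdot\OPT+23c+21w = 23\rho\cdot\OPT+c' ,
\]
which is exactly the claimed constant.

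Every step is a substitution of earlier results, so there is no genuine obstacle. The only point needing care is substituting in the right order: first eliminate $C_{\fetch}$ in favour of $C_{\fetch}^{\crit}$ and $\Lambda$, and only then convert $\Lambda$ via \Cref{lem:lengthfetch}. This ordering is what produces the coefficients $23$ and $21$.
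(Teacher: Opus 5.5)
Your proposal is correct and is exactly the paper's proof: the paper uses the same chain $C_{\fetch}+C_{\delayof}\le 2C_{\fetch}+5\Lambda\le 2C_{\fetch}^{\crit}+7\Lambda\le 23\,C_{\fetch}^{\crit}+21\sum_i w_i$ via \Cref{prop:algdelay}, \Cref{prop:preemptfetch} and \Cref{lem:lengthfetch}, and then inserts the hypothesis. One small remark: the substitution order you single out does not actually matter, since all the bounds are linear with nonnegative coefficients and any order yields the same constants $23$ and $21$.
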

\begin{proof}
    We combine our results to get
    \begin{align*}
        C_{\fetch}+C_{\delayof}
         & \stackrel{\text{Prop~\ref{prop:algdelay}}}{\le}2\cdot C_{\fetch} + 5\sum_{\nu\in\Crit}\dur{\lambda^{(\nu)}}              \\
         & \stackrel{\text{Prop~\ref{prop:preemptfetch}}}{\le} 2\cdot C_{\fetch}^{\crit} + 7\sum_{\nu\in\Crit}\dur{\lambda^{(\nu)}}
        \stackrel{\text{Lem~\ref{lem:lengthfetch}}}{\le} 23\cdot C_{\fetch}^{\crit} + 21\sum_i w_i,
    \end{align*}
    and the claim follows by the assumption~$C_{\fetch}^{\crit}\le\rho\cdot\OPT+c$.
\end{proof}

\section{The Bregman potential}\label{sec:bregman}

As shown in the previous section (\Cref{prop:cost}), it suffices to bound~\(C_{\fetch}^{\crit}\) in terms of \OPT.
To do so, we use a potential function argument based on the \emph{Bregman divergence}.
For a state~$x$ let
\[
    D(\OPT\|x)\coloneqq\sum_{i\notin \OPT}w_i\,g(x_i)+\sum_{i\in \OPT}w_ix_i,
    \qquad g(u) \coloneqq u-1-\log u.
\]
This is the Bregman divergence of the weighted entropy function~$\sum_iw_ix_i\log x_i$ between \OPT's state and~$x$ and can be used as a measure for the difference between \PMD's and \OPT's cache.
We use the Bregman divergence as a potential for analyzing \PMD and denote it~$D$ for short when the state~$x$ is clear from context.
For a critical service with path~$\lambda$, we let~$\Delta D(\lambda)$ denote the change of~$D$ during this service.

To bound the total cost of \PMD against \OPT's cost, note first that we can equivalently charge \OPT's evictions instead of its fetches.
Here, we write $\OPT_0$ for the initial cache of \OPT.
\begin{observation}\label{obs:evictions}
    The total cost of \OPT's evictions is at most the fetch cost of \OPT plus~$w(\OPT_0)$, since each eviction is either of a previously fetched page or of a page of the initial cache.
\end{observation}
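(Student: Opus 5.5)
The plan is a direct accounting argument over \OPT's schedule. Since \OPT starts with a full cache and, by the laziness assumption, every eviction is immediately followed by a fetch, each eviction of a page $i$ at some time $t$ is of a page that is cached right before $t$. I will attribute each eviction to the event that most recently brought that copy of page $i$ into \OPT's cache.

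The key step is a case distinction on how the evicted copy of $i$ entered the cache.
\begin{itemize}
\item If $i$ was fetched by \OPT at some earlier time $t'<t$, and not evicted in between, map the eviction at $t$ to that fetch. \OPT pays $w_i$ for it, so the eviction is covered.
\item If there is no such fetch, $i$ has been cached continuously since time $0$, so $i\in\OPT_0$. Map the eviction to the initial copy of $i$.
\end{itemize}

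Next I check that this map is injective. Between two evictions of the same page $i$ there must be a fetch of $i$, because the page has to be re-cached before it can be evicted again. Hence each fetch is the image of at most one eviction, namely the first eviction of $i$ after it. Likewise, each page of $\OPT_0$ is the image of at most one eviction, namely its first eviction if it was never fetched before.

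Summing weights over the injective map gives the bound: the total eviction cost is at most \OPT's fetch cost plus $\sum_{i\in\OPT_0}w_i=w(\OPT_0)$.

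The only subtle point, and the main thing to handle carefully, is that \OPT may pass through several configurations at a single instant. I would order the cache operations at one time sequentially, as in the paper's convention. The "most recent entry" argument then applies to this sequential order. It works unchanged as long as the same page is never fetched and evicted within the same configuration change, and a lazy schedule never does that.
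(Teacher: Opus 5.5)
Your proposal is correct and is the paper's own one-line argument: charge each eviction of page $i$ to the most recent fetch of $i$, or to its initial copy in $\OPT_0$. You simply spell out the injectivity that the paper leaves implicit. The laziness assumption and your worry about simultaneous configuration changes are both unnecessary: an eviction is always of a cached page, and a change from $C$ to $C'$ evicts $C\setminus C'$ and fetches $C'\setminus C$, which are disjoint for any schedule.
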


To begin the potential function argument, we first observe that the potential~$D$ increases by at most~$\O(L)$ per unit of cost that \OPT incurs.
\begin{lemma}\label{lem:optmoves}
    The total increase of~$D$ caused by \OPT's moves is at most~$\O(L)\cdot\OPT+\O(L\sum_i w_i)$.
\end{lemma}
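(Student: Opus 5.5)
Since \OPT is lazy (every eviction is immediately followed by a fetch), each move of \OPT swaps one page out of and one page into its cache while \PMD's state $x$ stays fixed. So the plan is to bound the change of $D(\OPT\|x)$ for a single eviction of a page $i$ and a single fetch of a page $j$. Afterwards we sum over all moves and apply \Cref{obs:evictions}.

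For a fixed state $x\in[\delta,1]^n$, the contribution of page $i$ to $D$ is $w_ix_i$ when $i\in\OPT$ and $w_ig(x_i)$ when $i\notin\OPT$. We treat the two cases separately.

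\begin{itemize}
\item \emph{Eviction of $i$.} The contribution changes from $w_ix_i$ to $w_ig(x_i)$, so the increase is
\[
w_i\bigl(g(x_i)-x_i\bigr)=w_i(-1-\log x_i)\le w_i\log\tfrac1{x_i}\le w_i\log\tfrac1\delta=w_iL,
\]
using $x_i\ge\delta$ from \Cref{lem:inv}(a).
\item \emph{Fetch of $j$.} The contribution changes from $w_jg(x_j)$ to $w_jx_j$, which is a change of $w_j(1+\log x_j)\le w_j$, since $x_j\le1$. So a fetch increases $D$ by at most $w_j\le Lw_j$.
\end{itemize}

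We may assume $L\ge1$, which holds because $k\ge1$ gives $L=\log 3$ or larger.

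Summing these bounds over all of \OPT's moves, the total increase of $D$ is at most $L$ times the total eviction weight plus the total fetch cost of \OPT. By \Cref{obs:evictions}, the total eviction weight is at most the fetch cost of \OPT plus $w(\OPT_0)\le\sum_iw_i$. \OPT's fetch cost is at most $\OPT$. Together this gives an increase of at most $(L+1)\OPT+L\sum_iw_i=\O(L)\cdot\OPT+\O(L\sum_iw_i)$.

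One point needs care: \OPT's moves happen at instants at which \PMD's state is some $x\in[\delta,1]^n$, by the order convention and \Cref{lem:inv}(a). We must check that the bound $x_i\ge\delta$ holds at those instants, including during the temporary states of preemptive service. This is fine, because \OPT moves before \PMD at each time, so \PMD is then in a between-steps state.

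There is no real obstacle here. The only subtle point is that the logarithmic blow-up of $g$ near $0$ is exactly what is controlled by keeping $x\ge\delta$, and this is where the factor $L$ enters.
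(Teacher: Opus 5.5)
Your proof is correct and is essentially the paper's argument: an eviction of $i$ raises $D$ by $w_i(-1-\log x_i)\le Lw_i$, a fetch raises it by $w_i(1+\log x_i)\le w_i$, and \Cref{obs:evictions} converts eviction weight into \OPT's fetch cost plus $w(\OPT_0)$. Your extra check about the instants at which \OPT moves is harmless but unnecessary, since \Cref{lem:inv}(a) already gives $x\in[\delta,1]^n$ throughout the execution, including the temporary preemptive states.
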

\begin{proof}
    When \OPT evicts~$i$,~$D$ changes by
    $$
        w_i(g(x_i)-x_i)
        = w_i(-1-\log x_i)
        \le w_i\log\frac1\delta
        = L \cdot w_i.
    $$
    When it fetches~$i$,~$D$ changes by~$w_i(x_i-g(x_i))\le w_i$, so the claim follows by \Cref{obs:evictions}.
\end{proof}

Next, we analyze the potential change on a move of \PMD.
Recall that we only consider critical services.
For this, we make a case distinction based on whether the critical page is in \OPT's cache at the time of the service (so that it contributes only to the second sum of~$D$), or not (so that it contributes to the first sum).
\begin{definition}[hit and miss services]
    A critical service to page~$p$ is a \emph{hit service} if~$p\in \OPT$ and a \emph{miss service} if~$p\notin \OPT$.
    We use these terms also for its step, and write~$\Hit$ and~$\Miss$ for the sets of hit and miss steps, so that~$\Crit=\Miss \sqcup \Hit$.
    Their fractional fetch costs~$C_{\fetch}^{\hit}\coloneqq\sum_{\nu\in\Hit}\fetch(\lambda^{(\nu)})$ and~$C_{\fetch}^{\miss}\coloneqq\sum_{\nu\in\Miss}\fetch(\lambda^{(\nu)})$ then satisfy~$C_{\fetch}^{\crit}=C_{\fetch}^{\hit}+C_{\fetch}^{\miss}$.
\end{definition}

For miss services, it turns out that the potential can increase during the service, so that we have to bound this additional increase in our potential function argument.
We give a name to this excess term that needs to be bounded independently:

\begin{definition}[excess]
    For a miss service on page~$p$ in step~$\nu$, with~$w_{\min}(\OPT)$ the lightest weight in~$\OPT$, the \emph{excess} of this service is
    \[
        \Exc(\nu)\coloneqq
        \sum_{\substack{\mathclap{w_p\le W<w_{\min}(\OPT)}}}\durW{W}{\lambda^{(\nu)}}.
    \]
    The \emph{total excess} is~$\Exc\coloneqq\sum_{\nu\in\Miss}\Exc(\nu)$.
\end{definition}

This term corresponds to the excess term in \eqref{eq:bregman_guarantee} as~$1$ integrated over the path length.
However, we only account for the parts of the path that are in intervals~$I_W$ with~$w_p\leq W<w_{\min}(\OPT)$, and bound the remainder separately.

Finally, we compute the potential change during critical services.
The result says that the cost of a hit service is covered by a decrease in potential so that we only need to consider miss services.
\begin{proposition}\label{prop:phaseD}
    Consider a critical service with path~$\lambda$ and triggering page~$p$, performed in step~$\nu$.
    If the service is a hit service,~$\Delta D(\lambda)\le-\fetch(\lambda)$.
    If it is a miss service,~$\Delta D(\lambda)\le 2L\cdot w_p+\Exc(\nu)$.
\end{proposition}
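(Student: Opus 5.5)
The plan is to differentiate the potential $D$ along the explicit mirror descent path $\lambda=\lpath{x}{p}$ and bound the rate of change pointwise in $s$. The derivative exists for all but finitely many $s$, the kinks being the points where a page becomes fully evicted. Then integrate over $[0,\send]$.

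\textbf{Setting up the rate.} Let $A(s)$ be the set of pages $i\ne p$ with $\lambda(s)_i<1$. For these, $\lambda'(s)_i=\lambda(s)_i/w_i$, all other $i\ne p$ are constant, and $\lambda'(s)_p=-\sum_{i\in A(s)}\lambda(s)_i/w_i$. Write $x=\lambda(s)$.
\begin{itemize}
\item A page $i\in A\setminus\OPT$ contributes $w_ig'(x_i)x_i'=x_i-1$.
\item A page $i\in A\cap\OPT$ contributes $x_i$.
\item Page $p$ contributes $w_px_p'$ if $p\in\OPT$, and $w_p(1-1/x_p)x_p'$ if $p\notin\OPT$.
\end{itemize}
The fetch rate is $-w_px_p'$. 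The key tool is the identity $\sum_i(1-x_i)=k$, valid along the path since $\sum_i\lambda(s)_i=n-k$, combined with the fact that $\OPT$ holds exactly $k$ pages (it is lazy).

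\textbf{Hit case.} Here $p\in\OPT$, so $\OPT\setminus\{p\}$ has $k-1$ pages. From $\sum_{i\ne p}(1-x_i)=k-1+x_p$ we get
\[
\sum_{i\in A\setminus\OPT}(1-x_i)=\sum_{i\notin\OPT}(1-x_i)\ge k-1+x_p-\sum_{i\in\OPT\setminus\{p\}}(1-x_i)=x_p+\sum_{i\in\OPT\setminus\{p\}}x_i\ge\sum_{i\in A\cap\OPT}x_i .
\]
The first equality uses that pages outside $A\cup\{p\}$ have $1-x_i=0$. Hence the rate of $D$ is at most $w_px_p'$, and integrating gives $\Delta D(\lambda)\le -\fetch(\lambda)$.

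\textbf{Miss case.} The $p$-term equals $w_p(1/x_p-1)\cdot(-x_p')$. Its integral is $w_p(g(\delta)-g(x_p))\le w_pg(\delta)\le Lw_p$, using $g(\delta)=\delta-1+L\le L$.

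For the remaining terms, $\OPT$ now consists of $k$ pages other than $p$. The same identity gives
\[
\sum_{i\in A\setminus\OPT}(1-x_i)=\sum_{i\in\OPT}x_i-(1-x_p),
\]
so the non-$p$ part of the rate is at most $\sum_{i\in A\cap\OPT}x_i-\sum_{i\in\OPT}x_i+1\le1$. Moreover, once $s\ge Lw_{\min}(\OPT)$, \eqref{eq:light_evicted_in_band} shows that the lightest $\OPT$ page $i_0$ has $x_{i_0}=1$ and $i_0\notin A$. Then $\sum_{i\in\OPT}x_i\ge1+\sum_{i\in A\cap\OPT}x_i$, and the rate is at most $0$.

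So the non-$p$ terms contribute at most the measure of $[0,\send]\cap[0,Lw_{\min}(\OPT))$. Split this range as $[0,Lw_p)$, of length at most $Lw_p$, and the intervals $I_W$ with $w_p\le W<w_{\min}(\OPT)$. Because all weights are powers of two, these intervals tile $[Lw_p,Lw_{\min}(\OPT))$ exactly, and their contribution is precisely $\Exc(\nu)$. Summing gives $\Delta D(\lambda)\le 2Lw_p+\Exc(\nu)$.

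\textbf{Main obstacle.} The step most likely to need care is the miss-case bookkeeping. One must check that the rate bound $\le1$ holds even while $x_p$ is still large, and that the switch to a nonpositive rate occurs exactly at $Lw_{\min}(\OPT)$, aligned with the interval boundaries. This alignment relies on $w_{\min}(\OPT)$ being a power of two and on \eqref{eq:light_evicted_in_band} holding for every start state in $[\delta,1]^n$. Handling the kinks, where $A(s)$ changes, only needs piecewise integration, since $D\circ\lambda$ is continuous.
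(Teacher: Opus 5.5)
Your proof is correct and is essentially the paper's argument. You differentiate $D$ along the closed-form path, use the mass constraint together with $|\OPT|=k$ to show that the pages other than $p$ change $D$ at rate at most $0$ in the hit case, and at rate at most $1$ (at most $0$ once $s\ge L\,w_{\min}(\OPT)$) in the miss case; you then bound the $p$-term by $w_p g(\delta)\le Lw_p$ and split $[0,L\,w_{\min}(\OPT))$ into $[0,Lw_p)$ and the intervals $I_W$, with your bookkeeping via $\sum_i(1-x_i)=k$ being just a rephrasing of the paper's exact rate $\mathds{1}_{p\notin\OPT}-\lambda(s)_p-|E(s)\cap\OPT|$.
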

\begin{proof}
    Recall that we assume that \OPT's cache is constant during the service and that it contains exactly~$k$ pages.
    Moreover, recall that~$\fetch(\lambda) = w_p(x_p - \delta)$.

    The missing mass of all pages~$i\ne p$ is increased, which increases the second sum in~$D$ and decreases the first.
    Consider the rate of this change at a point~$s$ of the mirror descent path.
    First, observe that pages~$i\ne p$ with~$\lambda(s)_i=1$ are already fully evicted, so their missing mass~$\lambda(s)_i$ has a constant contribution to~$D$.
    Define the set of evicted pages~$E(s)$ as in \eqref{def:evicted}.
    The missing mass of a page~${i\notin E(s)\cup \{p\}}$ with~$\lambda(s)_i \ne 1$ increases at rate~$\lambda(s)_i/w_i$ (recall that we set~$\lambda(s)_i = x_ie^{s/w_i}$), which changes~$D$ at rate~$w_i\cdot \lambda(s)_i/w_i=\lambda(s)_i$ if~$i\in \OPT$ and at rate
    \[
        w_i\cdot g'(\lambda(s)_i)\cdot \frac{\lambda(s)_i}{w_i}
        = w_i\cdot \left(1-\frac{1}{\lambda(s)_i}\right)\cdot \frac{\lambda(s)_i}{w_i}= \lambda(s)_i - 1
    \]
    if~$i\notin \OPT$.
    Together, the pages~$i\ne p$ change~$D$ at rate
    \begin{align*}
         & \sum_{i\notin \OPT \cup \{p\}}(\lambda(s)_i-1)+\sum_{i \in \OPT \setminus(E(s)\cup\{p\})}\lambda(s)_i                                                     \\
         & \quad = \sum_{i\notin \OPT \cup \{p\}} \lambda(s)_i - (n-k - \mathds{1}_{p\notin \OPT}) + \sum_{i \in \OPT \setminus\{p\}}\lambda(s)_i - |E(s) \cap \OPT| \\
         & \quad = \sum_{i\ne p} \lambda(s)_i - (n-k - \mathds{1}_{p\notin \OPT}) - |E(s) \cap \OPT|                                                                 \\
         & \quad = \mathds{1}_{p\notin \OPT}-\lambda(s)_p-|E(s)\cap \OPT|.
    \end{align*}
    The function~$s\mapsto D(\OPT\|\lambda(s))$ is absolutely continuous, so we may integrate these rates to get the total change in~$D$.

    If the service is a hit service, then~$p\in \OPT$ contributes~$w_px_p$ to the second sum in~$D$, which decreases by~$\fetch(\lambda)$ during the service.
    The remaining pages change~$D$ at rate~$-\lambda(s)_p-|E(s)\cap \OPT|\le0$, and the claim follows.

    If it is a miss service, then~$p\notin \OPT$ contributes~$w_pg(x_p)$ to the first sum, hence changing~$D$ by
    \[
        w_p\big(g(\delta)-g(x_p)\big)\le w_pg(\delta)<Lw_p
    \]
    in total.
    It remains to bound the change caused by the pages~$i\ne p$ by~$Lw_p+\Exc(\nu)$.
    As we showed above, the remaining pages change~$D$ at rate
    \[
        h(s)\coloneqq 1-\lambda(s)_p-|E(s)\cap \OPT|\le 1.
    \]
    For~$s\ge L\cdot w_{\min}(\OPT)$, \eqref{eq:light_evicted_in_band} shows that the lightest page in \OPT is evicted, i.e., it lies in~$E(s)\cap\OPT$.
    Therefore,~$h(s)\le-\lambda(s)_p<0$.
    Hence the total change caused by the pages~$i\ne p$ is
    \[
        \int_0^{\dur{\lambda}}h(s)\,\dd s\le\big|[0,\dur{\lambda}]\cap[0,L\cdot w_{\min}(\OPT))\big|.
    \]
    Since~$[0,L\cdot w_{\min}(\OPT))\subseteq[0,Lw_p)\cup[Lw_p,L\cdot w_{\min}(\OPT))$, the right-hand side is at most
    \[
        Lw_p
        +\big|[0,\dur{\lambda}]\cap[Lw_p,L\cdot w_{\min}(\OPT))\big|
        =Lw_p
        +\sum_{w_p\le W<w_{\min}(\OPT)}\durW{W}{\lambda}
        = Lw_p+\Exc(\nu).
    \]
    This proves the claim.
\end{proof}

\section{Charging to \texorpdfstring{\OPT}{OPT}}\label{sec:charging}

While the potential decrease pays for the cost of hit services, the potential can increase during miss services.
In particular, the amortized fetch cost of a miss service on page~$p$ is~$\O(L)\cdot w_p+\Exc(\nu)$.
In this section, we argue that the first term can be charged to \OPT, and give a charging argument that will later be reused to bound the excess.

By definition of a miss service, \OPT does not cache the service's page~$p$ at the time of the service.
Moreover, since the service was triggered at all, the delay of the requests to~$p$ has reached at least its weight~$w_p$.
If \OPT has already served a request to~$p$ and evicted the page since, we can charge~$w_p$ to \OPT's eviction.
Otherwise, \OPT serves the requests to~$p$ later and pays at least their current delay~$w_p$.
Either way, we can charge the quantity~$w_p$ to \OPT.

To make this precise, we define a notion of chargeable visits.
This extends the argument from miss services also to preemptive round trips.

\begin{definition}[chargeable visits]\label{def:chargeable}
    A visit of a page~$p$ is \emph{chargeable} if, right before it,
    \begin{enumerate}[label=(\roman*)]
        \item\label{item:abandoned} $p\notin \OPT$ and \OPT has served some request to~$p$ that is pending for \PMD, or
        \item\label{item:optdelay} the total delay cost of \OPT for the requests to~$p$ that are pending for \PMD is at least~$w_p$.
    \end{enumerate}
\end{definition}

For visits satisfying \ref{item:optdelay}, it is easy to see that their weight can be charged to \OPT's delay cost.
For visits satisfying \ref{item:abandoned}, \OPT must have evicted the visited page after serving it.
As a consequence, we can charge the weight of the visit to \OPT as follows.
\begin{lemma}\label{lem:chargeable}
    The sum of the weights~$w_p$ of all chargeable visits on pages~$p$ is at most~$\OPT+w(\OPT_0)$.
\end{lemma}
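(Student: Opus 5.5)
We bound the chargeable visits by splitting them according to which condition of \Cref{def:chargeable} holds. For each type we construct an injective charging scheme: each visit of type~\ref{item:optdelay} is charged to a disjoint portion of \OPT's delay cost, and each visit of type~\ref{item:abandoned} is charged to a distinct eviction of \OPT. Summing gives at most \OPT's delay cost plus \OPT's total eviction cost. By \Cref{obs:evictions}, this is at most \OPT's delay cost plus its fetch cost plus~$w(\OPT_0)$, i.e.\ at most~$\OPT+w(\OPT_0)$. A visit satisfying both conditions is charged only once, say by~\ref{item:abandoned}.

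\textbf{Visits of type~\ref{item:optdelay}.} Such a visit to~$p$ serves all requests to~$p$ that are pending for \PMD right before it. We charge~$w_p$ to \OPT's delay cost for exactly this set of requests, which by assumption is at least~$w_p$. Every request is served by \PMD exactly once, so it is pending before at most one visit to its page. Hence the sets of requests charged by different visits are disjoint, and the total charge is at most \OPT's total delay cost.

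\textbf{Visits of type~\ref{item:abandoned}.} Right before the visit,~$p\notin\OPT$, and \OPT has served some request~$r$ on~$p$ that is still pending for \PMD. So \OPT had~$p$ cached at some time~$\tau\ge a_r$ before the visit and has evicted it since. Let~$e$ be the first eviction of~$p$ by \OPT after~$\tau$; it occurs after~$a_r$ and no later than the visit. We charge~$w_p$ to~$e$. For injectivity, suppose two visits~$\nu_1<\nu_2$ to~$p$ are charged to the same eviction~$e$. Let~$r_2$ be the witnessing request for~$\nu_2$. Then~$r_2$ is pending at~$\nu_2$, so~$a_{r_2}$ lies after~$\nu_1$, since~$\nu_1$ serves every request to~$p$ pending at that time. \OPT serves~$r_2$ after~$a_{r_2}$ by caching~$p$, so~$e$ must come after that caching, and hence after~$\nu_1$. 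But~$e$ also occurs no later than~$\nu_1$, a contradiction. Thus each eviction of \OPT receives at most one charge of weight equal to its own cost. The total is at most \OPT's eviction cost, which \Cref{obs:evictions} bounds by \OPT's fetch cost plus~$w(\OPT_0)$.

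\textbf{Main obstacle.} The delicate step is the injectivity in the type~\ref{item:abandoned} argument, specifically the ordering of events at a single time instant. It relies on the convention from \Cref{sec:problem}: at each time, requests arrive first, then \OPT updates its cache, then \PMD acts. This guarantees that ``\OPT has served~$r$ right before the visit'' really places \OPT's fetch and its subsequent eviction within the window from~$a_r$ to the visit. Several visits to~$p$ at the same time~$t$ cannot occur, since every page is visited at most once per time (as noted in the proof of \Cref{prop:algdelay}). Together these rule out double-charging an eviction.
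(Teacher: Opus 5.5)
Your proof is correct and follows the paper's argument. Like the paper, you charge condition~(ii) visits to disjoint sets of \OPT's delay, charge condition~(i) visits injectively to \OPT's evictions of $p$, and finish with \Cref{obs:evictions}. The only difference is which eviction you charge: the paper charges the last eviction of $p$ before the visit, whereas you charge the first eviction after \OPT serves the witness request. Both injectivity arguments rest on the same fact, namely that requests pending at the later visit arrived after the earlier visit.
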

\begin{proof}
    Let~$\mu$ be a visit on~$p$ at time~$t$ satisfying \ref{item:abandoned}.
    Charge the cost~$w_p$ of~$\mu$ to \OPT's last eviction of~$p$ before~$\mu$.

    Suppose two visits~$\mu_1$ before~$\mu_2$ on~$p$ satisfying \ref{item:abandoned} are charged to the same eviction~$e$.
    Then \OPT did not evict~$p$ between~$\mu_1$ and~$\mu_2$.
    Since~$p\notin \OPT$ upon~$\mu_2$, \OPT does not have~$p$ in its cache between~$\mu_1$ and~$\mu_2$.
    Note that every request pending for \PMD at~$\mu_2$ must arrive after~$\mu_1$.
    In particular, no request pending for \PMD at~$\mu_2$ has been served by \OPT at~$\mu_2$, contradicting \ref{item:abandoned} for~$\mu_2$.
    Therefore, each charged quantity~$w_p$ corresponds to a distinct eviction of \OPT and the total weight of visits satisfying \ref{item:abandoned} is bounded by the total weight of \OPT's evictions.
    By \Cref{obs:evictions}, this is at most the fetch cost of \OPT plus~$w(\OPT_0)$.

    Every other chargeable visit satisfies \ref{item:optdelay}, i.e., it serves requests for which \OPT pays at least~$w_p$ of delay, and distinct visits serve disjoint sets of requests, so these visits have total weight at most the delay cost of \OPT.
\end{proof}

We next argue that miss services are chargeable, so that we can bound the term~$\O(L)\cdot w_p$ of their amortized cost by \OPT's cost.

\begin{lemma}\label{lem:miss}
    $\sum_{\nu\in\Miss}w_{p^{(\nu)}}\le\OPT+w(\OPT_0)$.
\end{lemma}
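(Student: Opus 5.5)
The plan is to show that every miss service is a chargeable visit in the sense of \Cref{def:chargeable}, and then apply \Cref{lem:chargeable}. A critical service to~$p$ serves all pending requests for~$p$, so it is a visit of~$p$. The sum over miss services is then a subsum of the sum over chargeable visits, and is bounded by~$\OPT+w(\OPT_0)$.

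Fix a miss service in step~$\nu$ to page~$p=p^{(\nu)}$ at time~$t$. By definition of a miss service,~$p\notin\OPT$ right before the step. By the trigger rule \eqref{eq:trigger}, the overshoot~$o_p$ becomes positive immediately after~$t$, so~$\delayof_p(t)-c_p-w_p\ge 0$ at time~$t$ by continuity. Since~$c_p\ge0$, the pending requests on~$p$ for \PMD have total delay~$\delayof_p(t)\ge w_p$ at time~$t$. Let~$R$ be the set of requests to~$p$ that are pending for \PMD right before the step. I would then split into two cases.

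\emph{Case 1: \OPT has already served some request in~$R$ by time~$t$.} Since~$p\notin\OPT$ right before the step, this is exactly condition~\ref{item:abandoned}, so the visit is chargeable.

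\emph{Case 2: \OPT has served no request in~$R$ by time~$t$.} Then \OPT serves every~$r\in R$ at some time~$\tau\ge t$. It cannot be exactly~$t$ with~$p$ cached before \PMD acts, since~$p\notin\OPT$ right before the step; in any case the served time is at least~$t$. Because delay functions are nondecreasing, \OPT pays at least~$d_r(t-a_r)$ for each~$r\in R$. Summing gives at least~$\delayof_p(t)\ge w_p$, which is condition~\ref{item:optdelay}.

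The main subtlety is the timing convention: \OPT moves before \PMD at time~$t$, and ``right before'' must be read consistently in \Cref{def:chargeable} and in the hit/miss definition. One also needs the claim~$\delayof_p(t)\ge w_p$ to hold at the moment of the service itself, not only in the limit; this follows from continuity of delays together with the fact that~$c_p$ and~$x$ are constant between steps. With these checked, \Cref{lem:chargeable} gives~$\sum_{\nu\in\Miss}w_{p^{(\nu)}}\le\OPT+w(\OPT_0)$.
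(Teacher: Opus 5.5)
Your proposal is correct and matches the paper's proof essentially step for step. You show that every miss service is a chargeable visit: continuity at the trigger time, together with $c_p\ge0$ (which the paper leaves implicit), gives $\delayof_p(t)\ge w_p$; splitting on whether \OPT has already served a pending request to $p$ then yields condition \ref{item:abandoned} or \ref{item:optdelay}, and \Cref{lem:chargeable} finishes the argument.
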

\begin{proof}
    We show that every miss service on a page~$p$ at time~$t$ is a chargeable visit, the claim then follows from \Cref{lem:chargeable}.
    A page~\(p\) becomes critical at the last moment before~\eqref{eq:trigger} is satisfied, so by continuity we have~$\delayof_p(t)\ge w_p$.
    If \OPT has served some pending request to~$p$, then the service satisfies \ref{item:abandoned}, since~$p\notin\OPT$.
    Otherwise, \OPT pays at least~$\delayof_p(t)\ge w_p$ in delay cost, so the service satisfies \ref{item:optdelay}.
\end{proof}

We can combine the above lemmas to bound the total fetch cost through a potential function argument.
The fetch cost of hit services is covered by the potential decrease (\Cref{prop:phaseD}).
The potential increase caused by \OPT's moves is at most~$\O(L)$ times \OPT's cost (\Cref{lem:optmoves}).
The fetch cost of miss services, as well as the term~$\O(L)\cdot w_p$ from the potential increase can be charged to \OPT directly (\Cref{lem:miss}).
In order to bound the total fetch cost only in terms of \OPT, it then remains only to bound the total excess.

\begin{proposition}\label{prop:tele}
    $C_{\fetch}^{\crit} = \O(L)\cdot \OPT+\Exc+\O(L\sum_i w_i)$.
\end{proposition}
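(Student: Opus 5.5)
We prove \Cref{prop:tele} by a standard potential argument with the Bregman potential~$D=D(\OPT\|x)$. Since~$D\ge0$ and~$D$ is initially at most~$\O(L\sum_i w_i)$ (each coordinate contributes at most~$w_i\max\{g(\delta),1\}\le Lw_i$), the total decrease of~$D$ over the execution is at most its total increase plus~$\O(L\sum_iw_i)$. We therefore account for every change of~$D$ and sort it into three kinds: changes caused by \OPT's moves, changes during critical services of \PMD, and changes during preemptive services.

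First, preemptive services end in the state from which they started (proof of \Cref{lem:inv}). Moreover, by our ordering convention \OPT's cache is constant while \PMD acts at a time~$t$. Hence each preemptive service changes~$D$ by zero net, and we may ignore them. Changes caused by \OPT's moves are bounded in total by~$\O(L)\cdot\OPT+\O(L\sum_iw_i)$ by \Cref{lem:optmoves}.

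For critical services, we apply \Cref{prop:phaseD}. Each hit service satisfies~$\fetch(\lambda)\le-\Delta D(\lambda)$. Each miss service contributes an increase of at most~$2Lw_p+\Exc(\nu)$. It also incurs fetch cost~$\fetch(\lambda)=w_p(x_p-\delta)\le w_p$. Summing, we obtain
\[
C_{\fetch}^{\hit}\le\sum_{\nu\in\Hit}\bigl(-\Delta D(\lambda^{(\nu)})\bigr)\le (\text{total increase of }D)+\O\Bigl(L\sum_iw_i\Bigr),
\]
where the total increase of~$D$ is at most~$\O(L)\OPT+\O(L\sum_iw_i)+\sum_{\nu\in\Miss}(2Lw_{p^{(\nu)}}+\Exc(\nu))$. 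Together with~$C_{\fetch}^{\miss}\le\sum_{\nu\in\Miss}w_{p^{(\nu)}}$, this gives
\[
C_{\fetch}^{\crit}\le\O(L)\cdot\OPT+(2L+1)\sum_{\nu\in\Miss}w_{p^{(\nu)}}+\Exc+\O\Bigl(L\sum_iw_i\Bigr).
\]
Finally, \Cref{lem:miss} bounds~$\sum_{\nu\in\Miss}w_{p^{(\nu)}}\le\OPT+w(\OPT_0)\le\OPT+\sum_iw_i$, which yields the claim.

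The only delicate point is the bookkeeping of the telescoping. The potential change along a critical service has to be computed with \OPT's cache fixed, as assumed in \Cref{prop:phaseD}, and this is justified by the ordering convention and \OPT's laziness (\OPT's cache always has exactly~$k$ pages). We also need the initial value and nonnegativity of~$D$. Here we note that~$g\ge0$ on~$(0,1]$ and that the initial state after the footnote's adjustment lies in~$[\delta,1]^n$, so~$D_0\le L\sum_iw_i$. With these in place, the remaining steps are routine summations.
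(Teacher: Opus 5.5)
Your proof is correct and follows essentially the same argument as the paper. You telescope the Bregman potential~$D$, note that preemptive services have zero net effect, bound the increases via \Cref{lem:optmoves} and \Cref{prop:phaseD}, and absorb the $\sum_{\nu\in\Miss}w_{p^{(\nu)}}$ terms with \Cref{lem:miss}. The only difference is cosmetic: you bound the initial potential by $L\sum_i w_i$ instead of the paper's $(L+1)\sum_i w_i$, which is valid since $g(\delta)<L$ and $L=\log(2k+1)\ge1$.
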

\begin{proof}
    The potential~$D$ changes only in critical services and at \OPT's moves, since preemptive services return to the same~$x$ with \OPT's cache constant.
    By \Cref{lem:optmoves,prop:phaseD} and since~$D\geq 0$ throughout, the total decrease in potential is at least~$C_{\fetch}^{\hit}$, and the total increase is at most
    \[
        2L\cdot \sum_{\nu\in\Miss}w_{p^{(\nu)}}+\Exc+\O(L)\cdot \OPT+\O(L\sum_i w_i).
    \]
    Clearly, the total decrease is at most the total increase plus the initial potential.
    The initial potential is~$D\le(L+1)\sum_iw_i=\O(L\sum_i w_i)$ since~$g(x_i)<L$ and~$x_i\le1$.
    It follows that
    \[
        C_{\fetch}^{\hit} \leq 2L\cdot \sum_{\nu\in\Miss}w_{p^{(\nu)}}+\O(L)\cdot \OPT+\Exc+\O(L\sum_i w_i).
    \]
    Applying \Cref{lem:miss} to the above inequality and to~$C_{\fetch}^{\miss}\leq \sum_{\nu\in\Miss}w_{p^{(\nu)}}$ gives
    \[
        C_{\fetch}^{\crit} = C_{\fetch}^{\hit}+C_{\fetch}^{\miss} \leq \O(L)\cdot \OPT+\Exc+\O(L\sum_i w_i). \qedhere
    \]
\end{proof}

\section{Bounding the excess}\label{sec:excess}
In this section we bound the total excess in terms of \OPT.
Recall that the excess contributed by a miss service on page~$p$ with path~$\lambda$ is~$\durW{W}{\lambda}$ for each level~$W$ with~$w_p\leq W< w_{\min}(\OPT)$.
This is exactly the deposit of this service into the budget~$B_W$, or~$L$ times the overshoot it claims, by definition of the algorithm.
We bound the excess in two ways:
In case \OPT has already served light pages with overshoot, we argue that preemptive services see these light pages as candidates, and therefore spend their budget completely on requests that \OPT has already paid for.
It then suffices to bound the budgets.
Otherwise, we argue that \OPT has to pay for the overshoot through delay cost, and we charge the claimed overshoot to \OPT.
To this end, we first make this case distinction precise by defining \emph{overdue} services, which are those where \OPT has already served light pages with overshoot.
\begin{definition}[overdue pages and services]
    A page~$q$ is \emph{overdue} at time~$t$ if the pending requests~$R$ on~$q$ that \OPT has served satisfy~$\sum_{r\in R} d_r(t-a_r)>w_q$.
    A miss service to~$p$ is \emph{overdue} if some page~$q$ with~$w_q\le w_p$ is overdue right before it, and \emph{timely} otherwise.
    We use these terms also for its step, and write~$\Over$ and~$\Timely$ for the sets of overdue and timely steps.
\end{definition}

Intuitively, in a timely service, \OPT has not yet served the pending overshoot on light pages.
Therefore, \OPT also has to pay the overshoot.
By definition of the algorithm, the overshoot the service claims multiplied by~$L$ is at least the excess contributed by the service.
This way, the excess of timely services can be charged to \OPT's delay cost.
The following lemma formalizes this argument.

\begin{lemma}[timely services]\label{lem:timely}
    $\sum_{\nu\in\Timely}\Exc(\nu)\le L\cdot\OPT$.
\end{lemma}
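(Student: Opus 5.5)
We bound the excess of each timely service by $L$ times the overshoot it claims. We then show that claimed overshoot on pages that \OPT has not yet served, and which is never double counted, is paid by \OPT's delay.

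\textbf{Step 1: per-service bound.} Let $\nu$ be a timely miss service to $p$ with path $\lambda$. Since the intervals $I_W$ are disjoint, $\Exc(\nu)\le\sum_{W}\durW{W}{\lambda}\le\dur{\lambda}$. By definition of \FCrit{$p$}, the claims $c_q$ of pages $q$ with $w_q\le w_p$ increase by $\theta o_q\le o_q$ each, and by $\dur{\lambda}/L$ in total. Hence
\[
\Exc(\nu)\le L\cdot\sum_{q:w_q\le w_p}\Delta c_q,
\]
and it suffices to show that the total claim increase over timely services is at most \OPT's delay cost.

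\textbf{Step 2: claimed delay is unserved by \OPT and paid by it.} Fix a page $q$ with $w_q\le w_p$ whose claim increases at $\nu$ at time $t$. Since $\nu$ is timely, $q$ is not overdue. So the pending requests on $q$ that \OPT has already served, call them $R$, have total current delay $\sum_{r\in R}d_r(t-a_r)\le w_q$. The remaining pending requests $R'$ are still unserved by \OPT, so \OPT pays at least $\sum_{r\in R'}d_r(t-a_r)$ for them. This amount is at least $\delayof_q(t)-w_q$.

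Between resets of $c_q$, which happen at every visit of $q$, the claim never exceeds $\delayof_q-w_q$, because the claim increase is at most $o_q=(\delayof_q-c_q-w_q)_+$. The pending set of \PMD on $q$ is unchanged between visits, so the cumulative claim $c_q$ right after $\nu$ satisfies
\[
c_q\le \delayof_q(t)-w_q\le \sum_{r\in R'}d_r(t-a_r)\le \sum_{r\in R'}\OPT\text{'s delay on }r.
\]

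\textbf{Step 3: summation without double counting.} Group the claim increases on $q$ into epochs between consecutive visits of $q$. Distinct epochs concern disjoint sets of requests, namely those pending for \PMD in that epoch. Within an epoch, only the claim increases at timely services count, and their sum is at most the final cumulative claim $c_q$ at the last timely claim in the epoch. We use the bound of Step 2 at that last timely moment $t^*$, with $R'$ the requests still unserved by \OPT at $t^*$.

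The main obstacle is \emph{monotonicity}. Some requests may be served by \OPT between earlier claims and $t^*$, and other claims from overdue services may also contribute to $c_q$. We handle this by applying the Step 2 inequality to the total $c_q$ at $t^*$, which dominates the timely increases. We need the bound $c_q\le\delayof_q(t^*)-w_q$ together with the not-overdue property at $t^*$ itself, which holds since the service at $t^*$ is timely with $w_q\le w_p$.

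Summing over epochs and pages, the total timely claim is at most \OPT's total delay. Multiplying by $L$ gives $\sum_{\nu\in\Timely}\Exc(\nu)\le L\cdot\OPT$.
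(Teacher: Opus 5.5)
Your proof is correct and takes essentially the same route as the paper's: you bound each timely service's excess by $L$ times the overshoot it claims, split each page's timeline into epochs ending at visits, and bound all claims up to the last timely claiming service of an epoch by $\delayof_q-w_q$. You then use that $q$ is not overdue at that moment to charge this to \OPT's delay on requests it has not yet served (the paper phrases this last step as a contradiction, while you argue directly).

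One small inaccuracy: the pending set on $q$ is not unchanged between visits, since new requests can arrive and it only grows. You do not need that claim, though, because a positive claim at $\nu$ already gives $c_q\le\delayof_q(t)-w_q$ directly from the increase being at most $o_q=(\delayof_q(t)-c_q-w_q)_+$.
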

\begin{proof}
    By definition of the algorithm, every critical service with path~$\lambda$ claims~$\dur{\lambda}/L$ of delay and, if performed in step~$\nu$, contributes~$\Exc(\nu)\le\dur{\lambda}$ to the excess.
    Therefore, it suffices to show that the total amount claimed by timely services is at most \OPT.
    In particular, for every page~$q$, we measure the delay claimed from it between a visit of~$q$ or the start of the input and the last timely service that claims from it before its next visit, and show that this is at most \OPT.
    This accounts for all claims of timely services from~$q$, and possibly more.

    Fix a page~$q$ and divide the time into intervals~$(t_0, t_1]$ each ending with the next visit to~$q$.
    Since every request is eventually served, there are no requests to~$q$ after its last visit, so nothing can be claimed from~$q$ then and it suffices to consider the time intervals ending with a visit.
    Fix a time interval~$(t_0, t_1]$.
    Let~$\mu_{\rm last}$ be the last timely service in~$(t_0, t_1]$ that claims a positive amount from~$q$, and let~$t_{\rm last}\in (t_0, t_1]$ be its time.
    If there is no such service, then the total claimed from~$q$ by timely services during~$(t_0, t_1]$ is 0 and nothing has to be shown.

    Otherwise, since a claim increases~$c_q$ by at most its overshoot~$o_q$, the claims from~$q$ up to~$\mu_{\rm last}$ total at most~$\delayof_q(t_{\rm last})-w_q$, where~$\delayof_q$ refers to the requests pending right before~$\mu_{\rm last}$.
    Denote those requests by~$R$.
    Assume that \OPT pays delay cost less than~$\delayof_q(t_{\rm last})-w_q$ for the requests in~$R$.
    Then the total delay of requests in~$R$ that \OPT has served before~$\mu_{\rm last}$ is more than~$w_q$, so~$q$ is overdue right before~$\mu_{\rm last}$.
    Since~$\mu_{\rm last}$ claims a positive amount from~$q$, and a service to a page~$p$ claims only from pages of weight at most~$w_p$,~$w_q$ is at most the level of~$\mu_{\rm last}$.
    Hence~$\mu_{\rm last}$ is overdue, a contradiction.
    It follows that \OPT pays at least~$\delayof_q(t_{\rm last})-w_q$ in delay cost of~$R$, which is at least the total amount claimed from~$q$.
    The requests~$R$ are served by the visit at~$t_1$, so every request is counted for only one time interval, and the claim follows by summing over all time intervals.
\end{proof}

It remains to bound the excess of overdue services through the budgets.
As outlined, the excess can be bounded by the total deposit into the budgets.
To bound the total excess of overdue services, it therefore suffices to bound the total paid out of budget~$B_W$ plus the remaining budget.
In particular, we show that, as long as there is still an overdue service of level at most~$W$ to come, the budget~$B_W$ is spent on chargeable round trips, and the remaining budget is small.
We split this argument by \emph{periods} of \OPT, defined by the times between moves of \OPT.

First, we formally define periods.
\begin{definition}[periods]
    Let~$t_1<t_2<\dots$ be the times at which \OPT performs a walk of positive cost, and~$t_0\coloneqq-\infty$.
    \emph{Period~$j$} is the time interval~$[t_j,t_{j+1})$, with~$t_{j+1}\coloneqq\infty$ for the last period.
    \OPT's cache is constant from after its walk at~$t_j$ until before its walk at~$t_{j+1}$.
    We call this cache configuration~$\OPT_j$ (for~$j=0$, the initial cache).
\end{definition}

\begin{observation}\label{obs:servedbefore}
    Let \(r\) be a request that \OPT has already served at some time \(t\) in period \(j\).
    If \(p_r\notin\OPT_j\), then \OPT must have served the request before period \(j\) started, so \(a_r \le t_j\) and thus \(j \ge 1\).
\end{observation}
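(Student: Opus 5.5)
The argument is a short timing argument about \OPT's lazy schedule and its constant cache during period~$j$.

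First, I show that the request must have been served before period~$j$ began. Since \OPT has served $r$ by time $t$, there is a time $\tau\le t$ with $\tau\ge a_r$ at which $p_r$ lies in some cache configuration of \OPT. During period~$j$, that is, from after the walk at $t_j$ until the walk at $t_{j+1}>t$, \OPT's cache is constantly $\OPT_j$. This set does not contain $p_r$. Hence $\tau$ cannot lie strictly inside period~$j$ after the walk at $t_j$.

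The only remaining way for $\tau$ to lie in $[t_j,t]$ is during the walk at time $t_j$ itself, through some intermediate configuration. Lazy \OPT evicts only immediately before fetching. Since $p_r\notin\OPT_j$ at the end of the walk, a page $p_r$ appearing in an intermediate configuration at $t_j$ would then have been evicted again within the same walk. I will exclude this by assuming w.l.o.g.\ that \OPT never fetches a page it evicts within the same walk; such a fetch could be removed at no extra cost. If the paper intends configurations at $t_j$ to count as period~$j-1$'s cache boundary, the conclusion $a_r\le\tau\le t_j$ holds in any case. Either way, $\tau\le t_j$, so $a_r\le\tau\le t_j$.

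Finally, I conclude that $j\ge1$. If $j=0$, then $t_0=-\infty$, and $a_r\le t_0$ is impossible because $a_r\ge0$. Hence $j\ge1$.

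The only delicate point is the treatment of intermediate configurations at the instant $t_j$. The ordering convention stated in \Cref{sec:problem} handles it: requests arriving at $t_j$ arrive before \OPT moves, so serving at $t_j$ is consistent with $a_r\le t_j$.
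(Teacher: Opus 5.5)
Your proof is correct and follows the same timing argument the paper compresses into the statement itself. Since \OPT's cache equals $\OPT_j\not\ni p_r$ from after the walk at $t_j$ onward, the serving time satisfies $a_r\le\tau\le t_j$, and $t_0=-\infty$ together with $a_r\ge0$ forces $j\ge1$.

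Drop your w.l.o.g.\ claim that \OPT never fetches a page it evicts again within the same walk ``at no extra cost.'' It is false: such a fetch-and-evict is exactly how \OPT serves a page it does not keep, and removing it leaves requests unserved. It is also unnecessary, since, as you note yourself, a service during the walk at $t_j$ still has $\tau=t_j$.
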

We want to argue that most of the budget is spent on chargeable round trips, except for a small remainder.
To make this precise, we introduce the early part of a period w.r.t.\ some level~$W$ and prove that, during preemptive services of the early part, the budget~$B_W$ is drained (i.e., budget is spent mostly on round trips, and not only on closing the hole), and all preemptive round trips are chargeable.

\begin{definition}[early part]
    Let~$j$ be a period and~$W<w_{\min}(\OPT_j)$ a level.
    Let~$\lightstep{1}<\dots<\lightstep{m}$ be the steps in~$\Over$ of period~$j$ of level~$\le W$, and call their critical services the \emph{light services}.
    The \emph{early part} of period~$j$ consists of all services in period~$j$ up to, but not including, the last light service.
\end{definition}

Here, the light services are the overdue services that contribute to the excess at level~$W$.

We next show that preemptive services of the early part drain their budget on chargeable round trips.
The key argument behind this statement is that, during the early part, the last light service is still to come, and some light page~$q$ is overdue.
In particular, the deadline of~$q$ has already passed.
Therefore, level-$W$ preemptive services do not run out of candidates, and instead drain their budget on requests with early deadlines.

\begin{lemma}\label{lem:early}
    Every level-$W$ preemptive service in the early part of period~$j$ performs only chargeable round trips, and ends with~$B_W<W$.
\end{lemma}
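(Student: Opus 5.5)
Fix period~$j$, a level $W<w_{\min}(\OPT_j)$, and a level-$W$ preemptive service $\mu$ in the early part. The plan is to exhibit a single overdue page $q^\ast$ with $w_{q^\ast}\le W$ that is pending for \PMD throughout $\mu$ and has effective deadline at most the current time~$t$. Two consequences follow from this. The ordering of $E$ by increasing deadline makes every page popped before $q^\ast$ chargeable. The while loop also cannot terminate by exhausting $E$ before the budget drops below~$W$.

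\textbf{Existence of $q^\ast$.} Let $\nu_m$ be the last light service of period~$j$. Since $\mu$ lies in the early part, $\mu$ happens before $\nu_m$ and within period~$j$. Because $\nu_m$ is overdue, some page $q^\ast$ with $w_{q^\ast}\le w_{p^{(\nu_m)}}\le W$ is overdue right before $\nu_m$. So the requests $R$ on $q^\ast$ that are pending for \PMD and already served by \OPT have total delay $>w_{q^\ast}$ at that time.

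By \Cref{obs:servedbefore}, since $q^\ast\notin\OPT_j$ (as $w_{q^\ast}\le W<w_{\min}(\OPT_j)$), all requests in $R$ arrived by $t_j$. They are still pending for \PMD right before $\nu_m$, hence also during $\mu$, because a request once served stays served. Since \OPT does not move in period~$j$, \OPT had also served them before~$\mu$.

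It remains to show that the delay of $R$ already exceeds $w_{q^\ast}$ at the time of $\mu$, so that $\deadline(q^\ast)\le t$. This step needs care. If $\mu$ occurs at an earlier time than $\nu_m$, the delay may have grown in between. To handle this I would try to pick $q^\ast$ as the page witnessing overdueness of the \emph{first} light service after (or simultaneous with) $\mu$. I would then argue that overdueness persists backward to $\mu$, or at least that $\deadline(q^\ast)\le t$. I expect this to be the main obstacle. If the backward persistence fails, the fallback is the weaker claim that every popped page $q$ with $\deadline(q)\le\deadline(q^\ast)$ is still chargeable, so it suffices that $q^\ast$ itself is never reached.

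\textbf{Chargeability.} Take a page $q$ popped during $\mu$ before $q^\ast$, or $q^\ast$ itself. Then $\deadline(q)\le\deadline(q^\ast)\le t$, so $\delayof_q(t)\ge w_q$. Also $q\notin\OPT$, since $w_q\le W<w_{\min}(\OPT_j)$. We split into two cases.
\begin{itemize}
\item If \OPT has served some request to $q$ that is pending for \PMD, then condition~\ref{item:abandoned} of \Cref{def:chargeable} holds.
\item Otherwise \OPT serves all these pending requests after~$t$, so it pays at least $\delayof_q(t)\ge w_q$ in delay, and condition~\ref{item:optdelay} holds.
\end{itemize}
Hence every round trip up to and including the one on $q^\ast$ is chargeable.

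\textbf{Termination with $B_W<W$.} The loop stops only when $E=\emptyset$ or $B_W<W$. The page $q^\ast$ is in $E$, so $E$ is nonempty at least until $q^\ast$ is popped. If $q^\ast$ is never popped, the loop must have stopped with $B_W<W$, and all round trips were chargeable. If $q^\ast$ is popped, then the round trip on it serves the overdue requests. At that point I would argue that the remaining pages in $E$ either also have passed deadlines, or a contradiction arises. The contradiction would come from $\nu_m$ still finding $q^\ast$ overdue while $q^\ast$ has been visited by $\mu$ before $\nu_m$: after the visit, $R$ is no longer pending, contradicting the choice of~$R$. This shows $q^\ast$ is never popped, which closes both claims simultaneously.
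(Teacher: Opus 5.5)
There is a genuine gap in your chargeability step. As written, it uses $\deadline(q^\ast)\le t$ for the time $t$ of $\mu$. You flag this as the main obstacle but never establish it, and it is false in general. The service $\mu$ can be triggered by a critical service of any level at time $t$, while $\nu_m$ (and likewise the first light service after $\mu$) may occur at a strictly later time $t^{(m)}>t$. The delay of $R$ may then cross $w_{q^\ast}$ only in between. Hence neither re-choosing $q^\ast$ nor a backward-persistence argument can work. Your fallback claim, that every page popped before $q^\ast$ is chargeable, is exactly what still needs a proof.

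The missing idea is that you never need the delay at time $t$. Measure it at $t^{(m)}$ instead and use that $t^{(m)}<t_{j+1}$.
\begin{itemize}
\item Since $R$ is pending during $\mu$ and $\sum_{r\in R}d_r(t^{(m)}-a_r)>w_{q^\ast}$, the definition of the effective deadline gives $\deadline(q^\ast)\le t^{(m)}$.
\item So every popped page $p$ has $\deadline(p)\le t^{(m)}$. By continuity, the requests $R_p$ pending on $p$ at $\mu$ have total delay at least $w_p$ at time $t^{(m)}$.
\item As $w_p\le W<w_{\min}(\OPT_j)$, \OPT does not hold $p$ anywhere in period~$j$ after its walk at $t_j$, which precedes $\mu$ by the ordering convention.
\item So if \OPT has served no request of $R_p$ by $\mu$, it serves none of them before $t_{j+1}>t^{(m)}$. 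It then pays at least $w_p$ of delay for them, which is condition~\ref{item:optdelay}. Otherwise condition~\ref{item:abandoned} holds.
\end{itemize}
With this change, the rest of your proof coincides with the paper's argument. In particular, $q^\ast$ is never popped because $R$ is still pending right before $\nu_m$, which forces the loop to end with $B_W<W$.
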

\begin{proof}
    Let~$t^{(m)}$ be the time of the last light service~$\lightstep{m}$, and let~$q$ be a page with~$w_q\le W$ that is overdue right before it.
    Such a page exists by definition of overdue services.
    By definition of \(q\) being overdue, the set \(R_q\) of pending requests to~$q$ that \OPT has served satisfy~$\sum_{r\in R_q} d_r(t^{(m)}-a_r)>w_q$.
    Because we consider the early part, we have~$W<w_{\min}(\OPT_j)$, so the page \(q\) with \(w_q \le W\) must satisfy~$q\notin\OPT_j$.
    Hence, the requests in~$R_q$ arrived no later than~$t_j$ by \Cref{obs:servedbefore} and because they are still pending right before~$\lightstep{m}$, they are pending throughout the early part.

    Consider some level-$W$ preemptive service~$\mu$ in the early part (i.e., before the last light service) and let~$p$ be the page selected for preemptive service.
    We have to show that \(B_W < W\) after \(\mu\) and that \(\mu\) is chargeable.
    Since~$q$ is eligible for preemptive service during \(\mu\) but the requests in~$R_q$ are still pending after~$\mu$, the loop of round trips of~$\mu$ must terminate because~$B_W<W$.

    Let~$R_p$ be the set of requests to~$p$ that are pending right before \(\mu\).
    Since~$p$ is chosen for preemptive service before~$q$, we have~$\deadline(p)\le\deadline(q)$ and because the pending requests to~$q$ during~$\mu$ include~$R_q$, which have total delay more than~$w_q$ at time~$t^{(m)}$, we have~$\deadline(q)\le t^{(m)}$.
    Together this gives~$\sum_{r\in R_p}d_r(t^{(m)}-a_r)\ge w_p$ by continuity of the delay cost functions.
    Again,~$p\notin\OPT_j$, as~$w_p\le W<w_{\min}(\OPT_j)$, so if \OPT has served some request in~$R_p$, the round trip satisfies~\ref{item:abandoned}.
    Otherwise, \OPT does not serve any request in~$R_p$ before~$t_{j+1}>t^{(m)}$ either, as~$p\notin\OPT_j$, so it pays at least~$\sum_{r\in R_p}d_r(t^{(m)}-a_r)\ge w_p$ of delay for them, and the round trip satisfies~\ref{item:optdelay}.
\end{proof}

We combine the above lemmas into a bound on the excess terms by bounding the total budget spent in the early part, the remaining budget after the early part, and one additional term for the final overdue light service.
Write~$C_{j,W}$ for the sum of~$w_p$ over the chargeable round trips of level~$W$ in period~$j$ to some page~$p$.
\begin{lemma}\label{lem:periodlevel}
    For every period~$j\geq 0$ and weight level~$W<w_{\min}(\OPT_j)$ it holds that
    \[
        \sum_{\ell\le m}\durW{W}{\lambda^{(\lightstep{\ell})}}\le\frac32C_{j,W}+4W+LW\,.
    \]
\end{lemma}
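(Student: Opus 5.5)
We track the budget $B_W$ over period~$j$ with a deposit--withdrawal balance, splitting the light services into the first $m-1$ (which lie in the early part) and the last one $\lightstep{m}$. The last light service deposits at most $\durW{W}{\lambda^{(\lightstep{m})}}\le|I_W|=LW$ into $B_W$. This produces the term $LW$ in the bound, so it remains to show
\[
\sum_{\ell<m}\durW{W}{\lambda^{(\lightstep{\ell})}}\le\tfrac32C_{j,W}+4W.
\]
Each light service $\lightstep{\ell}$ has level $\le W$, so it deposits exactly $\durW{W}{\lambda^{(\lightstep{\ell})}}$ into $B_W$. Since budgets never decrease except through withdrawals, the left-hand side is at most the total deposit into $B_W$ during the early part. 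By the balance, this deposit is at most the withdrawals from $B_W$ by level-$W$ preemptive services in the early part plus the budget remaining right before $\lightstep{m}$.

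First we bound the remaining budget by $4W$. Consider the last moment in the early part at which a critical service occurs, followed by the loop over levels. After $\lightstep{m-1}$ (or any later early critical service), there are two cases.
\begin{itemize}
\item If a level-$W$ preemptive service runs, then \Cref{lem:early} gives $B_W<W$ afterwards.
\item If it does not run, then either $B_W<4W$ already, or the hole condition $\sum_{i:w_i\le W}(1-x_i)\ge1-\delta$ fails.
\end{itemize}
The second alternative is the main obstacle. I would rule it out using the overdue light page $q$ from the proof of \Cref{lem:early}. Since $q$ has pending requests throughout the early part and $w_q\le W$, I would argue that the hole condition must hold, or at least that $B_W$ cannot then grow beyond $4W$ plus one deposit. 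If this fails, I would instead note that the service in step $\lightstep{\ell}$ has level $w_p\le W$, so after it $x_p=\delta$. This gives cached mass $1-\delta$ on pages of weight $\le W$, and the hole condition holds right after each light service. With that, after every light service before $\lightstep{m}$ we have $B_W<4W$ (when no preemptive service is triggered) or $B_W<W$ (after one). Non-light critical services deposit at level $W$ too, but each is again followed by the check. The budget right before $\lightstep{m}$ is therefore below $4W$, up to deposits of non-light services, which are absorbed in the same way.

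Next we bound the withdrawals. By \Cref{lem:early}, every level-$W$ preemptive service $\mu$ in the early part performs only chargeable round trips and ends with $B_W<W$. It started with $B_W\ge4W$, so it withdrew more than $3W$ in total. Of this, $W(1-\delta)\le W$ went to closing the hole and the rest to round trips, each costing $w_p(x_p-\delta)\le w_p$. The round-trip spending is therefore at least $2W\ge2\cdot\frac{\text{withdrawal}}{3}$, so the total withdrawal of $\mu$ is at most $\frac32$ times its round-trip cost. Summing over the early part, the total withdrawal is at most $\frac32C_{j,W}$. Combining the withdrawal bound, the remaining-budget bound and the $LW$ term from $\lightstep{m}$ gives the claim.
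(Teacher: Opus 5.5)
Your proof follows the paper's argument. You bound the last light service's deposit by $LW$. You bound the deposits of the earlier light services by the early level-$W$ withdrawals plus the leftover budget. You use \Cref{lem:early} to get total withdrawals at most $\frac32 C_{j,W}$. You get the leftover below $4W$ because a critical service of level $\le W$ leaves its own page at antimass $\delta$, so the hole condition holds at the subsequent level-$W$ check.

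Two minor repairs are needed.

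\begin{enumerate}
\item The step ``at least $2W \ge 2\cdot(\text{withdrawal})/3$'' has the inequality reversed, since the withdrawal exceeds $3W$. Argue instead that round-trip spending is at least the withdrawal minus $W$, hence more than $\frac23$ of the withdrawal. Equivalently, it is more than $2W$, and therefore more than twice the closing cost.
\item In the leftover-budget step, drop the detour through the overdue page $q$. Pending requests on $q$ say nothing about cached mass, so they cannot establish the hole condition. State the clean version directly. Only critical services of level $\le W$ deposit into $B_W$, so your phrase ``non-light services deposit at level $W$ too'' needs this restriction. Each such service leaves its own page of weight $\le W$ at antimass $\delta$. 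Hence after the last such service in the early part, light or not, we have $B_W < 4W$, and nothing is deposited into $B_W$ afterwards before the last light service.
\end{enumerate}
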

\begin{proof}
    By definition, each summand is at most~$LW$, and we bound the last summand this way.
    Each of the other summands~$\durW{W}{\lambda^{(\lightstep{\ell})}}$ is exactly the deposit of the~$\ell$-th light service into~$B_W$.
    Therefore, it suffices to show that
    \begin{equation}\label{ineq:early_deposits}
        \sum_{\ell< m}\durW{W}{\lambda^{(\lightstep{\ell})}} \leq \frac32C_{j,W}+4W.
    \end{equation}
    This inequality is trivial for~$m\le1$, so we assume henceforth that~$m>1$.
    To prove \eqref{ineq:early_deposits}, observe that each~$\ell$-th light service with~$\ell<m$ lies in the early part and deposits~$\durW{W}{\lambda^{(\lightstep{\ell})}}$ into the budget~$B_W$, so we can bound the sum of these terms by the total paid out of~$B_W$ during the early part (i.e., the cost of preemptive service), plus the budget remaining at the end of the early part.

    First, we examine the cost paid out of~$B_W$ during level-$W$ preemptive services in the early part.
    Every such preemptive service starts with budget at least~$4W$, ends with budget less than~$W$ (by \Cref{lem:early}) and pays no more than~$W$ to close its hole.
    Therefore, it spends more than~$2W$ on round trips, i.e., more than twice its closing cost.
    In other words, the total payment is less than~$\frac32$ times the cost of the round trips.
    By \Cref{lem:early}, every round trip in the early part is chargeable, and their costs sum to no more than~$C_{j,W}$.

    To prove \eqref{ineq:early_deposits}, it now remains only to show that the leftover budget~$B_W$ after the early part is at most~$4W$.
    Let~$\mu_{\rm last}$ be the last critical service of level~$\le W$ in the early part, which exists since the~$(m-1)$-th light service is one.
    After~$\mu_{\rm last}$, its page, which has weight~$\le W$, has missing mass~$\delta$, so~$\sum_{i:w_i\le W}(1-x_i)\ge1-\delta$.
    Therefore,~$\mu_{\rm last}$ starts a level-$W$ preemptive service as long as the budget~$B_W$ is at least~$4W$.
    In this case, since~$\mu_{\rm last}$ is in the early part, this preemptive service drains the budget to less than~$W$ by \Cref{lem:early}.
    Otherwise,~$B_W<4W$ already.
    Subsequent services in the early part have level higher than~$W$ and therefore do not increase~$B_W$.
    This concludes the proof.
\end{proof}
We have already established that the cost of chargeable round trips can be charged to \OPT.
To derive a bound on the total excess, we additionally need to bound the extra summand~$LW$ in \Cref{lem:periodlevel}.
To this end, we note that~$W<w_{\min}(\OPT_j)$, and we can bound the movements of \OPT between periods from below by~$w_{\min}(\OPT_j)$.
\begin{lemma}[lightest weights]\label{lem:lightest}
    $\sum_{j\ge1}w_{\min}(\OPT_j)\le\OPT$.
\end{lemma}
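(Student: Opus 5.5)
The plan is to charge each period boundary $t_j$ (for $j\ge1$) to the walk that \OPT performs at time $t_j$. By definition of periods, \OPT performs a walk of positive cost at $t_j$, and it transforms $\OPT_{j-1}$ into $\OPT_j$. Since $t_j$ is defined as a time at which this walk has positive cost, $\OPT_j\neq\OPT_{j-1}$, at least as sets of cached pages. We may assume this because a walk that returns to the same configuration can be removed without increasing cost; alternatively, the bound below survives anyway, since the walk fetches some page that lies in the final configuration.

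The key step is to show that the walk at $t_j$ costs at least $w_{\min}(\OPT_j)$. We use laziness of \OPT (the cache is always full, and every eviction is immediately followed by a fetch). Because $\OPT_j\ne\OPT_{j-1}$ and both have exactly $k$ pages, some page $i\in\OPT_j\setminus\OPT_{j-1}$ exists. This page must be fetched during the walk at $t_j$, at cost $w_i\ge w_{\min}(\OPT_j)$. If the configurations coincided as sets even though the cost was positive, some fetched page would still end up in $\OPT_j$, since the last fetch of the walk remains cached. Either way, the walk's fetch cost is at least $w_{\min}(\OPT_j)$.

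Summing over $j\ge1$ finishes the argument. The walks at distinct times $t_j$ are disjoint parts of \OPT's fetch cost, so
\[
    \sum_{j\ge1}w_{\min}(\OPT_j)\le\sum_{j\ge1}\mathrm{cost}(\text{walk at }t_j)\le\OPT.
\]
The only delicate point is the claim that the walk fetches a page that remains in $\OPT_j$. I would handle it by taking the final fetch of the walk: by laziness no eviction follows it within the walk, so the fetched page belongs to $\OPT_j$, and its weight is at least $w_{\min}(\OPT_j)$.
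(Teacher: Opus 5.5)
Your proof is correct and is essentially the paper's argument: by laziness, the last page fetched in the walk at $t_j$ remains in $\OPT_j$, so each walk costs at least $w_{\min}(\OPT_j)$, and summing over the distinct walks gives the bound. The detour via $\OPT_j\neq\OPT_{j-1}$ is unnecessary, and your claim that a walk returning to the same configuration can be removed without increasing cost is not valid in general, since such a walk may serve requests in its intermediate configurations; your final-fetch argument does not depend on that claim.
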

\begin{proof}
    Since we assume \OPT is lazy, the last page fetched in the walk at~$t_j$ lies in~$\OPT_j$, so its weight is at least~$w_{\min}(\OPT_j)$.
    Summing over the distinct walks at~$t_j$ for all~$j\geq 1$ gives the claimed bound.
\end{proof}

Finally, we can combine the lemmas from this section to derive a bound on the total excess.
\begin{proposition}[total excess]\label{prop:excess}
    $\Exc=\O(L)\cdot \OPT+\O(L\sum_i w_i)$.
\end{proposition}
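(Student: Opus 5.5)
We split the total excess as $\Exc=\sum_{\nu\in\Timely}\Exc(\nu)+\sum_{\nu\in\Over}\Exc(\nu)$. \Cref{lem:timely} bounds the first sum by $L\cdot\OPT$. For the second sum, we regroup by period and level. A miss service $\nu$ in period $j$ on page $p$ contributes $\durW{W}{\lambda^{(\nu)}}$ for every level $W$ with $w_p\le W<w_{\min}(\OPT_j)$. Hence, for fixed $j$ and $W<w_{\min}(\OPT_j)$, the contributions at level $W$ are exactly the summands $\durW{W}{\lambda^{(\lightstep{\ell})}}$ over the light services of period $j$ at level $W$. Exchanging the order of summation therefore gives
\[
    \sum_{\nu\in\Over}\Exc(\nu)=\sum_{j\ge0}\ \sum_{W<w_{\min}(\OPT_j)}\ \sum_{\ell\le m}\durW{W}{\lambda^{(\lightstep{\ell})}},
\]
where only pairs $(j,W)$ with $m\ge1$ contribute. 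We bound each inner sum by \Cref{lem:periodlevel} with $\frac32C_{j,W}+4W+LW$.

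First consider the chargeable round-trip terms. Each round trip belongs to a single preemptive service, and hence to a single level $W$ and a single period $j$. Thus $\sum_{j,W}C_{j,W}$ is at most the total weight of chargeable visits. By \Cref{lem:chargeable}, this is at most $\OPT+w(\OPT_0)\le\OPT+\sum_iw_i$.

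Next consider the terms $(4+L)W$. For a fixed period $j$, the levels are powers of two below $w_{\min}(\OPT_j)$, so the geometric sum gives $\sum_{W<w_{\min}(\OPT_j)}W<2w_{\min}(\OPT_j)$. Summed over $j\ge1$, \Cref{lem:lightest} yields at most $2(4+L)\OPT$. Period $0$ needs separate treatment, since \OPT has paid nothing there. In that period, $\sum_W W<2w_{\min}(\OPT_0)\le2\sum_iw_i$, which contributes $\O(L\sum_iw_i)$. This is absorbed into the additive term.

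Combining everything gives
\[
    \Exc\le L\cdot\OPT+\tfrac32\Big(\OPT+\sum_iw_i\Big)+2(L+4)\,\OPT+2(L+4)\sum_iw_i=\O(L)\cdot\OPT+\O\Big(L\sum_iw_i\Big).
\]
The only subtle step is the exchange of summation: we must check that an overdue service of level $w_p$ in period $j$ contributes exactly at the levels $W\in[w_p,w_{\min}(\OPT_j))$, and that it then counts as a light service for each such $W$. Both hold by the definitions of $\Exc$, of overdue services, and of light services.
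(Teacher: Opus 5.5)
Your proof is correct and follows the paper's argument: you split the excess into timely and overdue services, regroup the overdue excess by period and level, apply \Cref{lem:periodlevel}, and bound the resulting terms with \Cref{lem:chargeable} and \Cref{lem:lightest}. The only differences are cosmetic: you absorb period~$0$ into the additive term via $w_{\min}(\OPT_0)\le\sum_i w_i$ rather than ruling it out with \Cref{obs:servedbefore}, and you use the looser geometric bound $2w_{\min}(\OPT_j)$ where the paper gets $w_{\min}(\OPT_j)$ because weights are powers of two.
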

\begin{proof}
    Every step~$\nu\in\Miss$ is either in~$\Over$ or in~$\Timely$, so~$\Exc=\sum_{\nu\in\Over}\Exc(\nu)+\sum_{\nu\in\Timely}\Exc(\nu)$.
    By \Cref{lem:timely}, timely services contribute at most~$L\cdot\OPT$ to the excess.
    Every step~$\nu\in\Over$ in period~$j$ contributes to~$\Exc$ only at levels~$W$ with~$w_{p^{(\nu)}}\le W<w_{\min}(\OPT_j)$, and for each such level,~$\nu=\lightstep{\ell}$ for some~$\ell$.
    If such a level exists, then the page~$q$ that is overdue right before~$\nu$ satisfies~$w_q\le w_{p^{(\nu)}}<w_{\min}(\OPT_j)$, so~$q\notin\OPT_j$, and as \OPT has served a pending request to~$q$, \Cref{obs:servedbefore} gives~$j\ge1$.
    Hence,
    \begin{align*}
        \sum_{\nu\in\Over}\Exc(\nu)
        \leq \sum_{j\geq 1}\,\sum_{W<w_{\min}(\OPT_j)}\,\sum_{\ell} \durW{W}{\lambda^{(\lightstep{\ell})}}
        \,.
    \end{align*}
    Applying \Cref{lem:periodlevel} gives
    \[
        \sum_{\nu\in\Over}\Exc(\nu)
        \leq \sum_{j\geq 1} \,\sum_{W<w_{\min}(\OPT_j)} \Big(\frac32 C_{j,W}+(L+4)W\Big).
    \]
    \Cref{lem:chargeable} gives the bound~$\sum_{j,W}C_{j,W}\le\OPT+w(\OPT_0)$.
    For the terms~$(L+4)W$, observe that for every~$j$, since weight levels are distinct powers of two,
    \[
        \sum_{W<w_{\min}(\OPT_j)} W < w_{\min}(\OPT_j).
    \]
    Applying \Cref{lem:lightest} yields
    \[
        \sum_{j\geq 0} \sum_{W<w_{\min}(\OPT_j)} W \leq \OPT + w(\OPT_0)\,.
    \]
    Combining the above inequalities gives the claimed bound
    \[
        \Exc \leq L\cdot\OPT+\frac 3 2 (\OPT+w(\OPT_0))+(L+4)(\OPT+w(\OPT_0)) = \O(L)\cdot \OPT+\O(L\sum_i w_i).\qedhere
    \]
\end{proof}

\section{Proof of the main theorem}\label{sec:main}

Combining the results from the above sections gives that \PMD has fractional fetch and delay cost
\[
    \O(\log k)\cdot \OPT+\O(\log k \sum_i w_i).
\]
It remains to transform this online into a feasible randomized integral algorithm.
We do so by combining the online rounding theorem from \cite{Bansal07PrimalDual} with a rounding map similar to the one from \cite{Bubeck18kServerMultiscale}.
The proof can be found in Appendix \ref{app:rounding}.
\begin{restatable}[rounding; proof in \Cref{app:rounding}]{proposition}{roundingtheorem}\label{prop:rounding}
    Let the weights be powers of two, and let~$C_{\fetch}$ and~$C_{\delayof}$ be the fractional fetch and delay cost of \PMD (\Cref{def:fracfetch,def:fracdelay}).
    Then the states of \PMD can be converted online into a randomized integral algorithm that serves every request with probability~$1$, has expected fetch cost~$\O(C_{\fetch}+\sum_iw_i)$, and expected delay cost at most~$3C_{\delayof}$.
\end{restatable}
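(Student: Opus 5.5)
We compose a fixed coordinatewise map with the online rounding of \cite[Theorem 4.1]{Bansal07PrimalDual}. We choose a map $f\colon[\delta,1]\to[0,1]$ with $f(\delta)=0$, $f(1)=1$, $f$ nondecreasing and $f'$ bounded by a constant. The sum $\sum_i f(x_i)\ge n-k$ must also be preserved whenever $\sum_i x_i\ge n-k$. A natural choice is the affine map $f(u)=(u-\delta)/(1-\delta)$. For this map,
\[
\sum_i f(x_i)\ge \frac{n-k-n\delta}{1-\delta},
\]
which falls short of $n-k$ by a term depending on $n$, so it does not work directly. We would therefore first use the standard reduction that only $\O(k)$ pages matter at any time. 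An alternative, following \cite{Bubeck18kServerMultiscale}, is the map $f(u)=\min\{1,(1+\gamma)(u-\delta)\}$ with $\gamma$ chosen so that the cached masses $1-f(x_i)$ sum to at most $k$. The choice $\delta=1/(2k+1)$ is what makes this possible with $f'\le 2$ or $3$. We would verify the cache constraint by noting that the cached mass of $y=f(x)$ satisfies $1-y_i\le (1-x_i)\cdot c$ on uncached-enough pages, together with the fact that at most $k$ pages can have $x_i<1$ substantially. Which variant works is the step we expect to need care; we would first try the $\min$-truncated scaled map and check $\sum_i(1-y_i)\le k$ using $\sum_i(1-x_i)\le k$ and $x_i\ge\delta$.

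Given the transformed fractional states $y^{(\nu)}=f(x^{(\nu)})$, the fetch cost transfers directly. Since $f$ is monotone and $c$-Lipschitz, $w_i(y_i^{(\nu)}-y_i^{(\nu+1)})_+\le c\,w_i(x_i^{(\nu)}-x_i^{(\nu+1)})_+$, so the fractional fetch cost of $y$ is at most $c\,C_{\fetch}$, plus the initial move of cost at most $\sum_i w_i$. Because the weights are powers of two, the rounding theorem of \cite{Bansal07PrimalDual} applies. It produces online a distribution over integral caches whose marginals are $1-y_i$, with expected fetch cost $\O(\cdot)$ of the fractional fetch cost. This gives expected fetch cost $\O(C_{\fetch}+\sum_i w_i)$. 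At the serving time $\tau_r$ we have $x_{p_r}=\delta$, hence $y_{p_r}=0$, so $p_r$ is cached with probability $1$ and every request is served almost surely.

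For delay, request $r$ is served in the integral algorithm no later than $\tau_r$. It incurs delay only if $p_r$ is not cached in the realization at the arrival time $a_r$ and the first caching after $a_r$ occurs later. We would bound this crudely by
\[
\Pr[p_r\notin\text{cache at }a_r]\cdot d_r(\tau_r-a_r)=y_{p_r}(a_r)\,d_r(\tau_r-a_r).
\]
By \Cref{obs:monotone}, the relevant $x_{p_r}$ before arrival is $\delta+\alpha_r$, so $y_{p_r}\le f'_{\max}\,\alpha_r$. With the Lipschitz constant of $f$ at most $3$, this yields expected delay at most $3C_{\delayof}$. A subtlety is that the rounding may temporarily evict and re-fetch $p_r$ after $a_r$ even while $y_{p_r}$ stays positive. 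That can only make the first caching earlier, never later than $\tau_r$, so the bound is unaffected.
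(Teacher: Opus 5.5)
Your architecture matches the paper's: a fixed coordinatewise map, then the Bansal--Buchbinder--Naor rounding, the fetch bound via the Lipschitz constant, and the delay bound via $\Pr[p_r\notin\text{cache at }a_r]\le 3\alpha_r$. The delay part is fine. The gap is that the only nontrivial ingredient is never established. That ingredient is a $3$-Lipschitz map that sends antimass $\delta$ to ``fully cached'' and keeps the total cached mass at most $k$.

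The verification you sketch would not work. A bound $1-y_i\le c(1-x_i)$ with $c>1$ only gives total cached mass at most $ck$. Counting ``substantially cached'' pages is also not the mechanism, since up to $2k$ pages can receive positive cached mass under the right map.

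Your preliminary computation is off too. The affine map falls short of $n-k$ by exactly $\frac{k\delta}{1-\delta}=\frac12$, independent of $n$, so no reduction to $\O(k)$ pages is needed; removing this $\frac12$ is the whole point.

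In your truncated family, take $1+\gamma=\frac{2}{1-\delta}=2+\frac1k$. This slope is forced, and in particular $f'\le2$ is impossible: with $k+1$ pages at antimass $\frac1{k+1}$ and the rest at $1$, feasibility requires $(1+\gamma)(1-(k+1)\delta)\ge1$. In cached-mass terms the map is $z_i=\max\{0,2u_i-1\}$ with $u_i=\frac{1-x_i}{1-\delta}$. Then $\sum_iu_i\le\frac{k}{1-\delta}=k+\frac12$, and there are two cases:
\begin{itemize}
\item at most $k$ pages have $u_i>\frac12$, so $\sum_iz_i\le k$ because each $z_i\le1$;
\item at least $k+1$ pages have $u_i>\frac12$, so $\sum_iz_i\le2(k+\frac12)-(k+1)=k$.
\end{itemize}

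There is a second gap. The rounding theorem applies only to cached masses summing to exactly $k$, and $z$ typically sums to less, so you cannot apply it to $f(x)$ directly. You need a lazy completion: obtain $y^{(\nu+1)}$ from $z^{(\nu+1)}$ by raising coordinates, never above $y^{(\nu)}_i$, until the total is $k$.

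This completion keeps $y\ge z$. Hence pages with $x_i=\delta$ still have $y_i=1$, and $1-y_p\le1-z_p\le3\alpha_r$. It also gives $(y^{(\nu+1)}_i-y^{(\nu)}_i)_+\le(z^{(\nu+1)}_i-z^{(\nu)}_i)_+\le3(x^{(\nu)}_i-x^{(\nu+1)}_i)_+$.

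Finally, the rounding guarantee is stated in terms of $\sum_iw_i|y^{(\nu+1)}_i-y^{(\nu)}_i|$, which counts decreases as well as increases. Since each $y_i$ stays in $[0,1]$, its total decrease is at most its total increase plus $1$, and this is where the additive $\sum_iw_i$ comes from. With these pieces added, your argument becomes the paper's proof.
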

\begin{proof}[Proof sketch]
    First, we apply the transformation~$z_i\coloneqq\max\{0,2\frac{1-x_i}{1-\delta}-1\}$ to the fractional antimass states to get new states describing the cached masses.
    This transformation maps antimass~$\delta$ to cached mass~$0$, so that every page with antimass~$\delta$ is fully cached.
    Moreover, the total cached mass is at most~$k$, and the transformation is~$3$-Lipschitz (\Cref{lem:map}).
    It follows that the transformed states are valid fractional cache states that respect the serving times~$\tau_r$ we planned.

    The online rounding of \citeauthor{Bansal07PrimalDual} then transforms the fractional cache states into random cache choices that contain every page with probability equal to its cached mass and realize the fractional fetch cost up to a constant factor (\Cref{cor:rounding}).
    In particular, a request~$r$ on~$p$ is served at the latest at~$\tau_r$, when~$x_p=\delta$ and~$p$ is cached with probability~$1$.
    When~$r$ arrives,~$p$ is not cached with probability at most~$1-z_p\le3(x_p-\delta)=3\alpha_r$, so the expected delay of~$r$ is at most~$3\alpha_rd_r(\tau_r-a_r)$.
    This justifies that the fractional delay cost weights the delay by~$\alpha_r$.
    The full proof is in \Cref{app:rounding}.
\end{proof}
Applying this rounding scheme gives the desired~$\O(\log k)$-competitive randomized algorithm.

\begin{proof}[\textnormal{\textbf{Proof of \Cref{thm:main}}}]
    First, we round the weights up to powers of two.
    This at most doubles \OPT and~$\sum_iw_i$, and does not decrease the cost of any algorithm.
    By \Cref{prop:tele} and \Cref{prop:excess}, it holds on every instance that
    \[
        C_{\fetch}^{\crit}\mathrel{\le}\O(L)\cdot \OPT + \O(L\sum_iw_i) = \O(\log k)\cdot \OPT+\O(\log k\sum_iw_i).
    \]
    \Cref{prop:cost} with~$\rho=\O(\log k)$ and~$c=\O(\log k\sum_iw_i)$ then gives
    \[
        C_{\fetch}+C_{\delayof}\le\O(\log k)\cdot\OPT+\O(\log k\sum_iw_i).
    \]
    By \Cref{prop:rounding}, the states of \PMD can be converted online into a randomized integral algorithm that serves every request with probability~$1$ and has expected cost at most
    \[
        \O\Big(C_{\fetch}+\sum_iw_i\Big)+3C_{\delayof}\le\O(\log k)\cdot\OPT+\O\Big(\log k\sum_iw_i\Big).
    \]
    The additive term depends only on~$k$ and the weights, not on the requests, so this algorithm is~$\O(\log k)$-competitive.
\end{proof}

\section*{Disclosure of AI usage}
An initial draft of this paper was written by Claude Opus 5.5 based on two proofs written by the authors.
The first showed the competitive ratio~$\O(\log k)$ for an algorithm based on mirror descent for paging with deadlines in two special cases: when there are only two weight classes, and when the loaded page is the heaviest page.
The second contained a fully potential-based analysis of the result from \cite{Azar21OSD}, specialized to paging with deadlines, giving a potential that amortizes the cost difference between paging with and without delay, and that was then used in the initial draft of this paper.
Claude Opus extended our results to general weights and delays, in particular by introducing weight classes explicitly and depositing preemptive service budgets based on these weight classes.
The Lean formalization was written by Claude Opus.
The authors guided the design and verified the correctness of the Lean model for randomized weighted paging with delay.

This paper was written entirely by the authors, based on the initial draft provided by Claude Opus.
The authors independently checked all resulting proofs and take full responsibility for the content and accuracy of the paper, including any errors that may remain.

\printbibliography

\appendix

\crefalias{section}{appendix}

\section{Rounding}\label{app:rounding}

For classical weighted paging, \citeauthor{Bansal07PrimalDual} \cite{Bansal07PrimalDual} round a fractional algorithm online:
they maintain a random cache so that the probability that a page is cached is exactly its fractional cached mass.
Their rounding procedure changes the algorithm's cost by only a constant factor.

In this section, we show that their rounding procedure can be applied to the fractional states of \PMD after applying a fixed transformation to the fractional states.
This yields the desired randomized algorithm with only a constant factor increase in cost.

\begin{theorem}[{\cite[Theorem~4.1]{Bansal07PrimalDual}}]\label{thm:bbn}
    Let the weights~$\omega_i$ be such that~$\omega_{i+1}/\omega_i\ge2$ for~$1\le i\le K-1$.
    There is a subclass~$\mathcal D$ of distributions on cache states, along with a map~$T$ from~$(\mathcal D\times\mathcal P)\to\mathcal D$ with the following property:
    Given any two distributions on pages~$y,\tilde y\in\mathcal P$, and given any~$Q\in\mathcal D$ satisfying~$\Pi(Q)=y$, we can obtain another distribution~$\tilde Q=T(Q,\tilde y)$ such that~$\Pi(\tilde Q)=\tilde y$,~$\tilde Q\in\mathcal D$ and~$\mathrm{cost}(Q,\tilde Q)\le5\,\mathrm{cost}_f(y,\tilde y)$.
\end{theorem}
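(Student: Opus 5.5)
The plan is to follow the structure of the Bansal–Buchbinder–Naor argument. I would reduce to elementary moves, choose $\mathcal D$ to be a family of ``balanced'' distributions, and show that each elementary move can be implemented by a small adjustment plus a repair. The repair cost should form a geometric series thanks to the doubling condition $\omega_{i+1}/\omega_i\ge2$.

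\emph{Reduction to elementary moves.} I first decompose the transition $y\to\tilde y$ into a finite sequence of moves. Each move either increases one page's cached mass by a small $\varepsilon$, or performs a paired exchange that keeps the total cached mass at most $k$. Both $\mathrm{cost}_f$ and (by subadditivity) $\mathrm{cost}$ add up along such a decomposition. It therefore suffices to prove the inequality $\mathrm{cost}(Q,\tilde Q)\le5\,\mathrm{cost}_f(y,\tilde y)$ for a single move, with $T$ defined by composing the single-move maps.

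\emph{The class $\mathcal D$.} Group pages into weight classes $1,\dots,K$. For each suffix of classes $\ge j$, let $m_j(y)$ be the fractional missing mass those classes carry under $y$. I let $\mathcal D$ consist of distributions $Q$ with marginals $\Pi(Q)=y$ such that every cache state $S$ in the support has at most $k$ pages and, for every $j$, the number of pages of classes $\ge j$ missing from $S$ lies in $\{\lfloor m_j(y)\rfloor,\lceil m_j(y)\rceil\}$. This balance property is what lets me amortize: heavy classes are always ``as cached as the fraction allows'', up to one page.

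\emph{Single move and repair.} Consider $\varepsilon$ of cached mass moving onto a page $i$ of class $c$ (the eviction direction is symmetric). I would first move measure $\varepsilon$ among states missing $i$ so that they now contain $i$. This costs $\varepsilon\omega_c$ and fixes the marginals. It may make some state overfull, or break balance for suffixes $j\le c$. I then repair balance class by class, from class $c$ downward. A state that violates the bound at suffix $j$ evicts (or fetches) a page in class $j-1$, choosing pages so that marginals are preserved by pairing states with excess against states with deficit. Only $O(\varepsilon)$ measure is touched at each level, and the level-$j$ repair costs $O(\varepsilon\,\omega_{j})$. With doubling weights, the total repair cost is at most $\varepsilon\omega_c\sum_{t\ge0}2^{-t}\cdot O(1)$. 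Adding the initial $\varepsilon\omega_c$ should give the constant $5$.

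\emph{Main obstacle.} I expect the hardest step to be showing that the repair can always be performed while keeping the marginals exactly equal to $\tilde y$ and keeping the cost at each level proportional to $\varepsilon$, rather than to the amount of measure already out of balance. My first attempt would be a flow/matching argument between states at the floor and states at the ceiling within each suffix. An averaging argument shows that enough measure of the right type exists, because the expected count equals $m_j(\tilde y)$.
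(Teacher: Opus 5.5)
The paper does not prove this theorem. It quotes it from Bansal, Buchbinder, and Naor and uses it only as a black box in the online-rounding corollary.

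Your outline follows the architecture of the cited argument: distributions balanced at every heavy suffix of weight classes, a consistency step followed by a top-down repair, and a geometric series paid for by $\omega_{j+1}\ge 2\omega_j$. As a proof, though, it stops exactly where the content of the theorem lies, and the repair rule you do state cannot work. Changing whether a class-$(j-1)$ page is cached does not change how many pages of classes $\ge j$ a state holds, so it cannot fix a violation at suffix $j$. Moving a page of a class above $j$ could break suffixes you have already repaired. The repair at suffix $j$ therefore has to move a page of class exactly $j$ between two states. Your averaging argument supplies only the partner state, not such a page, and it is the page that keeps every marginal exact.

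The missing step is a short counting argument. Write $N_j(S)$ for the number of cached pages of classes $\ge j$ in $S$. Suppose $S$ is outside the allowed range at suffix $j$, and $S'$ is a partner from your averaging argument lying on the opposite side with room to absorb one page. Then $|N_j(S)-N_j(S')|\ge 2$. Both states are already balanced at suffix $j+1$, so $|N_{j+1}(S)-N_{j+1}(S')|\le 1$. Hence they differ within class $j$, and some class-$j$ page $u$ is cached in exactly one of them. Transferring $u$ between equal-measure pieces of $S$ and $S'$ has these effects:
\begin{itemize}
\item it keeps all marginals and repairs suffix $j$;
\item it leaves suffixes $>j$ untouched;
\item it costs $\eta\,\omega_j$ on measure $\eta$ (one fetch and one eviction at $\omega_j/2$ each);
\item at suffixes $<j$ it creates no new imbalance, but only hands the $\pm1$ that $S$ carries (relative to its count before the move) over to $S'$.
\end{itemize}
So the measure that has to be repaired stays at most $\varepsilon$ at every level. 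This conservation is what your claim ``only $O(\varepsilon)$ measure is touched at each level'' needs. Without it, a repair that disturbs lower levels in both partners could double the unbalanced measure per level, and the factor-$2$ decay of the weights would no longer cover it. With it, each endpoint page of a move, of class $c$ say, costs $\varepsilon\omega_c/2$ for consistency plus less than $2\varepsilon\omega_c$ for repairs, against $\varepsilon\omega_c/2$ in $\mathrm{cost}_f$. This yields the constant $5$.

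Finally, raising a single coordinate leaves $\mathcal P$. If you process the two endpoints of a move one after the other, your class $\mathcal D$ must admit intermediate distributions with $k+1$ cached pages. The cache-size cap should then come from balance at the suffix of all classes rather than being imposed separately.
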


In \Cref{thm:bbn}, the notation is as follows.
\begin{description}[font=\normalfont, leftmargin=2.9cm, labelwidth=2.6cm, labelsep=0.3cm, align=left, itemsep=0pt]
    \item[$\omega_1<\dots<\omega_K$] the distinct page weights.
    \item[cache state] a set of~$k$ pages.
    \item[$Q,\tilde Q$] probability distributions on cache states, i.e.,~$Q$ gives for every cache state~$C$ the probability that the random cache is~$C$.
    \item[$\mathcal P$] the vectors~$y\in[0,1]^n$ with~$\sum_iy_i=k$, which \cite{Bansal07PrimalDual} call \emph{distributions on pages};~$y_i$ is the cached mass of page~$i$.
    \item[$\Pi(Q)$] the \emph{marginals} of~$Q$:~$\Pi(Q)_i$ is the probability that page~$i$ is in a random cache with distribution~$Q$.
    \item[$\mathcal D$] a subclass of the distributions on cache states, constructed in \cite{Bansal07PrimalDual}, whose definition we do not need.
    \item[$\mathrm{cost}(Q,\tilde Q)$] the minimum expected cost of a \emph{transition from~$Q$ to~$\tilde Q$}, i.e., of a rule that says, for every current cache state, with which probability to move to each cache state, such that a cache with distribution~$Q$ ends up with distribution~$\tilde Q$.
          Fetching and evicting page~$i$ each cost~$\frac{w_i}2$ (the convention of \cite{Bansal07PrimalDual}).
    \item[$\mathrm{cost}_f(y,\tilde y)$]~$\sum_i\frac{w_i}2|y_i-\tilde y_i|$, the cost of moving fractional mass from~$y$ to~$\tilde y$ in the same convention.
    \item[$T$] the update rule: it picks the next distribution~$T(Q,\tilde y)$ from the current one and the new marginals.
\end{description}

In the notation of this paper, \Cref{thm:bbn} gives the following.
\begin{corollary}[online rounding]\label{cor:rounding}
    Let the weights be powers of two, and let~$y^{(0)},\dots,y^{(N)}\in[0,1]^n$ with~$\sum_iy^{(\nu)}_i=k$ be given online, where~$y^{(0)}$ is the indicator vector of~$\OPT_0$.
    Then one can choose random sets~$\mathcal C^{(0)},\dots,\mathcal C^{(N)}$ of~$k$ pages online, i.e., each~$\mathcal C^{(\nu)}$ depends only on~$y^{(0)},\dots,y^{(\nu)}$ and random choices made up to then.
    They satisfy~$\mathcal C^{(0)}=\OPT_0$,~$\Pr[i\in\mathcal C^{(\nu)}]=y^{(\nu)}_i$ for all~$\nu$ and all pages~$i$, and
    \[
        \mathbb E\Big[\sum_{i\in\mathcal C^{(\nu+1)}\setminus\mathcal C^{(\nu)}}w_i\Big]\le5\sum_iw_i\big|y^{(\nu+1)}_i-y^{(\nu)}_i\big|\qquad\text{for all }\nu<N.
    \]
\end{corollary}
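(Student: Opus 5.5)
The plan is to derive \Cref{cor:rounding} from \Cref{thm:bbn} by induction over the sequence $y^{(0)},\dots,y^{(N)}$. We take $\omega_1<\dots<\omega_K$ to be the distinct page weights. Since all weights are powers of two, consecutive distinct weights satisfy $\omega_{i+1}/\omega_i\ge2$, so the hypothesis of \Cref{thm:bbn} holds. Each $y^{(\nu)}$ lies in $[0,1]^n$ with $\sum_i y^{(\nu)}_i=k$, hence $y^{(\nu)}\in\mathcal P$.

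First we fix the initial distribution. Let $Q^{(0)}$ be the point mass on the cache state $\OPT_0$; its marginals are the indicator vector $y^{(0)}$. We must check that $Q^{(0)}\in\mathcal D$. The subclass $\mathcal D$ of \cite{Bansal07PrimalDual} is defined through a balancedness condition on weight classes, and an integral distribution trivially satisfies it: every cache configuration is deterministic, so every class count equals its expectation. This is the only place where we need more than the black-box statement. If this step is doubtful, we would instead use that \cite{Bansal07PrimalDual} start their rounding from an arbitrary integral initial cache.

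Next we proceed inductively. Given $Q^{(\nu)}\in\mathcal D$ with $\Pi(Q^{(\nu)})=y^{(\nu)}$, we set $Q^{(\nu+1)}\coloneqq T(Q^{(\nu)},y^{(\nu+1)})$. By \Cref{thm:bbn}, $Q^{(\nu+1)}\in\mathcal D$, $\Pi(Q^{(\nu+1)})=y^{(\nu+1)}$, and $\mathrm{cost}(Q^{(\nu)},Q^{(\nu+1)})\le5\,\mathrm{cost}_f(y^{(\nu)},y^{(\nu+1)})$. Since $T$ depends only on the current distribution and the new marginals, the distributions are determined online. To realize the random sets, we take an optimal transition kernel from $Q^{(\nu)}$ to $Q^{(\nu+1)}$. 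Given the realized $\mathcal C^{(\nu)}$, distributed according to $Q^{(\nu)}$, we sample $\mathcal C^{(\nu+1)}$ from the kernel. Then $\mathcal C^{(\nu+1)}$ is distributed according to $Q^{(\nu+1)}$, so $\Pr[i\in\mathcal C^{(\nu)}]=y^{(\nu)}_i$ for all $\nu$ and $i$, and $\mathcal C^{(0)}=\OPT_0$. Only the current realization and new random bits are used, so the choice is online.

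For the cost bound, in the convention of \cite{Bansal07PrimalDual} a transition costs $\frac{w_i}2$ per fetch and per eviction of page $i$. Hence
\[
\mathbb E\Big[\sum_{i\in\mathcal C^{(\nu+1)}\setminus\mathcal C^{(\nu)}}w_i\Big]\le 2\,\mathrm{cost}(Q^{(\nu)},Q^{(\nu+1)}),
\]
because the fetch part alone is at most twice the total halved cost. Combining this with the bound from \Cref{thm:bbn} gives
\[
\mathbb E\Big[\sum_{i\in\mathcal C^{(\nu+1)}\setminus\mathcal C^{(\nu)}}w_i\Big]\le 2\cdot 5\sum_i\frac{w_i}2\big|y^{(\nu+1)}_i-y^{(\nu)}_i\big|=5\sum_iw_i\big|y^{(\nu+1)}_i-y^{(\nu)}_i\big|,
\]
which is the claimed inequality. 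The main obstacle is the membership $Q^{(0)}\in\mathcal D$; everything else is bookkeeping of conventions.
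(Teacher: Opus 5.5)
Your proposal is correct and follows the paper's proof essentially step by step. You check the hypotheses of \Cref{thm:bbn}, iterate $T$, realize each $Q^{(\nu+1)}$ by applying a cheapest transition to the current cache, and convert the halved-cost convention via $\mathbb E[\text{fetch}]\le 2\,\mathrm{cost}\le 10\,\mathrm{cost}_f$. The only cosmetic difference is at $Q^{(0)}$. The paper takes from the proof in \cite{Bansal07PrimalDual} that $\mathcal D$ contains some distribution with marginals $y^{(0)}$, and notes that integral marginals force it to be the point mass on $\OPT_0$. You instead verify directly that the point mass satisfies the balancedness condition defining $\mathcal D$; both routes need the same small look inside that proof.
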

\begin{proof}
    The assumptions of \Cref{thm:bbn} hold: as the distinct weights are powers of two,~$\omega_{i+1}/\omega_i\ge2$, and every~$y^{(\nu)}$ lies in~$\mathcal P$.
    By the proof of \cite[Theorem~4.1]{Bansal07PrimalDual},~$\mathcal D$ contains a distribution~$Q^{(0)}$ with~$\Pi(Q^{(0)})=y^{(0)}$.
    As~$y^{(0)}$ is the indicator vector of~$\OPT_0$,~$Q^{(0)}$ gives probability~$1$ to the cache state~$\OPT_0$, so we can let~$\mathcal C^{(0)}\coloneqq\OPT_0$.
    Given~$\mathcal C^{(\nu)}$ with distribution~$Q^{(\nu)}\in\mathcal D$, let
    \[
        Q^{(\nu+1)}\coloneqq T(Q^{(\nu)},y^{(\nu+1)}),
    \]
    and obtain~$\mathcal C^{(\nu+1)}$ by applying a cheapest transition from~$Q^{(\nu)}$ to~$Q^{(\nu+1)}$ to the current cache~$\mathcal C^{(\nu)}$.
    As~$\mathcal C^{(\nu)}$ has distribution~$Q^{(\nu)}$, this gives, at expected cost~$\mathrm{cost}(Q^{(\nu)},Q^{(\nu+1)})$, a cache~$\mathcal C^{(\nu+1)}$ with distribution~$Q^{(\nu+1)}$.
    By \Cref{thm:bbn},~$Q^{(\nu+1)}\in\mathcal D$, so that this can be repeated, and~$\Pr[i\in\mathcal C^{(\nu+1)}]=\Pi(Q^{(\nu+1)})_i=y^{(\nu+1)}_i$.
    As~$Q^{(\nu+1)}$ depends only on~$y^{(0)},\dots,y^{(\nu+1)}$, the sets are chosen online.

    Finally,~$\mathrm{cost}(Q^{(\nu)},Q^{(\nu+1)})$ charges~$\frac{w_i}2$ for every fetched page~$i\in\mathcal C^{(\nu+1)}\setminus\mathcal C^{(\nu)}$, so by \Cref{thm:bbn},
    \[
        \mathbb E\Big[\sum_{i\in\mathcal C^{(\nu+1)}\setminus\mathcal C^{(\nu)}}w_i\Big]
        \le2\,\mathrm{cost}(Q^{(\nu)},Q^{(\nu+1)})
        \le10\,\mathrm{cost}_f(y^{(\nu)},y^{(\nu+1)})
        =5\sum_iw_i\big|y^{(\nu+1)}_i-y^{(\nu)}_i\big|.\qedhere
    \]
\end{proof}

The cached masses~$1-x_i$ of our fractional algorithm do not fit \Cref{cor:rounding} directly: a page with~$x_i=\delta$ is not fully cached, and the total cached mass may be less than~$k$.
For the first issue, we stretch the cached mass by~$\frac1{1-\delta}$, which makes such pages fully cached but may overshoot~$k$ by~$\frac12$, and remove the overshoot by doubling and subtracting~$1$.

\begin{lemma}\label{lem:map}
    Let~$x\in[\delta,1]^n$ with~$\sum_ix_i\ge n-k$.
    Define~$u_i\coloneqq(1-x_i)/(1-\delta)$ and~$z_i\coloneqq\max\{0,2u_i-1\}$.
    Then~$z\in[0,1]^n$,~$z_i=1$ if~$x_i=\delta$,~$\sum_iz_i\le k$, and each~$z_i$ is a nonincreasing,~$3$-Lipschitz function of~$x_i$.
\end{lemma}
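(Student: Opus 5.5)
The plan is to verify the four claims about the map $x_i\mapsto z_i$ directly from the definitions, treating each coordinate separately except for the mass bound.

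\textbf{Range and the value at $\delta$.} Since $x_i\in[\delta,1]$, we get $u_i=(1-x_i)/(1-\delta)\in[0,1]$. Hence $2u_i-1\in[-1,1]$ and $z_i\in[0,1]$. If $x_i=\delta$, then $u_i=1$ and so $z_i=1$.

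\textbf{Monotonicity and Lipschitz constant.} The map $x_i\mapsto 2u_i-1$ is affine with slope $-2/(1-\delta)$, and taking the maximum with $0$ preserves monotonicity. So $z_i$ is nonincreasing in $x_i$ with Lipschitz constant $2/(1-\delta)$. Since $\delta=\frac1{2k+1}\le\frac13$, we have $1-\delta\ge\frac23$, and therefore $2/(1-\delta)\le3$.

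\textbf{The mass bound $\sum_i z_i\le k$.} This is the step needing care, because the truncation $\max\{0,\cdot\}$ prevents simply summing $2u_i-1$ over all pages. Let $P=\{i:u_i>\frac12\}$ be the pages with $z_i>0$. Then
\[
\sum_i z_i=\sum_{i\in P}(2u_i-1)=2\sum_{i\in P}u_i-|P|.
\]
Since $\sum_i(1-x_i)\le k$, we have $\sum_{i\in P}u_i\le\sum_i u_i\le k/(1-\delta)$. Also $u_i\le 1$ for all $i$, so $\sum_{i\in P}u_i\le|P|$. Writing $S=\sum_{i\in P}u_i$, this gives
\[
\sum_i z_i=2S-|P|\le 2S-S=S.
\]
This bound is too weak when $S$ is close to $k/(1-\delta)$, so I would instead combine both constraints on $S$:
\[
2S-|P|\le\min\Big\{|P|,\ \frac{2k}{1-\delta}-|P|\Big\}\le\frac{k}{1-\delta}.
\]
Even this can exceed $k$, so integrality of $|P|$ must be used. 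Since $\frac{2k}{1-\delta}=2k+1$ for $\delta=\frac1{2k+1}$, the two cases are as follows.
\begin{itemize}
\item If $|P|\le k$, then $2S-|P|\le|P|\le k$.
\item If $|P|\ge k+1$, then $2S-|P|\le(2k+1)-(k+1)=k$.
\end{itemize}
This is exactly why $\delta=\frac1{2k+1}$ is the right choice, and checking this case split is the only real obstacle. Everything else is routine one-variable calculus.
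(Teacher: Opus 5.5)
Your proof is correct and follows essentially the same route as the paper. You check the range, the value at $\delta$ and the slope bound $2/(1-\delta)=2+\frac1k\le3$ directly, and for the mass bound you use $\sum_i u_i\le k/(1-\delta)=k+\frac12$ with the same case split on whether at most $k$ or at least $k+1$ pages have $u_i>\frac12$; the exploratory detour through $\min\{|P|,\,2k/(1-\delta)-|P|\}$ is unnecessary but harmless.
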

\begin{proof}
    As~$x_i\in[\delta,1]$, we have~$u_i\in[0,1]$ and hence~$z_i\in[0,1]$.
    If~$x_i=\delta$, then
    \[
        z_i = \max\{0, 2\cdot (1-\delta)/(1-\delta)-1\}=1.
    \]
    As a function of~$x_i$,~$z_i$ is piecewise linear with slopes~$0$ and~$-\frac2{1-\delta}=-2-\frac1k\ge-3$, hence nonincreasing and~$3$-Lipschitz.
    We have~$\sum_iu_i\le\frac{k}{1-\delta}=k \cdot \frac{2k+1}{2k}=k+\frac12$.
    If at most~$k$ pages have~$u_i>\frac12$, then only those pages have~$z_i > 0$ and hence~$\sum_i z_i \leq k$.
    Otherwise,~$\sum_iz_i\le2(k+\frac12)-(k+1)=k$.
\end{proof}

\begin{remark}
    A rounding map as in \Cref{lem:map} can also be defined for values other than $\delta=\frac 1 {2k+1}$.
    To apply it to our problem, it has to map antimass $\leq \delta$ to cached mass $1$, so that the page is cached almost surely.
    It is immediate from the cache constraint of paging that our approach requires $\delta < \frac 1 {k+1}$:
    Otherwise, if $k+1$ nodes have antimass $\frac 1 {k+1}$ and the remaining $n-k-1$ have antimass $1$, the fractional cache constraint $\sum_i x_i\ge n-k$ is met, but the rounded state has to cache $k+1$ pages at once.

    The fractional analysis of \PMD loses a factor of order~$L=\log\frac1\delta$, so it favors a large~$\delta$.
    Hence, the lower bound $\Omega(\log k)$ on the competitive ratio is inherent to our analysis.
\end{remark}

It remains to transform the rounded states into a randomized algorithm and to analyze the expected cost in terms of the fractional fetch and delay cost.
Here, we first transform the fractional sequence to be lazy, so that the conditions of \Cref{cor:rounding} are met: we keep a state~$y\ge z$ of total mass exactly~$k$, and decrease~$y_i$ only when its mass is needed on other pages.

\roundingtheorem*

\begin{proof}[Proof of \Cref{prop:rounding}]
    By \Cref{lem:inv}, every state~$x^{(\nu)}$ of \PMD satisfies~\(
    x^{(\nu)}\in[\delta,1]^n
    \)
    and has total antimass~$\sum_ix^{(\nu)}_i\ge n-k$,
    so we can apply \Cref{lem:map} to it.
    Let~$z^{(\nu)}\coloneqq z(x^{(\nu)})$ be as in \Cref{lem:map}.
    Before we can apply \Cref{cor:rounding}, we first need to transform the states into a lazy strategy so that the total cached mass is exactly~$k$.

    To this end, let~$y^{(0)}\coloneqq z^{(0)}$, and obtain~$y^{(\nu+1)}$ from~$z^{(\nu+1)}$ by increasing coordinates that are smaller than in~$y^{(\nu)}$ until~$\sum_iy_i^{(\nu+1)}=k$ (i.e., mass is only removed from coordinates when the total mass would otherwise exceed~$k$).
    This transformation is online, since~$y^{(\nu+1)}$ depends only on~$y^{(\nu)}$ and~$x^{(\nu+1)}$.
    The vectors~$y^{(\nu)}$ have total mass~$k$ and lie in~$[0,1]^n$.
    We let the initial state of \PMD be~$x^{(0)}$ such that~$x_i=\delta$ for~$i\in\OPT_0$,~$x_i=1-k\delta=(k+1)\delta$ for some page~$i\notin\OPT_0$, and~$x_i=1$ for all other pages.
    Applying \Cref{lem:map} gives that~$y^{(0)}=z^{(0)}$ is the indicator vector of~$\OPT_0$.
    Therefore, \Cref{cor:rounding} can be applied and gives an online randomized integral algorithm that starts with the cache~$\OPT_0$ and whose cache~$\mathcal C^{(\nu)}$ at the state~$x^{(\nu)}$ contains each page~$i$ with probability~$y^{(\nu)}_i$.

    We now bound the expected fetch cost of this algorithm by~$30C_{\fetch}+5\sum_iw_i$.
    By \Cref{cor:rounding}, it is at most~$5\sum_{\nu<N}\sum_iw_i\big|y^{(\nu+1)}_i-y^{(\nu)}_i\big|$.
    As every~$y_i$ starts at most~$1$ and stays nonnegative, its total decrease is at most its total increase plus~$1$, so
    \[
        5\sum_{\nu<N}\sum_iw_i\big|y^{(\nu+1)}_i-y^{(\nu)}_i\big|\le10\sum_{\nu<N}\sum_iw_i\big(y^{(\nu+1)}_i-y^{(\nu)}_i\big)_++5\sum_iw_i.
    \]
    By the definition of~$C_{\fetch}$ (\Cref{def:fracfetch}), it remains to show that~$y_i$ increases at most~$3$ times as much as~$x_i$ decreases.
    As~$y_i$ is raised only up to~$z^{(\nu+1)}_i$ and~$y^{(\nu)}_i\ge z^{(\nu)}_i$, it increases at most as much as~$z_i$.
    As~$z_i$ is a nonincreasing,~$3$-Lipschitz function of~$x_i$ (\Cref{lem:map}),~$z_i$ in turn increases at most~$3$ times as much as~$x_i$ decreases.
    Hence, for all~$\nu<N$ and all pages~$i$,
    \[
        \big(y^{(\nu+1)}_i-y^{(\nu)}_i\big)_+\le\big(z^{(\nu+1)}_i-z^{(\nu)}_i\big)_+\le3\big(x^{(\nu)}_i-x^{(\nu+1)}_i\big)_+.
    \]

    Finally, we bound the expected delay cost of every request~$r$ by~$3\alpha_r d_r(\tau_r-a_r)$.
    Let~$r$ be a request to~$p$,~$x$ be the fractional state right before time~$a_r$, so that~$\alpha_r= x_p-\delta$ (\Cref{def:fracdelay}), and let~$\mathcal C$ be the random cache right before~$a_r$.
    If~$p\in\mathcal C$, then~$r$ is served at time~$a_r$ with no delay cost.
    Otherwise,~$r$ is served no later than at~$\tau_r$, where~$x_p=\delta$, so~$z_p=1$ and~$p$ is cached with probability~$y_p\ge z_p=1$.
    The delay cost of~$r$ then is at most~$d_r(\tau_r-a_r)$.
    Therefore, it suffices to show~$\Pr[p\notin\mathcal C]\le3\alpha_r$.
    By \Cref{lem:map},~$z_p$ is a~$3$-Lipschitz function of~$x_p$ with value~$1$ at~$x_p=\delta$, hence
    \[
        \Pr[p\notin\mathcal C]=1-y_p\overset{y_p\geq z_p}{\le}1-z_p( x)\le3( x_p-\delta)=3\alpha_r.
    \]
    Summed over all requests, the expected delay cost is at most~$3C_{\delayof}$.
\end{proof}

\section{Why simpler rules for when to serve fail}\label{app:obstacles}
The following examples motivate why we use \eqref{eq:trigger} to determine when to perform critical service.
For convenience, we recall that~\(p\) becomes critical at the last moment before
\[
    \big(\delayof_p(t)-c_p-w_p\big)_+ > 0
    \qquad\text{and}\qquad
    \sum_{q\colon w_q\le w_p}\big(\delayof_q(t)-c_q-w_q\big)_+\ge\dur{\lpath{x}{p}}/L.
\]
\begin{example}[Dependence on~\(\dur{\lpath{x}{p}}\)]\label{ex:room}
    This example shows that the rule for performing critical service must not depend only on the weight of the page and its delay.

    Let~$k=1$, and let there be a heavy and a light page~$h$ and~$\ell$ with weights~$W$ and~$1$, respectively.
    The pages are requested in alternating order until some time~\(T\).
    The heavy page~$h$ is requested at every even time~$2t<T$ and its delay~$d(\tau)=M\tau$ grows linearly with a very fast rate~\(M\gg W\).
    Page~$\ell$ is requested at every odd time~$2t+1<T$ and the delay only grows to some~\(\gamma\) chosen below, remains there until time~\(T\) and then grows at the fast rate~\(M\), i.e.,~$d_r(\tau)=\gamma\min\{\tau,1\}$ for~$a_r+\tau\le T$, and~\(d_r(\tau) = \gamma + M(a_r+\tau-T)\) afterwards.

    \OPT caches~$h$ immediately at cost~\(W\), keeps it until time~$T$ (incurring no cost from the requests to~\(h\)) and then serves all requests to~$\ell$, paying at most~$\gamma$ per request to~$\ell$ plus~$1$ to fetch it.
    In total,~\(\OPT = \gamma\cdot\tfrac T2 + \O(W)\).

    Consider any rule that serves~$\ell$ once its pending delay exceeds a threshold~$\kappa$ that depends only on~$w_\ell$.
    Choosing~$\gamma>\kappa$ forces such a rule to fetch~$\ell$ within one time unit after every request to~$\ell$, and the next request to~$h$ forces it to fetch~$h$ back at cost~\(W\) or to pay delay of at least~$M>W$.
    Either way, the cost is~$\Omega(W)$ per round, so the competitive ratio is~$\Omega(W/\gamma)$.
    Thus, any rule depending only on the delay and weight of~\(\ell\) is not sufficient.
\end{example}

\begin{example}[Dependence on lighter pages]\label{ex:shared}
    This example shows that the rule for performing critical service must not depend only on the page to be turned critical.

    Suppose all light pages are fractionally evicted and the cache is filled with heavy pages of weight~$W$, while \OPT caches a light page of weight~\(1\).
    Now several light pages~$\ell_1,\ell_2,\dots,\ell_m$ receive requests with the same delay function~\(d(\tau)=\tau\) at the same time, and \OPT serves them immediately through its light slot, which costs~\(m\) in total.

    A rule that serves each light page~$\ell_i$ once its own pending delay exceeds~$\dur{\lpath{x}{\ell_i}}/L$ lets all~$\ell_i$ accumulate delay in parallel until their delay is~$W$ each, resulting in a cost of~\((W+1)m\).

    \PMD solves this by letting all lighter pages contribute to when a page becomes critical.
\end{example}

\section{The mirror descent path in closed form}\label{app:mdpath}
In \Cref{sec:alg}, we define the path~$\lambda=\lpath{x}{p}$ of a critical service by a closed formula instead of by the differential inclusion \eqref{eq:md}, and our analysis uses only this closed form.
This section shows that~$\lambda$ is indeed the mirror descent path of \cite[Section~2.2]{Bubeck18kServerMultiscale}, traversed at a different speed.
The idea is that \eqref{eq:md}, written componentwise, lets the antimass~$y_i$ of every page~$i\ne p$ with~$y_i<1$ grow at the relative rate~$(\log y_i)'=\mu(t)/w_i$, where the factor~$\mu(t)$ is the same for all these pages.
Measuring progress by~$s=\int\mu(t)\dd t$ instead of by~$t$ removes this factor and gives~$y_i=x_ie^{s/w_i}$, the closed form of~$\lambda$.
We first write \eqref{eq:md} componentwise, then determine~$\mu$, and finally verify that~$\lambda$, traversed at a suitable speed, satisfies \eqref{eq:md}.

Throughout, let~$x\in S\cap[\delta,1]^n$ be the start state, with~$x_p>\delta$, and let~$\lambda=\lpath{x}{p}$.

\paragraph{The differential inclusion for each page}
Recall that the mirror descent path is a path~$y(t)$ in~$S=\{y\in[0,1]^n:\sum_iy_i=n-k\}$ with~$y(0)=x$ that satisfies
\[
    y'(t)\in\nabla^2\Phi(y(t))^{-1}\cdot\big(-\mathbf 1_p-N_S(y(t))\big),
    \tag{\ref{eq:md}}
\]
where~$N_S(y)$ is the normal cone of~$S$ at~$y$.
The mirror map~$\Phi(y)=\sum_iw_iy_i\log y_i$ has the diagonal Hessian~$\nabla^2\Phi(y)$ with entries~$w_i/y_i$.
For~$y\in S$ with~$y>0$, the only constraints of~$S$ that can be tight are~$\sum_iy_i=n-k$ and~$y_i\le1$, so the normal cone is
\[
    N_S(y)=\Big\{\sum_{i:y_i=1}\eta_i\mathbf 1_i-\mu\mathbf 1:\ \mu\in\mathbb R,\ \eta_i\ge0\Big\}.
\]
Here,~$\mathbf 1$ is the vector of all ones and~$\mathbf 1_i$ is the vector with a~$1$ in position~$i$ and~$0$ elsewhere.
Hence, a path~$y(t)\in S$ with~$y(t)>0$ satisfies \eqref{eq:md} at time~$t$ if and only if there are~$\mu(t)\in\mathbb R$ and~$\eta_i(t)\ge0$, with~$\eta_i(t)=0$ unless~$y_i(t)=1$, such that
\begin{equation}\label{eq:md_componentwise}
    y_i'(t)=\frac{y_i(t)}{w_i}\big(\mu(t)-\mathbf 1_p(i)-\eta_i(t)\big)\qquad\text{for all pages }i.
\end{equation}

\paragraph{Where the closed form comes from}
Summing \eqref{eq:md_componentwise} over all pages and using~$\sum_iy_i'(t)=0$, which keeps~$y$ in~$S$, gives
\[
    \mu(t)\sum_i\frac{y_i(t)}{w_i}=\frac{y_p(t)}{w_p}+\sum_{i:y_i(t)=1}\frac{\eta_i(t)}{w_i},
\]
so~$\mu(t)>0$.
Every page~$i\ne p$ with~$y_i(t)<1$ has~$\eta_i(t)=0$ and, by \eqref{eq:md_componentwise}, grows at the relative rate~$(\log y_i)'(t)=\mu(t)/w_i$.
A page~$i\ne p$ with~$y_i(t)=1$ therefore stays at antimass~$1$: it cannot exceed~$1$, and as soon as it dropped below~$1$, it would grow again.
So~$y_i'(t)=0$, i.e.,~$\eta_i(t)=\mu(t)$.
Substituting this into the sum determines the factor as~$\mu(t)=\hat\mu(y(t))$ \cite[(2.8)]{Bubeck18kServerMultiscale}, where for a state~$z>0$ we let
\[
    A(z)\coloneqq\{i\ne p:z_i<1\}\cup\{p\}
    \qquad\text{and}\qquad
    \hat\mu(z)\coloneqq\frac{z_p/w_p}{\sum_{i\in A(z)}z_i/w_i}.
\]
With~$s(t)\coloneqq\int_0^t\mu(\tau)\dd\tau$, integrating~$(\log y_i)'=\mu/w_i$ gives~$y_i(t)=x_ie^{s(t)/w_i}$ until~$y_i$ reaches~$1$, i.e.,~$y_i(t)=\lambda(s(t))_i$.

\paragraph{$\lambda$ satisfies the differential inclusion}
For~$s\in[0,\dur{\lambda}]$, we have~$\lambda(s)\ge\delta>0$, since~$\lambda(s)_i\ge x_i\ge\delta$ for~$i\ne p$ and~$\lambda(s)_p\ge\delta$ by the choice of~$\dur{\lambda}$.
Hence,~$\sigma\mapsto\hat\mu(\lambda(\sigma))$ is positive on~$[0,\dur{\lambda}]$, and it is piecewise continuous, with jumps only where a page reaches antimass~$1$.
Let~$s(t)$ be the inverse of the increasing function~$s\mapsto\int_0^s\frac{\dd\sigma}{\hat\mu(\lambda(\sigma))}$, so that~$s'(t)=\hat\mu(\lambda(s(t)))$, and let~$y(t)\coloneqq\lambda(s(t))$.
We show \eqref{eq:md_componentwise} with~$\mu(t)\coloneqq\hat\mu(y(t))>0$, and~$\eta_i(t)\coloneqq\mu(t)$ for the pages~$i\ne p$ with~$y_i(t)=1$, and~$\eta_i(t)\coloneqq0$ for all other pages.
Let~$A=A(y(t))$ be the set of pages whose mass is still changing, and let~$t$ be a time at which no page reaches antimass~$1$.
\begin{itemize}
    \item For~$i\in A\setminus\{p\}$, we have~$y_i=x_ie^{s/w_i}$, so~$y_i'(t)=s'(t)\,y_i(t)/w_i=\mu(t)\,y_i(t)/w_i$.
    \item For~$i\notin A$, i.e.~$y_i(t)=1$, the page stays at antimass~$1$, so~$y_i'(t)=0=\frac{y_i(t)}{w_i}\big(\mu(t)-\eta_i(t)\big)$.
    \item For~$p$, we have~$y_p=n-k-\sum_{i\ne p}y_i$, so by the two cases above and the definition of~$\hat\mu$,
          \[
              y_p'(t)
              =-\mu(t)\sum_{i\in A\setminus\{p\}}\frac{y_i(t)}{w_i}
              =-\hat \mu(y(t))\sum_{i\in A}\frac{y_i(t)}{w_i}+\mu(t)\frac{y_p(t)}{w_p}
              =-\frac{y_p(t)}{w_p}+\mu(t)\frac{y_p(t)}{w_p}
              =\frac{y_p(t)}{w_p}\big(\mu(t)-1\big).
          \]
\end{itemize}
So~$y$ satisfies \eqref{eq:md_componentwise}, and hence \eqref{eq:md}, at almost all times.
The mirror descent path~$y$ passes through the same states as~$\lambda$, in the same order, and reaches~$y_p=\delta$ exactly when~$s(t)=\dur{\lambda}$.

\end{document}